\newcommand{\by}[1]{\text{/$\mspace{-2mu}$/~#1}}		       
\providecommand{\catname}{\mathbf} 
\providecommand{\clsname}{\mathcal}
\providecommand{\oname}[1]{{\operatorname{\mathsf{#1}}}}
\def\defcatname#1{\expandafter\def\csname B#1\endcsname{\catname{#1}}}
\def\defcatnames#1{\ifx#1\defcatnames\else\defcatname#1\expandafter\defcatnames\fi}
\def\defclsname#1{\expandafter\def\csname C#1\endcsname{\clsname{#1}}}
\def\defclsnames#1{\ifx#1\defclsnames\else\defclsname#1\expandafter\defclsnames\fi}
\def\defbbname#1{\expandafter\def\csname BB#1\endcsname{{\bm{\mathsf{#1}}}}}
\def\defbbnames#1{\ifx#1\defbbnames\else\defbbname#1\expandafter\defbbnames\fi}
\def\Set{\catname{Set}}
\providecommand{\argument}{\operatorname{-\!-}}
\DeclareOldFontCommand{\bf}{\normalfont\bfseries}{\mathbf}
\providecommand{\Id}{\operatorname{Id}}
\providecommand{\id}{\mathsf{id}}
\providecommand{\comp}{\mathbin{\circ}}
\providecommand{\xto}[1]{\,\xrightarrow{#1}\,}
\providecommand{\dar}{\kern-1.2pt\operatorname{\downarrow}}	
\providecommand{\uar}{\kern-1.2pt\operatorname{\uparrow}}	
\providecommand{\fst}{\oname{fst}}
\providecommand{\snd}{\oname{snd}}
\providecommand{\pr}{\oname{pr}}
\providecommand{\brks}[1]{\langle #1\rangle}
\providecommand{\inl}{\oname{inl}}
\providecommand{\inr}{\oname{inr}}
\DeclareSymbolFont{Symbols}{OMS}{cmsy}{m}{n}
\DeclareMathSymbol{\iobj}{\mathord}{Symbols}{"3B}
\providecommand{\curry}{\oname{curry}}
\providecommand{\ev}{\oname{ev}}
\providecommand{\by}[1]{\text{/\!\!/~#1}}			             %
\providecommand{\pacman}[1]{}					                     %
\newcommand{\undefine}[1]{\let #1\relax}					                       %
\providecommand{\mone}{{\text{\kern.5pt\rmfamily-}\mathsf{\kern-.5pt1}}}
\providecommand{\smin}{\smallsetminus}
\def\mfix#1{\oname{#1}\@ifnextchar\bgroup\@mfix{}}	       %
\def\@mfix#1{#1\@ifnextchar\bgroup\mfix{}}			           %
\providecommand{\case}[3]{\mfix{case}{\mathbin{}#1}{of}{#2}{\kern-1pt;}{\mathbin{}#3}}
\DeclareMathSymbol{\mathinvertedexclamationmark}{\mathord}{operators}{'074}
\DeclareMathSymbol{\mathexclamationmark}{\mathord}{operators}{'041}
\newcommand{\raisedmathinvertedexclamationmark}{%
  \mathord{\mathpalette\raised@mathinvertedexclamationmark\relax}%
}
\newcommand{\raised@mathinvertedexclamationmark}[2]{%
  \raisebox{\depth}{$\m@th#1\mathinvertedexclamationmark$}%
}
\newcommand{\HO}{\mathcal{HO}}
\newcommand{\under}[1]{\lvert#1\rvert}
\newcommand{\SKI}{\mathrm{SKI}}
\newcommand{\ap}{{\mathrm{ap}}}
\newcommand{\var}{\mathsf{var}}
\newcommand{\cn}{{\mathrm{cn}}}
\newcommand{\cv}{{\mathrm{cv}}}
\newcommand{\Sigmas}{\Sigma^{\star}}
\newcommand{\ar}{\mathsf{ar}}
\newcommand{\NT}{\mathrm{Nat}}
\newcommand{\epito}{\twoheadrightarrow}
\newcommand{\seq}{\subseteq}
\newcommand{\ol}{\overline}
\newcommand{\beh}{{\mathsf{beh}}}
\providecommand{\C}{}
\providecommand{\D}{}
\renewcommand{\C}{{\mathbb{C}}}
\renewcommand{\D}{{\mathbb{D}}}
\renewcommand{\id}{{\mathsf{id}}}
\renewcommand{\Nat}{\mathds{N}}
\renewcommand{\NT}{\mathsf{Nat}}
\newcommand{\f}{\oname{f}}
\newcommand{\takeout}[1]{\empty}
\newcommand{\ini}{\iota}
\newcommand{\ter}{\tau}
\DeclareMathOperator{\Coalg}{\mathsf{Coalg}}
\DeclareMathOperator{\Alg}{\mathsf{Alg}}
\renewcommand{\rho}{\varrho}
\newcommand{\opp}{\mathsf{op}}
\newcommand{\mypowfin}{\mathscr{P}_\omega}
\newcommand{\pullbackangle}[2][]{\arrow[phantom,to path={
                     -- ($ (\tikztostart)!1cm!#2:([xshift=8cm]\tikztostart) $)
                        node[anchor=west,pos=0.0,rotate=#2,
                        inner xsep = 0]
                        {\begin{tikzpicture}[minimum
                        height=1mm,baseline=0,#1]
    \draw[-] (0,0) -- (.5em,.5em) -- (0,1em);
                        \end{tikzpicture}}}]{}}
\setlist[enumerate,1]{label=(\arabic*),font=\normalfont,align=left,leftmargin=0pt,labelindent=0pt,listparindent=\parindent,labelwidth=0pt,itemindent=!,topsep=3pt,parsep=0pt,itemsep=3pt,start=1}
\setlist[enumerate,2]{label=(\alph*),font=\normalfont,labelindent=*,leftmargin=*,start=1}
\setlist[itemize]{labelindent=*,leftmargin=*}
\setlist[description]{labelindent=*,leftmargin=*,itemindent=-1 em}
\renewcommand{\comp}{\cdot}
\renewcommand{\c}{\colon}
\tikzstyle{shiftarr}=[
\tikzset{
    commutative diagrams/.cd,
    arrow style=tikz,
    diagrams={>=stealth},
    row sep=large,
    column sep = huge
}
\theoremstyle{definition}
\newtheorem{rem}[theorem]{Remark} 
\DeclareMathOperator{\parto}{\rightrightarrows}
\numberwithin{equation}{section}
\newcommand{\xra}[1]{\xrightarrow{~#1~}}
\renewcommand{\xto}{\xra}
\let\xmpsto=\xmapsto
\renewcommand{\xmapsto}[1]{\xmpsto{~#1~}}
\newcommand{\V}{\mathcal{V}}
\newcommand{\Pt}{V}
\newcommand{\iotaq}{\iota_\sim}
\renewcommand{\Nat}{\mathbb{N}}
\newcommand{\fset}{\mathbb{F}}
\newcommand{\vcat}{\set^{\fset}}
\newcommand{\alg}[1]{\mathbf{Alg}(#1)}
\newcommand{\mon}{\bullet}
\newcommand{\gcat}{\mathbb{C}}
\providecommand{\DiNat}{\mathsf{DiNat}}
\newcommand{\mS}{{\mu\Sigma}}
\newcommand{\mSq}{{\mS_{\sim}}}
\newcommand{\BmS}[1]{B(\mS,#1)}
\newcommand{\BmSf}[1]{B(\id,#1)}
\DeclareMathOperator{\iter}{\oname{it}}
\DeclareMathOperator{\coiter}{\oname{coit}}
\newcommand{\app}{\,}
\newcommand{\skitermu}{\Lambda_{u}}
\newcommand{\finalc}{Z}
\theoremstyle{definition}
\newtheorem{notation}[theorem]{Notation}
\newtheorem{assumptions}[theorem]{Assumptions}
\begin{document}

\title[Towards a Higher-Order Mathematical Operational Semantics]{Towards a Higher-Order Mathematical Operational Semantics}         


\author{Sergey Goncharov}
\orcid{nnnn-nnnn-nnnn-nnnn}             
\affiliation{
  \institution{Friedrich-Alexander-Universität Erlangen-Nürnberg}            
  \country{Germany}                    
}
\email{sergey.goncharov@fau.de}          

\author{Stefan Milius}\authornote{Funded by the Deutsche Forschungsgemeinschaft (DFG, German
  Research Foundation) -- project number 470467389}
\orcid{0000-0002-2021-1644}             
\affiliation{
  \institution{Friedrich-Alexander-Universität Erlangen-Nürnberg}            
  \country{Germany}                    
}
\email{stefan.milius@fau.de}          

\author{Lutz Schröder}
\authornote{Funded by the Deutsche Forschungsgemeinschaft (DFG, German
  Research Foundation) -- project number 419850228}  
\orcid{nnnn-nnnn-nnnn-nnnn}             
\affiliation{
  \institution{Friedrich-Alexander-Universität Erlangen-Nürnberg}            
  \country{Germany}                    
}
\email{lutz.schroeder@fau.de}          

\author{Stelios Tsampas}
\authornote{Funded by the Deutsche Forschungsgemeinschaft (DFG, German
  Research Foundation) -- project number 419850228}  
\orcid{nnnn-nnnn-nnnn-nnnn}             
\affiliation{
  \institution{Friedrich-Alexander-Universität Erlangen-Nürnberg}            
  \country{Germany}                    
}
\email{stelios.tsampas@fau.de}          

\author{Henning Urbat}
\authornote{Funded by the Deutsche Forschungsgemeinschaft (DFG, German
  Research Foundation) -- project number 470467389}  
\orcid{nnnn-nnnn-nnnn-nnnn}             
\affiliation{
  \institution{Friedrich-Alexander-Universität Erlangen-Nürnberg}            
  \country{Germany}                    
}
\email{henning.urbat@fau.de}          

\begin{abstract}
Compositionality proofs in higher-order languages are notoriously involved, and general semantic frameworks guaranteeing compositionality are hard to come by. In particular, Turi and Plotkin's bialgebraic abstract GSOS framework, which has been successfully applied to obtain off-the-shelf compositionality results for first-order languages, so far does not apply to higher-order languages. In the present work,  we develop a theory of abstract GSOS specifications for higher-order languages, in effect transferring the core principles of Turi and Plotkin's framework to a higher-order setting. In our theory, the operational semantics of higher-order languages is represented by certain dinatural transformations that we term \emph{pointed higher-order GSOS laws}. We give a general compositionality result that applies to all systems specified in this way and discuss how compositionality of the SKI calculus and the $\lambda$-calculus w.r.t.\ a strong variant of Abramsky's applicative bisimilarity are obtained as instances.
\end{abstract}

\begin{CCSXML}
  <ccs2012>
  <concept>
  <concept_id>10003752.10010124.10010131.10010137</concept_id>
  <concept_desc>Theory of computation~Categorical semantics</concept_desc>
  <concept_significance>500</concept_significance>
  </concept>
  <concept>
  <concept_id>10003752.10010124.10010131.10010134</concept_id>
  <concept_desc>Theory of computation~Operational semantics</concept_desc>
  <concept_significance>500</concept_significance>
  </concept>
  </ccs2012>
\end{CCSXML}

\ccsdesc[500]{Theory of computation~Categorical semantics}
\ccsdesc[500]{Theory of computation~Operational semantics}

\keywords{Abstract GSOS, Categorical semantics, Higher-order reasoning}

\maketitle

\section{Introduction}
\label{sec:intro}

Turi and Plotkin's framework  of \emph{mathematical operational
  semantics}~\cite{DBLP:conf/lics/TuriP97} elucidates the operational semantics of programming languages, and guarantees 
compositionality of programming language semantics in all cases that it covers.
In this framework, operational semantics are presented as distributive laws,
varying in complexity, of a monad over a comonad in a suitable category. An
important example is that of \emph{GSOS laws}, i.e. natural transformations of type
\begin{equation}\label{eq:rho}
  \rho_X \colon \Sigma (X \product BX) \to B\Sigmas X, 
\end{equation}
with functors $\Sigma, B \c \gcat \to \gcat$  respectively specifying the
\emph{syntax} and \emph{behaviour} of the system at hand. The idea is that a GSOS law  represents a set of inductive transition rules that specify how programs are run. For example, the choice of $\gcat = \Set$
and $B = (\mypowfin)^{L}$, where $\mypowfin$ is the finite powerset
functor and $L$ a set of transition labels, leads to the
well-known GSOS rule format for specifying labelled transition
systems~\cite{DBLP:journals/jacm/BloomIM95}. In fact, Turi and Plotkin
show that GSOS laws of a (polynomial) endofunctor
$\Sigma \c \Set \to \Set$ over $(\mypowfin)^{L}$ correspond precisely
to GSOS specifications with term constructors given by $\Sigma$ and
terms emitting labels from $L$. For that reason, Turi and Plotkin's framework is
often referred to simply as \emph{abstract GSOS}.

The semantic interpretation of GSOS laws is conveniently presented in a bialgebraic setting (cf.\ \Cref{sec:abstract-gsos}): Every GSOS law $\rho$  \eqref{eq:rho} canonically induces a bialgebra
\[ \Sigma(\mS)\xto{\ini} \mS \xto{\gamma} B(\mS) \]
on the object $\mS$ of programs freely generated by the syntax functor $\Sigma$,
where the algebra structure~$\ini$ inductively constructs programs and the
coalgebra structure $\gamma$ describes their one-step behaviour according to the
given law $\rho$. The above bialgebra is thus the \emph{operational model} of
$\rho$.
Dually, its \emph{denotational model} is a bialgebra
\[ \Sigma(\nu B)\xto{\alpha} \nu B \xto{\tau} B(\nu B),\]
which extends the final coalgebra $\nu B$ of the behaviour functor $B$ (to be
thought of as the domain of abstract program behaviours). Both the operational
and the denotational model can be characterized by universal properties, namely
as the \emph{initial $\rho$-bialgebra} and the \emph{final $\rho$-bialgebra}.
This immediately entails a key feature of abstract GSOS: The semantics is
automatically \emph{compositional}, in that
behavioural equivalence (e.g.\ bisimilarity) of programs is a congruence with respect to the operations of the
language. The bialgebraic framework has been used widely to establish
further correspondences and obtain compositionality
results~\cite{56f40c248cb44359beb3c28c3263838e,
  DBLP:conf/fossacs/KlinS08, DBLP:conf/lics/FioreS06,
  DBLP:journals/tcs/MiculanP16, DBLP:conf/fscd/0001MS0U22}.

As a first step towards extending their framework to languages with
\emph{variable binding}, such as the
$\pi$-calculus~\cite{DBLP:journals/iandc/MilnerPW92a} and the
$\lambda$-calculus, Fiore, Plotkin and
Turi~\shortcite{DBLP:conf/lics/FiorePT99} use the theory of
\emph{presheaves} to establish an abstract categorical foundation of
syntax with variable binding, and develop a theory of capture-avoiding
substitution in this abstract setting. Based on these foundations,
the semantics of \emph{first-order} languages with variable binding,
more precisely that of the $\pi$-calculus and value-passing
CCS~\cite{DBLP:books/daglib/0067019}, is formulated in terms of
GSOS laws on categories of presheaves~\cite{DBLP:conf/lics/FioreT01}.

However, the question of the mathematical operational semantics of the
$\lambda$-calculus, or generally that of higher-order languages, still
remains a well-known issue in the literature (see e.g.~the
introductory paragraph by
\citet{DBLP:journals/lmcs/HirschowitzL22}). Indeed, in order to give
the semantics of a higher-order language in terms of some sort of a
distributive law of a syntax functor over some choice of a behaviour
functor, one needs to overcome a number of fundamental problems. For
instance, for a generic set $X$ of programs, the most obvious set of
``higher-order behaviours over $X$'' would be $X^{X}$, the set of
functions $f \c X \to X$ that expect an input program in~$X$ and
produce a new program. Of course, the assignment $X \mapsto X^{X}$ is
not functorial in $X$ but bifunctorial; more precisely, the bifunctor
$B(X,Y) = Y^{X} \c \Set^{\opp} \product \Set \to \Set$ is of mixed
variance.  Working with mixed-variance bifunctors as a basis for
higher-order behaviour makes the situation substantially more complex
in comparison to Turi and Plotkin's original setting. In particular,
natural transformations alone will no longer suffice as the technical
basis of a framework involving mixed-variance functors, and it is not
a priori clear what the right notion of coalgebra for a mixed-variance
functor should be. In this paper, we address these issues, with a view
to obtaining a general congruence result.


\paragraph{Contributions} We develop a theory of abstract GSOS for
higher-order languages, essentially extending Turi and Plotkin's original
framework. We model higher-order behaviours abstractly in terms of
syntax endofunctors of the form $\Sigma=V+\Sigma'\c \C\to \C$, for a choice of an object $V \in \gcat$ to be thought of as an object of variables, and behaviour
bifunctors $B \c \gcat^{\opp} \product \gcat \to \gcat$. The key concept introduced in our paper is that of a \emph{$\Pt$-pointed higher-order GSOS law}: a family of morphisms
\[
  \rho_{X,Y} \c \Sigma(jX \product B(jX,Y)) \to B(jX,\Sigmas(jX + Y)),
\]
\emph{dinatural} in $X \in \Pt/\gcat$ and \emph{natural} in
$Y \in \gcat$, with $j \c \Pt/\gcat \to \gcat$ denoting
the forgetful functor from the coslice category $\Pt/\gcat$ to $\gcat$.
We show how each $\Pt$-pointed higher-order GSOS law inductively
determines an operational semantics given by a morphism
\[\gamma \c \mS \to B(\mS,\mS)\] where
$\mS$ is initial algebra for the syntax endofunctor $\Sigma \c \gcat \to \gcat$. In fact, much in analogy to the
first-order case, we introduce a notion of \emph{higher-order bialgebra} for
the given law~$\rho$, and show that~$\gamma$ extends to the initial such bialgebra.

From a coalgebraic standpoint, the operational semantics
$\gamma \c \mS \to B(\mS,\mS)$ is a coalgebra for the restricted
\emph{endofunctor} $B(\mS, \argument) \c \gcat \to \gcat$. Our
semantic domain of choice is the final $B(\mS, \argument)$-coalgebra
$(Z,\zeta)$, the object of behaviours determined by the functor
$B(\mS, \argument)$. We obtain a morphism
$\coiter \gamma \c \mS \to Z$ by coinductively extending $\gamma$;
that is, we take the unique coalgebra morphism into the final coalgebra:
\begin{equation*}
  \begin{tikzcd}
    \mS
    \arrow[r, "\gamma"]
    \ar[dashed]{d}[swap]{\coiter \gamma}
    &
    B(\mS, \mS)
    \arrow[d, "B(\mS{,} \coiter \gamma)"]
    \\
    \finalc
    \arrow[r, "\zeta"]
    & B(\mS, \finalc)
  \end{tikzcd}
\end{equation*}
Importantly, and in sharp contrast to first-order abstract GSOS, the final coalgebra
$(Z,\zeta)$ generally does \emph{not} extend to a final higher-order bialgebra; in fact, a final bialgebra usually fails to exist (see \Cref{ex:final-bialgebra}). As a consequence, it turns out that proving compositionality in our higher-order setting is more
challenging, requiring entirely different techniques and additional assumptions on the base category $\C$ and the functors $\Sigma$ and $B$.

Specifically, we investigate higher-order GSOS laws in a \emph{regular} category $\C$. As our main compositionality result, we show that the kernel pair of
$\coiter \gamma \c \mS \to \finalc$ (which under mild
conditions is equivalent to bisimilarity) is a congruence. We
demonstrate the expressiveness of higher-order GSOS laws by modelling
two important examples of higher-order systems. We draw our first
example, the \emph{SKI combinator calculus}~\cite{10.2307/2370619},
from the world of combinatory logic, which we represent using a higher-order GSOS law on the category of sets. For our second
example we move on to a category of presheaves, on which we model the
call-by-name and the call-by-value $\lambda$-calculus.  In all of these
examples, we show how the induced semantics corresponds to
\emph{strong} variants of \emph{applicative
  bisimilarity}~\cite{Abramsky:lazylambda}. We note that in the case of the
call-by-value $\lambda$-calculus, our coalgebraic version of applicative
bisimilarity is nonstandard as it allows application of functions to arbitrary
closed terms, not just values.

While our framework lays the foundations towards a higher-order
mathematical operational semantics in the style of abstract GSOS, let us
mention two of its current limitations and important directions for
future work. First, our compositionality result is about a rather
fine-grained notion of program equivalence, viz.\ coalgebraic
behavioural equivalence, and our ultimate goal is to reason about
\emph{weak} coalgebraic bisimilarity, e.g.\ standard applicative
bisimilarity for $\lambda$-calculi. Second, we presently do not
incorporate effects such as states to our setting. See
\Cref{sec:concl} for more details.

\paragraph{Organization}
In~\Cref{sec:prelim} we provide a brief introduction to the core
categorical concepts that are used throughout this paper. Moving on,
in~\Cref{sec:sets} we discuss examples from combinatory logic and
present a basic rule format for higher-order languages that
illustrates the principles behind our approach.  \Cref{sec:hogsos} is
where we define our notion of pointed higher-order GSOS law and
prove our main compositionality result
(\Cref{th:main}). In~\Cref{sec:lam} we implement the call-by-name and
call-by-value $\lambda$-calculus in our abstract framework.  We
conclude the paper with a summary of the contributions and a brief
discussion on potential avenues for future work in \Cref{sec:concl}.

\paragraph{Related work}

Formal reasoning on higher-order languages is a long-standing research
topic
(e.g.~\cite{3720,Abramsky:lazylambda,DBLP:journals/iandc/AbramskyO93}).
The series of workshops on \emph{Higher Order Operational Techniques
  in Semantics}~\cite{10.5555/309656} played an important role in
establishing the so called \emph{operational methods} for higher-order
reasoning. The two most important such methods are \emph{logical
  relations}~\cite{tait1967intensional, DBLP:journals/iandc/Statman85,
  DBLP:journals/iandc/OHearnR95, DBLP:journals/corr/abs-1103-0510} and
\emph{Howe's method}~\cite{DBLP:conf/lics/Howe89,
  DBLP:journals/iandc/Howe96}, both of which remain in use to date.
Other significant contributions towards reasoning on higher-order
languages were made by
\mbox{\citet{DBLP:journals/iandc/Sangiorgi94, DBLP:journals/iandc/Sangiorgi96}} and
\citet{DBLP:conf/lics/Lassen05}. While GSOS-style
frameworks ensure compositionality for free by mere adherence to given
rule formats, both logical relations and Howe's method instead have
the character of robust but inherently complex methods whose
instantiation requires considerable effort.

Recently, notable progress has been made towards generalizing Howe's
method
\cite{DBLP:journals/lmcs/HirschowitzL22,DBLP:conf/lics/BorthelleHL20},
based on previous work on \emph{familial monads} and operational
semantics~\cite{DBLP:journals/pacmpl/Hirschowitz19}. According to the
authors, their approach departs from Turi and Plotkin's bialgebraic
framework exactly because the bialgebraic framework did not cover
higher-order languages at the time. \citet{DBLP:conf/lics/LagoGL17} give a general account of
congruence proofs, and specifically Howe's method, for applicative
bisimilarity for $\lambda$-calculi with algebraic effects, based on
the theory of relators. \citet{DBLP:journals/entcs/HermidaRR14} present a foundational
account of logical relations as \emph{structure-preserving} relations
in a reflexive graph category.

Rule formats like the GSOS format~\cite{DBLP:journals/jacm/BloomIM95}
have been very useful for guaranteeing congruence of bisimilarity at a
high level of generality. However, rule formats for higher-order
languages have been scarce.  An important example is that of
Bernstein's \emph{promoted tyft/tyxt} rule
format~\cite{DBLP:conf/lics/Bernstein98,DBLP:journals/entcs/MousaviR07},
which has similarities to our presentation of combinatory logic
in~\Cref{sec:sets}, but it is unclear whether or not the format has
any categorical representation.  The rule format of
\citet{DBLP:journals/iandc/Howe96} was presented in the context of
Howe's method. A variant of Howe's format was recently developed by
\citet{DBLP:journals/lmcs/HirschowitzL22}.

\section{Preliminaries}\label{sec:prelim}

\subsection{Category Theory}\label{sec:categories}
We assume familiarity with basic notions from category theory
such as limits and colimits, functors, natural transformations, and monads. For the convenience of the reader, we review some terminology and notation used in the paper.

\paragraph{Products and coproducts} Given
 objects $X_1, X_2$ in a category~$\C$, we write $X_1\times X_2$ for their product, with 
projections $\fst\colon X_1 \times X_2 \to X_1$ and $\snd\colon X_1\times X_2\to X_2$. For a pair of
morphisms $f_i\colon Y \to X_i$, $i = 1,2$, we let
$\brks{f_1,f_2}\colon Y \to X_1 \times X_2$ denote the unique induced
morphism. Dually, we write $X_1+X_2$ for the coproduct, with injections $\inl\c X_1\to X_1+X_2$ and $\inr\c X_2\to X_1+X_2$, and $[g_1,g_2]\c X_1+X_2\to Y$ for the copairing of morphisms $g_i\colon X_i\to Y$, $i=1,2$.

\paragraph{Dinatural transformations}
Given functors $F,G\colon \C^\opp\times \C\to \D$, a \emph{dinatural transformation} from $F$ to $G$ is a family of morphisms $\sigma_X\colon F(X,X)\to G(X,X)$ ($X\in \C$) such that for every morphism $f\colon X\to Y$ of~$\C$, the hexagon below commutes:
\[
\begin{tikzcd}
  & F(X,X) \ar{r}{\sigma_X} & G(X,X) \ar{dr}{G(X,f)} & \\
  F(Y,X) \ar{ur}{F(f,X)} \ar{dr}[swap]{F(X,f)} & & & G(X,Y) \\
  & F(Y,Y) \ar{r}{\sigma_Y} & G(Y,Y) \ar{ur}[swap]{G(f,Y)} & 
\end{tikzcd}
\]

\paragraph{Regular categories}
A category is \emph{regular} if (1)~it has finite limits, (2)~for
every morphism $f\c A\to B$, the kernel pair $p_1,p_2\colon E\to A$ of
$f$ has a coequalizer, and (3)~regular epimorphisms are stable under
pullback.  In a regular category, every morphism $f\colon A\to B$
admits a factorization $A\xto{e} C\xto{m} B$ into a regular
epimorphism $e$ followed by a monomorphism $m$: Take~$e$ to be the
coequalizer of the kernel pair of $f$, and $m$ the unique factorizing
morphism. Indeed the main purpose of regular categories is to provide
a notion of image factorization of morphisms that relates to kernels
of morphisms in a similar way as in set theory. Examples of regular
categories include the category $\Set$ of sets and functions, every
presheaf category $\Set^\C$, and every category of algebras over a
signature (see below). In all these cases, regular epimorphisms and
monomorphisms are precisely the (componentwise) surjective and
injective morphisms, respectively.


\paragraph{Algebras} Given an endofunctor $F$ on a category $\gcat$,
an \emph{$F$-algebra} is a pair $(A,a)$ of an object~$A$
(the \emph{carrier} of the algebra) and a morphism $a\colon FA\to A$
(its \emph{structure}). A \emph{morphism} from
$(A,a)$ to an $F$-algebra $(B,b)$ is a morphism $h\colon A\to B$
of~$\gcat$ such that $h\comp a = b\comp Fh$. Algebras for $F$ and
their morphisms form a category $\Alg(F)$, and an \emph{initial}
$F$-algebra is simply an initial object in that category.  We denote
the initial $F$-algebra by $\mu F$ if it exists, and its structure by
$\ini\colon F(\mu F) \to \mu F$. Moreover, we write
$\iter a\colon (\mu F,\ini) \to (A,a)$ for the unique morphism
from~$\mu F$ to the algebra~$(A,a)$.

More generally, a \emph{free $F$-algebra} on an object $X$ of $\C$ is an
$F$-algebra $(F^{\star}X,\iota_X)$ together with a morphism
$\eta_X\c X\to F^{\star}X$ of~$\C$ such that for every algebra $(A,a)$
and every morphism $h\colon X\to A$ in $\C$, there exists a unique
$F$-algebra morphism $h^\star\colon (F^{\star}X,\iota_X)\to (A,a)$
such that $h=h^\star\comp \eta_X$. As usual, the universal property
determines $F^\star X$ uniquely up to isomorphism. If free algebras
exist on every object, then their formation induces a monad
$F^{\star}\colon \C\to \C$, the \emph{free monad} on $F$
\cite{Barr70}. For every $F$-algebra $(A,a)$, we obtain an
Eilenberg-Moore algebra $\hat a \colon F^{\star} A \to A$ as the free
extension of the identity morphism $\id_A\c A\to A$.


The most familiar example of functor algebras are algebras over a
signature.  An \emph{algebraic signature} consists of a set $\Sigma$
of operation symbols together with a map $\ar\colon \Sigma\to \Nat$
associating to every operation symbol $\f$ its \emph{arity}
$\ar(\f)$. Symbols of arity $0$ are called \emph{constants}. Every
signature~$\Sigma$ induces the polynomial set functor
$\coprod_{\f\in\Sigma} (\argument)^{\ar(\f)}$, which we
denote by the same letter $\Sigma$. An algebra for the functor
$\Sigma$ then is precisely an algebra for the signature $\Sigma$,
i.e.~a set $A$ equipped with an operation $\f^A\colon A^n \to A$ for
every $n$-ary operation symbol $\f\in \Sigma$. Morphisms between
$\Sigma$-algebras are maps respecting the algebraic structure.

An equivalence relation ${\sim}\seq A\times A$ on a $\Sigma$-algebra
$A$ is called a \emph{congruence} if, for every $n$-ary operation
symbol $\f\in \Sigma$ and all elements $a_i,b_i\in A$
($i=1,\ldots,n$), one has that
\[ a_i \sim b_i \quad(i=1,\ldots,n) \qquad\text{implies}\qquad
  \f^A(a_1,\ldots,a_n)\sim \f^A(b_1,\ldots,b_n). \] Equivalently, there
exists a morphism $h\colon A\to B$ to some $\Sigma$-algebra $B$
such that $\sim$ equals the \emph{kernel} of $h$, that is,
\[ a\sim b \qquad\text{iff}\qquad h(a)=h(b). \]

Given a set $X$ of variables, the free algebra $\Sigmas X$ is the
$\Sigma$-algebra of terms generated by $\Sigma$ with variables from
$X$. In
particular, the free algebra on the empty set is the initial algebra
$\mu \Sigma$; it is formed by all \emph{closed terms} of the
signature.
For every $\Sigma$-algebra $(A,a)$, the induced
Eilenberg-Moore algebra $\hat a\colon \Sigmas A \to A$ is given by the map evaluating terms over $A$ in the algebra.

\paragraph{Coalgebras} A \emph{coalgebra} for an
endofunctor $F$ on $\gcat$ is a pair $(C,c)$ of an object $C$ (the
\emph{carrier}) and a morphism $c\colon C\to FC$ (its
\emph{structure}). A \emph{morphism} from an $F$-coalgebra
$(C,c)$ to an $F$-coalgebra $(D,d)$ is a morphism
$h\colon C\to D$ of $\C$ such that $Fh\comp c = d\comp h$.
Coalgebras for $F$ and their morphisms form a category $\Coalg(F)$, and a
\emph{final} $F$-coalgebra is a final object in that category. If
it exists, we denote the final $F$-coalgebra by $\nu F$ and its
structure by $\ter\colon \nu F \to F(\nu F)$, and we write
$\coiter c\colon (C,c)\to (\nu F, \ter)$ for the unique
morphism from the coalgebra $(C,c)$ to $\nu F$.

\subsection{Abstract GSOS}\label{sec:abstract-gsos}
Suppose that $\Sigma, B\c \C\to \C$ are endofunctors on a category with finite products, where $\Sigma$ has a free monad $\Sigmas$. Given a GSOS law $\rho$ as in \eqref{eq:rho}, a \emph{$\rho$-bialgebra} $(X,a,c)$ consists of an object $X\in \C$, a $\Sigma$-algebra $a\c \Sigma X\to X$ and a $B$-coalgebra $c\c X\to BX$ such that the left-hand diagram below commutes. A \emph{morphism} from $(X,a,c)$ to a $\rho$-bialgebra $(X',a',c')$ is a $\C$-morphism $h\c X\to X'$ that is both a $\Sigma$-algebra morphism and a $B$-coalgebra morphism, i.e.\ the right-hand diagram commutes:
\[
\begin{tikzcd}
\Sigma X \ar{r}{a} \ar{d}[swap]{\Sigma\langle \id,\,c\rangle} & X \ar{r}{c} & BX \\
\Sigma(X\times BX) \ar{rr}{\rho_X} & & B\Sigmas X \ar{u}[swap]{B\hat a}  
\end{tikzcd}
\qquad\qquad
\begin{tikzcd}
\Sigma X \ar{r}{a} \ar{d}[swap]{\Sigma h} & X \ar{d}{h} \ar{r}{c} & BX \ar{d}{Bh} \\
\Sigma X' \ar{r}{a'} & X' \ar{r}{c'} & BX'  
\end{tikzcd}
\]
The universal property of the initial algebra $(\mS,\iota)$ entails that there
exists a unique $B$-coalgebra structure $\gamma\c \mS\to B(\mS)$ such that
$(\mS,\iota,\gamma)$ is a $\rho$-bialgebra. This is the initial
$\rho$-bialgebra, i.e.\ the initial object in the category of bialgebras and
their morphisms. Dually, if $B$ has a final coalgebra $(\nu B,\tau)$, it
uniquely extends to a $\rho$-bialgebra $(\nu B,\alpha,\tau)$, and this is the
final $\rho$-bialgebra. Thus, both by initiality and finality, we obtain a unique bialgebra morphism
\[ \beh_\rho\c (\mS,\iota,\gamma)\to (\nu B,\alpha,\tau); \]
 we think of $\beh_\rho$ as the map assigning to each program in $\mS$ its abstract behaviour. 
 For $\C=\Set$ and $\Sigma$ a polynomial functor, the fact that $\beh_\rho$ is a
 $\Sigma$-algebra morphism immediately implies that behavioural equivalence,
 namely the relation $\equiv$ on $\mS$ given by $p\equiv q$ iff
 $\beh_\rho(p)=\beh_\rho(q)$, is a congruence. In other words, the semantics
 induced by $\rho$ is \emph{compositional}.


\section{Combinatory logic}
\label{sec:sets}
We ease the reader into our theory of higher-order languages by considering the SKI
calculus \cite{10.2307/2370619}, a combinatory logic that is as expressive as
the untyped $\lambda$-calculus but does not feature variables, thus avoiding
complications arising from binding and substitution. Specifically, we
investigate a variant of the SKI calculus, which we call the $\SKI_{u}$ calculus, featuring
auxiliary operators. It is as expressive as the standard presentation but
semantically simpler. 

\subsection{The $\mathbf{SKI}_{u}$ Calculus}
\label{sec:unary-ski}\label{sec:unary}
 The set $\skitermu$ of $\SKI_u$ terms
is generated by the grammar
%
\[
  \skitermu ::= S \mid K \mid I \mid \skitermu \circ \skitermu \mid
  S'(\skitermu) \mid K'(\skitermu) \mid S''(\skitermu,\skitermu).
\]
Intuitively, the binary operation $\argument \circ \argument$ corresponds to function application; we usually write $s\,t$ for $s\circ t$. The standard combinators (constants) $S$, $K$ and $I$ are meant to represent the ternary function $(s,t,u)\mapsto (s\app u)\app (t\app u)$, the first projection $(s,t)\mapsto s$, and the identity map $s\mapsto s$, respectively. The unary operators
$S'$ and $K'$ capture application of~$S$ and~$K$, respectively, to one
 argument: $S'(t)$ behaves like $S\app t$, and $K'(t)$
behaves like $K\app t$.
Finally, the binary operator $S''$ is meant to
capture application of $S$ to two arguments: $S''(s,t)$ behaves like
$(S\app s)\app t$. In this way, the behaviour of each combinator can
be described in terms of \emph{unary} higher-order functions; for example, the
behaviour of $S$ is that of a function taking a term $t$ to $S'(t)$.

The small-step operational semantics of $\SKI_u$ is given by the rules displayed in 
\Cref{fig:skirules}, where $t,p,p',q$ range over terms in $\skitermu$. The operational semantics determines a labelled transition system $\to~
\subseteq \skitermu \product (\skitermu + \{\_\}) \product \skitermu$ by
induction on the structure of terms in $\skitermu$, with $\{\_\}$ denoting the lack
of a transition label. In this instance the set $\skitermu$ of labels coincides
with the state space of the transition system. Note that every $t\in\skitermu$ either admits a single
unlabelled transition $t\to t'$ or a family of labelled transitions
${(t\xto{s} t_s)_{s\in\skitermu}}$; thus, the transition system is deterministic. The intention is that unlabelled transitions
correspond to \emph{reductions} and that labelled transitions represent
\emph{higher-order behaviour}: supply a term $s$ to produce a new term $t_{s}$.
An important fact towards specifying an abstract format in
\Cref{sec:ho} is that labelled transitions are uniformly defined for every
input $s \in \skitermu$, in that operators make no assumptions on the structure
of $s$.

\begin{figure*}[t]
\begin{gather*}
\frac{}{S\xto{t}S'(t)}
\qquad
\frac{}{S'(p)\xto{t}S''(p,t)}
\qquad
\frac{}{S''(p,q)\xto{t}(p\app t)\app (q\app t)}
\\[1ex]
\frac{}{K\xto{t}K'(t)}
\qquad
\frac{}{K'(p)\xto{t}p}
\qquad
\frac{}{I\xto{t}t}
\qquad
\frac{p\to p'}{p \app q\to p' \app q}
\qquad
\inference[\texttt{app1}]{p\xto{q} p'}{p \app q\to p'}
\end{gather*}
\caption{Operational semantics of the $\SKI_u$ calculus.}
\label{fig:skirules}
\end{figure*}

%
%
%

Like every labelled transition system, $\skitermu$ comes with a notion of \emph{(strong) bisimilarity}. Recall that a relation
$R \subseteq \skitermu \product \skitermu$ is a \emph{bisimulation} for $\to$ if and
only if whenever $p \mathbin{R} q$, then
\begin{enumerate}
\item $p\to p' \implies \exists q'.\, q\to q' \wedge p' \mathbin{R} q'$,
\item $q\to q' \implies \exists p'.\, p\to p' \wedge p' \mathbin{R} q'$,
\item $\forall t\in\skitermu,~p\xto{t} p' \implies \exists q'.\, q\xto{t} q' \wedge p'
  \mathbin{R} q'$,
\item $\forall t\in\skitermu,~q\xto{t} q' \implies \exists p'.\, p\xto{t} p' \wedge p'
  \mathbin{R} q'$.
\end{enumerate}
We write $\sim$ for the greatest $\to$-bisimulation, and call two states $p$ and
$q$ \emph{bisimilar} if $p\sim q$.

The relation $\sim$ identifies combinators that `do the same thing', in that they
produce bisimilar terms given the same input.

\begin{example}
  \label{ex:ski}
The terms $(S\app K) \app I$ and $(S \app K) \app K$ transition as follows:
\[
\arraycolsep=1.4pt
 \begin{array}{ccccccccccc}
    (S \app K) \app I & \to & S'(K)\app I&\to & S''(K,I)&\xto{t}& (K \app t) \app (I \app
    t)&\to& K'(t) \app (I \app t)&\to& t,\\
    (S \app K) \app K&\to& S'(K)\app K&\to& S''(K,K)&\xto{t}& (K \app t) \app (K \app
    t)&\to& K'(t) \app (K \app t)&\to& t.
  \end{array}
\]
  It follows that $(S\app K)\app I \sim (S\app K) \app K$.
\end{example}



The set $\Lambda_u$ of $\SKI_u$-terms forms the initial algebra for
the signature $\Sigma=\{S,K,I,S',K',S'',\circ\}$, and the algebraic
structure respects bisimilarity:
\begin{proposition}[Compositionality of $\SKI_u$]
  \label{prop:skicong1}
  The bisimilarity relation $\sim$ is a congruence.
\end{proposition}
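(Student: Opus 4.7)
The plan is to use a Howe-style argument. I would define a candidate relation $\hat\sim$ on $\skitermu$---the \emph{Howe closure} of $\sim$---as the smallest relation satisfying: (i)~$\sim\,\subseteq\,\hat\sim$, and (ii)~whenever $p = f(p_1,\dots,p_n)$ for some operator $f\in\Sigma$ and there exist $p_i'$ with $p_i\mathbin{\hat\sim}p_i'$ and $f(p_1',\dots,p_n')\sim q$, then $p\mathbin{\hat\sim}q$. Standard facts to verify from the definition are that $\hat\sim$ is reflexive, closed under every operator in $\Sigma$, contains $\sim$, and satisfies the substitutivity property that $p\mathbin{\hat\sim}q$ and $q\sim r$ together imply $p\mathbin{\hat\sim}r$. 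Once $\hat\sim$ is shown to be a bisimulation, maximality of $\sim$ forces $\hat\sim\,\subseteq\,\sim$; combined with compatibility this yields that $\sim$ itself is closed under all operators of $\Sigma$, i.e.\ a congruence.

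To show $\hat\sim$ is a bisimulation, I would proceed by case analysis on the operational rules of \Cref{fig:skirules}. The base case $p\sim q$ is immediate. Since the transition system is deterministic and the shape of transitions at a term is determined by its head constructor---combinator-headed terms $S, K, I, S'(t), K'(t), S''(s,t)$ each admit one labelled transition per input, while applications $p\circ q$ admit a single unlabelled transition---the analysis is manageable. For the combinator-headed cases, a labelled transition with input $u$ produces a term template in which $u$ and the subterms appear in a fixed arrangement; matching labelled transitions from any $\hat\sim$-related term yield $\hat\sim$-related results by compatibility of $\hat\sim$. For the congruence rule with premise $p\to p'$ and conclusion $p\circ q\to p'\circ q$, the inductive hypothesis on $p$ suffices.

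The principal obstacle is the rule \texttt{app1}. Given $p_1\circ q_1\mathbin{\hat\sim}p_2\circ q_2$ with $p_1\mathbin{\hat\sim}p_2$ and $q_1\mathbin{\hat\sim}q_2$, together with a transition $p_1\circ q_1\to p_1'$ coming from $p_1\xra{q_1}p_1'$, I must produce a matching transition $p_2\circ q_2\to p_2'$ with $p_1'\mathbin{\hat\sim}p_2'$. Determinism and the rule shapes force $p_1$ and $p_2$ to be combinator-headed, so $p_2\xra{q_2}p_2'$ and consequently $p_2\circ q_2\to p_2'$ via \texttt{app1}. The delicate step is establishing $p_1'\mathbin{\hat\sim}p_2'$: the label $q_1$ (resp.\ $q_2$) is substituted into the result, so $\hat\sim$-relatedness of $q_1$ and $q_2$ must propagate through the template determined by the head. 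Here I would use the substitutivity property of $\hat\sim$ together with a case analysis on the combinator head of $p_1$ (for instance, $p_1=S'(a)$ gives target $S''(a, q_1)$, related to $S''(a', q_2)$ by compatibility together with $a\mathbin{\hat\sim}a'$ and $q_1\mathbin{\hat\sim}q_2$). This label-in-result phenomenon---a higher-order feature visible even in a variable-free calculus---is precisely what makes a naive congruence-closure argument fail and motivates the use of Howe's method, and ultimately the abstract higher-order GSOS machinery developed in the rest of the paper.
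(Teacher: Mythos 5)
Your route---Howe's method---is genuinely different from the paper's proof, which instead shows directly that the single-hole-context closure of $\sim$ is a bisimulation up to transitive closure by structural induction on contexts; your approach has the real advantage that multiple occurrences of the label in a transition target (as in $S''(a,b)\xto{t}(a\app t)\app(b\app t)$) are absorbed by compatibility of the Howe closure, so no up-to-transitive-closure device is needed. However, as written there is a genuine gap at exactly the step you call delicate. From $p_1\mathbin{\hat\sim}p_2$ you argue by case analysis on the head of $p_1$, e.g.\ $p_1=S'(a)$ and (implicitly) $p_2=S'(a')$ with $a\mathbin{\hat\sim}a'$, concluding $S''(a,q_1)\mathbin{\hat\sim}S''(a',q_2)$ by compatibility. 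But $p_1\mathbin{\hat\sim}p_2$ does not say that $p_2$ has the same head constructor with pointwise Howe-related arguments: unfolding your own definition, either $p_1\sim p_2$, or $p_1=\f(\vec a)$, $\vec a\mathbin{\hat\sim}\vec a''$ and $\f(\vec a'')\sim p_2$, and in both cases $p_2$ is an arbitrary term that is merely bisimilar to some same-headed term. The same over-assumption is hidden in your combinator-headed cases, where you claim the matching labelled transition of ``any $\hat\sim$-related term'' yields $\hat\sim$-related results ``by compatibility''; compatibility only compares two instances of the same rule template.

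The repair is the standard Howe decomposition, and it is the actual content of the method here, so it should be spelled out. Prove the lemma: if $s\mathbin{\hat\sim}s'$, $t\mathbin{\hat\sim}t'$ and $s\xto{t}d$, then $s'\xto{t'}d'$ with $d\mathbin{\hat\sim}d'$. In the non-base case $s=\f(\vec a)$, $\vec a\mathbin{\hat\sim}\vec a''$, $\f(\vec a'')\sim s'$: the target $d$ is the rule template for $\f$ instantiated at $(\vec a,t)$, the intermediate term satisfies $\f(\vec a'')\xto{t'}d''$ where $d''$ is the same template at $(\vec a'',t')$, these two are $\hat\sim$-related by compatibility, and then the bisimulation property of $\sim$ at the \emph{same} label $t'$ together with your right-composition property ($p\mathbin{\hat\sim}q$ and $q\sim r$ imply $p\mathbin{\hat\sim}r$) transfers the result from $\f(\vec a'')$ to $s'$. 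In the base case $s\sim s'$ one still has to bridge the two different labels: since $s$ is combinator-headed, its $t$- and $t'$-derivatives are instances of one template and are $\hat\sim$-related by compatibility and reflexivity, after which $\sim$ at label $t'$ and right composition finish. With this lemma the \texttt{app1} case and the combinator cases of your main induction go through. Finally, since $\hat\sim$ is not symmetric, ``$\hat\sim$ is a bisimulation'' needs a word about the backward clauses; here you can extract them from determinism and totality of the transition system (every term has exactly one kind of transition), or follow the usual route of passing to the transitive closure of $\hat\sim$, which is symmetric because $\sim$ is. With these repairs your argument is correct and a genuine alternative to the paper's context-induction proof.
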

\begin{proof}
  In the following, by a \emph{context} we mean a term
  $C\in \Sigmas\{x\}$ in which the variable $x$ appears exactly
  once. We write $C[p]=C[p/x]\in \skitermu$ for the closed term
  obtained by substituting $p\in \skitermu$ for $x$ in $C$. An
  equivalence relation $R\subseteq\skitermu\times\skitermu$ is a
  congruence if and only if the following relation is contained in $R$: 
  \begin{align*}
    \hat R = \{(C[p],C[q])\in\skitermu\times\skitermu\mid \text{$C$ is
      a context and $p \mathbin{R} q$}\}.
  \end{align*}
  Thus, our task is to prove ${\hat\sim}\subseteq{\sim}$. To this end,
  it suffices to prove that $\hat\sim$ is a bisimulation up to
  transitive closure. This means that for every context~$C$ and
  $p,q\in\skitermu$ such that $p \sim q$,
\begin{itemize}
\item either there exist $p',q'\in \Lambda_u$ such that $C[p]\to p'$, $C[q]\to q'$ and $p' \hat\sim^\star q'$,
\item or for every $t\in\skitermu$, there exist $p',q'\in \Lambda_u$ such that $C[p]\xto{t} p'$, $C[q]\xto{t} q'$
    and~$p' \hat\sim^\star q'$,
\end{itemize}
where $\hat\sim^\star$ denotes the transitive closure of
$\hat\sim$. This then implies that $\hat\sim^\star$ is a bisimulation,
and consequently we obtain ${\hat\sim}\seq {\hat\sim^\star}\seq {\sim}$ because $\sim$ is
the greatest bisimulation.

We proceed by structural induction on $C$. The cases where~$C$ is a
combinator term are straightforward. For instance, for
$C = S''(r,C')$, the property in question can be read from the
diagram
\begin{equation*}
\begin{tikzcd}[column sep=normal, row sep=normal]
S''(r,C'[p])
  \rar[phantom,description,"\hat\sim"]
  \dar["t"'] &
S''(r,C'[q])
  \dar["t"]\\
(r \app t)\app (C'[p] \app t)
  \rar[phantom,description,"\hat\sim^\star"] &
(r \app t)\app (C'[q] \app t)
\end{tikzcd}
\end{equation*}
The other combinators are handled analogously. For transitions of
application terms, we distinguish cases as follows.
\begin{itemize}
\item If $C=C' \app r$ and the transition of $C[p]$ comes from an
  unlabelled transition ${C'[p]\to p'}$, then the transition of $C[p]$
  is $C'[p]\app r\to p' \app r$, and by induction, we have~$q'$ such that
  $C'[q]\to q'$ and $p' \mathbin{\hat\sim^\star} q'$; then $C[q]=C'[q]\app r\to
  q' \app r$ and $p'\app r \mathbin{\hat\sim^\star} q' \app r$.
\item If $C=C'\app r$ and the transition of $C[p]$ comes from a labelled
  transition $C'[p]\xto{r} p'$, then the transition of $C[p]$ is
  $C'[p]\app r\to p'$. By induction, we have~$q'$ such that
  $C'[q]\xto{r} q'$ and $p' \mathbin{\hat\sim^\star} q'$, and then $C'[q]\app r\to
  q'$.
\item If $C=r\app C'$ and the transition of $C[p]$ comes from an unlabelled
  transition $r\to r'$, then we have
  $r \app C'[p]\to r'\app C'[p]$, which completes the case since $r'\app
  C'[p] \mathbin{\hat\sim^\star} r'\app C'[q]$.
\item Finally, suppose that $C=r\app C'$ and the transition of $C[p]$
  comes from a labelled transition of~$r$. According to the rules in
  \Cref{fig:skirules}, for every $t\in\skitermu$ we have
  $r\xto{t} r'[t/x]$ for some term $r'$ with one free variable $x$
  such that $r'$ depends only on $r$ but not on $t$. Hence,
  $r\xto{C'[p]} r' [C'[p]/x]$, $r\app C'[p]\to r' [C'[p]/x]$ and
  similarly $r\app C'[q]\to r' [C'[q]/x]$. Since
  $C'[p]\mathbin{\hat\sim} C'[q]$, we conclude that
  $r'[C'[p]/x] \mathbin{\hat\sim^\star} r' [C'[q]/x]$. (Note that the
  variable $x$ can appear multiple times in $r'$, so we generally do
  not have $r'[C'[p]/x] \mathbin{\hat\sim} r' [C'[q]/x]$. This is the
  reason why we need to work with the transitive closure
  ${\hat\sim^\star}$.) \qedhere
\end{itemize}
\end{proof}

The proof of \Cref{prop:skicong1} is laborious, as it requires tedious
case distinctions and a carefully chosen up-to technique, although the latter
could have been avoided by working with multi-hole contexts. It also
only applies to a specific language, one among many systems exhibiting
higher-order behaviour. In the sequel, we describe an abstract,
categorical representation of such higher-order systems that
guarantees the compositionality of the semantics. In particular, we
shall see that \Cref{prop:skicong1} emerges as an instance of a
general compositionality result (\Cref{th:main}).

\subsection{A Simple Higher-order Abstract Format}
\label{sec:ho}
From a coalgebraic perspective~\cite{DBLP:journals/tcs/Rutten00},
finitely branching labelled transition systems over a set $L$ of labels and
state space $X$ correspond to functions of the form
\[
  h \c X \to (\mypowfin X)^{L},
\]
where $\mypowfin \c \Set \to \Set$ is the finite powerset functor. In
other words, they are $(\mypowfin)^{L}$-coalgebras. The
transition system
$\to~ \subseteq \skitermu \product (\skitermu + \{\_\}) \product
\skitermu$ for the $\SKI_u$-calculus is deterministic, and terms in
$\skitermu$ always have either $t$-labelled transitions for all
$t\in\skitermu$, or an unlabelled transition (but not both). Thus,
the LTS $\to$ can be reformulated as a map of the form
\begin{equation}
  \label{eq:skiutr}
  \gamma_{u} \c \skitermu \to \skitermu +
  \skitermu^{\skitermu},
\end{equation}
where the two summands of the codomain represent unlabelled and labelled transitions, respectively. In other words, $\skitermu$ carries the structure of a coalgebra for the set functor $Y\mapsto Y+Y^{\skitermu}$.
We can abstract away from the set~$\skitermu$ of labels  and
consider $\gamma_{u}$ as a system of the form
\begin{equation*}
  Y \to Y + Y^{X}.
\end{equation*}
For higher-order systems such as $\gamma_{u} \c \skitermu \to \skitermu +
\skitermu^{\skitermu}$, we expect $X = Y$, underlining the fact that inputs
come from the state space of the system. We note that the assignment
\begin{equation}
  \label{eq:simplehigher}
  B_{u}(X,Y) = Y + Y^{X} \c \Set^{\opp} \product \Set \to \Set
\end{equation}
gives rise to a \emph{bifunctor} that is contravariant in $X$ and
covariant in $Y$. On the side of syntax, the signature of
$\SKI_u$ yields an endofunctor $\Sigma \c \Set \to \Set$ given by
\begin{equation*}
  \Sigma X = \coprod\nolimits_{\f\in\{S,K,I,S',K',S'',\circ\}} X^{\ar(\f)},
\end{equation*}
where $\ar(\f)$ is the arity of the operator $\f$ in $\SKI_u$.

Recall that GSOS specifications are in bijective correspondence with natural
transformations
\[
  \Sigma(X \product (\mypowfin X)^{L}) \to (\mypowfin(\Sigmas X))^{L},
\]
i.e. \emph{GSOS laws} of the endofunctor $\Sigma \c \Set \to \Set$ over
$(\mypowfin)^{L}$~\cite{DBLP:conf/lics/TuriP97}. As we are dealing with a behaviour bifunctor $B_u(X,Y)=Y +
Y^{X}$ in lieu of an endofunctor, our first abstract argument is that
specifications of simple combinatory calculi such as the $\SKI_u$ calculus
correspond to families of functions
\begin{equation*}
  \rho_{X,Y} \c \Sigma(X \times B_{u}(X,Y))\to B_{u}(X,\Sigmas(X+Y)),
\end{equation*}
that are \emph{dinatural} in $X$ and \emph{natural} in $Y$. To make things more
instructive and clear, we drive this argument by introducing a simple concrete
rule format for higher-order combinatory calculi with unary operators, which we call the
\emph{$\HO$ rule format}. Let us fix an algebraic signature $\Sigma$
or, equivalently, a polynomial set functor $\Sigma$,
and a countably infinite set of metavariables \[\V = \{x\} + \{x_{1},x_2,\dots\}+ \{y_{1},y_2,\dots\} + \{ y_i^z : i\in \{1,2,3,\ldots\} \text{ and } z\in \{x, x_1,x_2,x_3,\ldots\}\}.\]

\begin{definition}
  \label{def:hoformat}
\begin{enumerate}
\item An \emph{$\HO$ rule} for an $n$-ary operation symbol
$\f\in \Sigma$ is an expression of the form
\begin{equation}\label{eq:rule-red}
  \inference{(x_j\to y_j)_{j\in
      W}\qquad(\goesv{x_{i}}{y^{z}_{i}}{z})_{i\in\{1,\ldots,n\}\smin
      W,\,z \in \{x_1,\ldots,x_n\}}}{\f(x_1,\ldots,x_{n})\to
    t}
\end{equation}
or
\begin{equation}\label{eq:rule-nonred}
  \inference{(x_j\to y_j)_{j\in W}\qquad(\goesv{x_{i}}{y^{z}_{i}}{z})_{i\in\{1,\ldots,n\}\smin
      W,\,z \in \{x,x_1,\ldots,x_n\}}}{\goesv{\f(x_1,\ldots,x_{n})}{t}{x}}
\end{equation}
where $W\seq \{1,\ldots, n\}$, and $t\in \Sigmas \V$ is a term depending only on the variables occurring in the premise; that is, in \eqref{eq:rule-red} the term $t$ can contain the variables $x_i$ ($i=1,\ldots,n$), $y_j$ ($j\in W$), and $y_i^{x_j}$ ($i\in \{1,\ldots,n\}\smin W$, $j=1,\ldots,n$), and in \eqref{eq:rule-nonred} it can additionally contain $x$ and $y_i^x$ ($i\in \{1,\ldots,n\}\smin W$).
\item  An \emph{$\HO$ specification} for $\Sigma$ is a set of $\HO$
  rules such that for
  each $n$-ary operation symbol $\f\in \Sigma$ and each subset $W \seq
  \{1,\ldots,n\}$ there is exactly one rule of the form \eqref{eq:rule-red} or \eqref{eq:rule-nonred} in the set.
\end{enumerate}
\end{definition}

Intuitively, for every given rule the subset $W$ determines which of
the subterms of $\f$ perform a reduction and which exhibit
higher-order behaviour, i.e.~behave like functions. For
$i\in \{1,\dots,n\}\setminus W$, the format dictates that said
functions can be applied to a left-side variable $x_{j}$ or the input
label~$x$, and then the output $x_{i}(x_{j}) = y_{i}^{x_j}$ or
$x_i(x)=y_i^x$ can be used in the conclusion term $t$. The uniformity is apparent:
rules cannot make any assumptions on the input label $x$ or on other
left-side variables that are used as arguments on the premises.

\begin{example}\label{ex:ski-to-ho}
  The rules in \Cref{fig:skirules} form an $\HO$ specification up to renaming variables and adding useless premises. For illustration, let us consider the rule \texttt{app1}. First, using the variables $x_1, x_2, x_1^{x_2}$ instead of $p,q,p'$, the rule can be rewritten as
\[ \inference[\texttt{app1}]{x_1\xto{x_2} x_1^{x_2}}{x_1 \app x_2\to x_1^{x_2}}. \]
This is not yet an $\HO$ rule, since the latter require a complete list of premises. However, by filling in the missing premises for $x_2$ in every possible way we can turn \texttt{app1} into the following two $\HO$ rules, corresponding to the cases $W=\emptyset$ and $W=\{2\}$ in \eqref{eq:rule-red}:
  \begin{gather*}
    \inference[\texttt{app1-a}]{\goesv{x_1}{x_1^{x_1}}{x_1} & \goesv{x_1}{x_1^{x_2}}{x_2} &
      \goesv{x_2}{x_2^{x_1}}{x_1} & \goesv{x_2}{x_2^{x_2}}{x_2}} {\goes{x_1 \app x_2}{x_1^{x_2}}} \\
  \inference[\texttt{app1-b}]{\goesv{x_1}{x_1^{x_1}}{x_1} & \goesv{x_1}{x_1^{x_2}}{x_2} &
     \goes{x_2}{y_2}}{\goes{x_1 \app x_2}{x_1^{x_2}}}
  \end{gather*}
\end{example} 
We are now in a position to show that $\HO$
specifications are in bijection  with (di)natural transformations of a certain shape.

\begin{proposition}
  \label{prop:yon1}
  For every algebraic signature $\Sigma$ and $B_{u}(X,Y)=Y+Y^X$ as defined in
  \eqref{eq:simplehigher}, $\HO$ specifications for $\Sigma$ are in 
  bijective correspondence with families of maps
\begin{equation}
  \label{eq:simpleho}
  \rho_{X,Y} \c \Sigma(X \times B_{u}(X,Y))\to
  B_{u}(X,\Sigmas(X+Y))\qquad (X,Y\in \Set)
\end{equation}
dinatural in $X$ and natural in $Y$.
\end{proposition}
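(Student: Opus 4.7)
The plan is to establish the bijection in both directions by a Yoneda-style argument that exploits dinaturality in $X$ and naturality in $Y$ to pin down the syntactic content of $\rho$. A useful preliminary observation is the decomposition
\[ \Sigma(X \times B_u(X,Y)) \cong \coprod_{\f \in \Sigma}\coprod_{W \seq \{1,\ldots,n\}} X^n \times Y^W \times (Y^X)^{\{1,\ldots,n\}\smin W} \]
with $n = \ar(\f)$, together with $B_u(X,\Sigmas(X+Y)) = \Sigmas(X+Y) + \Sigmas(X+Y)^X$. So specifying $\rho_{X,Y}$ amounts to giving, for each pair $(\f,W)$, a map into one of the two summands of the codomain.

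In the forward direction, given an $\HO$ specification, I define $\rho_{X,Y}$ componentwise. On the summand indexed by $(\f,W)$, an input consists of $x_1,\ldots,x_n \in X$, reducts $y_j \in Y$ for $j \in W$, and functions $b_i \c X \to Y$ for $i \in \{1,\ldots,n\}\smin W$. I apply the unique rule for $(\f,W)$, substituting $x_i$, $y_j$, and $y_i^z := b_i(z)$ into the conclusion term, obtaining either an element of $\Sigmas(X+Y)$ (for rules of form \eqref{eq:rule-red}) or a function $X \to \Sigmas(X+Y)$, $x \mapsto t$ with $y_i^x := b_i(x)$ (for \eqref{eq:rule-nonred}). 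Since the recipe is pure substitution, dinaturality in $X$ and naturality in $Y$ follow by routine diagram chasing over each component.

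In the reverse direction, given a (di)natural family $\rho$, I read off a rule for each $(\f,W)$ by applying $\rho$ to a generic input on suitably chosen ``universal'' sets $X_0 = \{x,x_1,\ldots,x_n\}$ and $Y_0 = \{y_j : j \in W\} \cup \{y_i^z : i \notin W,\, z \in X_0\}$, with generic data $b_j = y_j$ for $j \in W$ and $b_i(z) = y_i^z$ for $i \notin W$. The resulting value $v = \rho_{X_0,Y_0}(e) \in \Sigmas(X_0+Y_0) + \Sigmas(X_0+Y_0)^{X_0}$ is taken as the conclusion of the rule: if $v$ lands in the left summand, we get a rule of form \eqref{eq:rule-red}; if in the right summand, the value of $v$ at the input label gives the conclusion term of a rule of form \eqref{eq:rule-nonred}.

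The principal obstacle is showing that the term read off truly belongs to the format of \Cref{def:hoformat}, i.e.\ uses only the sanctioned metavariables. This is where (di)naturality does the decisive work: any stray occurrence of a forbidden element of $X_0 + Y_0$ can be ruled out by precomposing the generic input with a suitable endomorphism of $X_0$ or $Y_0$ that fixes the input data while altering the offending variable, yielding a contradiction with the commuting hexagon or naturality square. In particular, naturality in $Y$ excludes $y$-symbols that do not match the premises, dinaturality in $X$ rules out spurious $X$-elements in the reductive case, and, in the higher-order case, dinaturality applied to endomorphisms of $X_0$ fixing $\{x_1,\ldots,x_n\}$ forces the output function to be determined by substitution from its value at $x$. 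Once this restriction is in place, verifying that the two constructions are mutually inverse is a direct unwinding of definitions: the reverse construction applied to the $\rho$ coming from a specification returns the original rules, and the forward construction applied to the rules extracted from $\rho$ returns $\rho$ itself by (di)naturality from the generic instance.
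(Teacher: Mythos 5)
Your overall route---decompose the domain over pairs $(\f,W)$, read off the rule by evaluating $\rho$ at a single generic instance over $X_0=\{x,x_1,\dots,x_n\}$ and $Y_0=\{y_j\}\cup\{y_i^z\}$, and recover $\rho$ from the extracted rules by transporting the generic value along (di)naturality---is a hand-rolled version of the paper's Yoneda argument and can in principle be completed. The paper instead computes the sets of (di)natural families by a chain of bijections: distributivity reduces to $\DiNat_{X,Y}\bigl(X^{\ar(\f)}\times Y^{W+X\times\ol{W}},B_u(X,\Sigmas(X+Y))\bigr)$, a separate lemma (using the component $\rho_{1,1}$, whose domain is a singleton, together with the terminal maps) shows each such family lands uniformly in one summand of the codomain, and then the Yoneda lemma is applied twice with the representing objects chosen \emph{per case} ($\ar(\f)$ vs.\ $\ar(\f)+1$, and $W+\ar(\f)\times\ol W$ vs.\ $W+(\ar(\f)+1)\times\ol W$), so the allowed-variable constraint of \Cref{def:hoformat} holds automatically.

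The genuine gap is in your justification of exactly that constraint, which you rightly identify as the principal obstacle: in the reductive case you must show the extracted term $v=\rho_{X_0,Y_0}(e)$ contains neither $x$ nor any $y_i^x$, and the mechanism you propose---an endomorphism of $X_0$ or $Y_0$ ``that fixes the input data while altering the offending variable, yielding a contradiction''---does not exist. The generic higher-order premises are the functions $b_i(z)=y_i^z$, which are injective and take the value $y_i^x$ at $x$; hence any $f\c X_0\to X_0$ moving $x$ changes $b_i\comp f$, and any $g\c Y_0\to Y_0$ moving $y_i^x$ changes $g\comp b_i$, so the hexagon or naturality square only relates $\rho$ at two \emph{different} inputs and no contradiction arises. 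In particular, naturality in $Y$ alone can never exclude $y_i^x$, since $y_i^x$ lies in the image of the generic data. The step is true, but must be argued by transport rather than by contradiction: apply the dinaturality hexagon to $f\c X_0\to X_0$ fixing $x_1,\dots,x_n$ and sending $x\mapsto x_1$; its two legs give $v=\Sigmas(f+\id)\bigl(\rho_{X_0,Y_0}(e')\bigr)$, where $e'$ has premises $b_i\comp f$, which already shows $x$ does not occur in $v$. Then $e'$ factors through the inclusion $g\c Y_1\hookrightarrow Y_0$ of $Y_1=\{y_j: j\in W\}\cup\{y_i^{x_j}\}$, and naturality in $Y$ yields $v=\Sigmas(f+g)\bigl(\rho_{X_0,Y_1}(\hat e)\bigr)$, so $y_i^x$ does not occur either; note the order matters, as the $Y$-restriction is only available after the $X$-step. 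With this replacement (and spelling out that the same transport fixes the summand uniformly in $X,Y$ and determines the higher-order output function from its value at $x$, which you indicate correctly), your proof goes through; alternatively, adopting the paper's per-case representing objects avoids the issue altogether.
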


\begin{rem}
  The need for dinaturality  comes from
  the mixed variance of the functor $B_u$, which in turn is caused
  by the fact that variables are used both as states (covariantly) and
  as labels (contravariantly). The role of dinaturality is then the
  same as otherwise played by naturality: It states on an abstract
  level that the rules are parametrically polymorphic, that is, they do not
  inspect the structure of their arguments.

In more technical terms, (di)naturality enables the use of the Yoneda lemma to establish the bijective correspondence of \Cref{prop:yon1}. Explicitly, the bijection maps an $\HO$ specification $\mathcal{R}$ to the family \eqref{eq:simpleho} defined as follows. Given $X,Y\in \Set$ and \[w=\f((u_1,v_1),\ldots, (u_n,v_n))\in \Sigma(X\times B_u(X,Y)),\]
consider the rule in $\mathcal{R}$ matching $\f$ and $W=\{ j\in \{1,\ldots,n\} : v_j\in Y  \}$. If it is of the form \eqref{eq:rule-red}, then 
\[\rho_{X,Y}(w)\in \Sigmas(X+Y)\seq B_u(X,\Sigmas(X+Y)) \]
is the term obtained by taking the term $t$ in the conclusion of \eqref{eq:rule-red} and applying the substitutions
\[ x_i\mapsto u_i~(i\in \{1,\ldots,n\}),\quad y_j\mapsto v_j~(j\in W), \quad y_i^{x_j}\mapsto v_i(u_j)~(i\in \{1,\ldots,n\}\smin W, j\in \{1,\ldots,n\} ).  \]
If the rule is of the form \eqref{eq:rule-nonred}, then 
\[\rho_{X,Y}(w)\in \Sigmas(X+Y)^X\seq B_u(X,\Sigmas(X+Y)) \]
is the map $u\mapsto t_u$, where the term $t_u$ is obtained by taking the term $t$ in the conclusion of \eqref{eq:rule-nonred} and applying the above substitutions along with
\[ x\mapsto u\qquad\text{and}\qquad y_i^{x}\mapsto v_i(u)~(i\in \{1,\ldots,n\}\smin W). \]
\end{rem}

\begin{example}\label{ex:ski-to-rho}
Let $\rho$ be the higher-order GSOS law corresponding to the $\HO$ specification of $\SKI_u$, see \Cref{ex:ski-to-ho}. Given $w = (u_1,v_1)\app (u_2,v_2)\in \Sigma(X\times B_u(X,Y))$ where $v_1\in Y^X$,  one has $\rho_{X,Y}(w)=v_1(u_2)$, according to the rule $\texttt{app1}$.
\end{example}

\subsection{Nondeterministic $\mathbf{SKI_u}$}\label{sec:nd-ski}
Just as the $\lambda$-calculus, combinatory logic can be enriched with other features,
such as nondeterminism, and the theory of applicative bisimulations can be readily
developed for such extensions. For the $\lambda$-calculus this has been pioneered by 
\citet{DBLP:journals/iandc/Sangiorgi94}. For example, consider an extension 
of $\SKI_u$ with a binary operator $\oplus$ representing nondeterministic choice. Its grammar is given by
%
%
\[
  \skitermu^\oplus
  ::=
  S \mid K \mid I \mid
  \skitermu^\oplus \circ \skitermu^\oplus
  \mid
  S'(\skitermu^\oplus)
  \mid
  K'(\skitermu^\oplus)
  \mid
  S''(\skitermu^\oplus,\skitermu^\oplus)
  \mid
  \skitermu^\oplus \oplus \skitermu^\oplus.
\]
On the side of the operational semantics, $\SKI_u^{\oplus}$ has the 
same rules as $\SKI_u$ (see \Cref{fig:skirules}), plus the following ones for resolving nondeterminism:
\begin{align*}
\inference{}{p\oplus q\to p}
&&
\inference{}{p\oplus q\to q}
\end{align*}
This semantics calls for the modification of the behaviour bifunctor $B_u(X,Y)=Y+Y^X$ to
\begin{equation}
  \label{eq:higher-nd}
  B_{u}^{\oplus}(X,Y) = \mypowfin(Y+Y^{X}) \c \Set^{\opp} \product \Set \to \Set,
\end{equation}
where $\mypowfin$ is the finite powerset functor. Sets of nondeterministic transitions rules such as for $\SKI_u^{\oplus}$ then correspond to families of functions
\begin{equation*}
  \rho_{X,Y} \c \Sigma(X \times B_{u}^\oplus(X,Y))\to B_{u}^\oplus(X,\Sigmas(X+Y))
\end{equation*}
that are dinatural in $X$ and natural in $Y$.
In analogy to \Cref{prop:skicong1} we get the following compositionality result:
\begin{proposition}\label{prop:skicong2}
  Bisimilarity is a congruence for the nondeterministic $\SKI_u$ calculus.
\end{proposition}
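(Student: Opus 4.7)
The plan is to adapt the Howe-style argument from the proof of \Cref{prop:skicong1} to the nondeterministic setting. For an equivalence relation $R \seq \skitermu^\oplus\times\skitermu^\oplus$, define the context closure
\[
\hat R = \{(C[p], C[q]) \mid C\in \Sigmas\{x\} \text{ a single-hole context and } p \mathbin{R} q\}
\]
exactly as in the proof of \Cref{prop:skicong1}, now over the extended signature $\Sigma = \{S,K,I,S',K',S'',\circ,\oplus\}$. Showing that $\sim$ is a congruence amounts to proving $\hat\sim \seq {\sim}$, and for this it suffices to verify that $\hat\sim$ is a bisimulation up to transitive closure on the nondeterministic transition system of $\skitermu^\oplus$. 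Then $\hat\sim^\star$ is itself a bisimulation and the inclusion follows because $\sim$ is the greatest one.

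The verification proceeds by structural induction on the context $C$. The hole case $C = x$ is immediate from $p \sim q$. The cases in which the root operator of $C$ is one of $S, K, I, S', K', S''$, or $\circ$ are handled exactly as in the proof of \Cref{prop:skicong1}: their arguments depend only on the operational rules of \Cref{fig:skirules}, all of which are carried over unchanged to $\SKI_u^\oplus$. Crucially, none of the new $\oplus$-rules fire at the root of such a context, so no additional transitions arise and the earlier case analysis applies verbatim (including the use of the transitive closure in the $r \app C'$ subcase, where $r$ may consume its argument multiple times).

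The only genuinely new cases are $C = C' \oplus r$ and $C = r \oplus C'$. In the first, $C[p] = C'[p] \oplus r$ admits exactly the two outgoing transitions $C[p] \to C'[p]$ and $C[p] \to r$. These are matched by $C[q] \to C'[q]$ with $C'[p] \mathbin{\hat\sim} C'[q]$ directly from the definition of $\hat\sim$ (since $C'$ is a context and $p \sim q$), and by $C[q] \to r$ with $r$ trivially $\hat\sim^\star$-related to itself. The transitions of $C[q]$ are handled symmetrically, and the subcase $C = r \oplus C'$ is analogous. No labelled transitions occur at the root of a $\oplus$-context, so this exhausts the new cases.

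The main point requiring care beyond \Cref{prop:skicong1} is ensuring that the bisimulation-up-to-transitive-closure technique remains sound once transitions are nondeterministic. This is a standard fact about arbitrary labelled transition systems and does not depend on determinism, so it poses no real obstacle. As an orthogonal, more conceptual route, once the abstract framework of \Cref{sec:hogsos} is in place, the result should follow directly by instantiating the general congruence theorem (\Cref{th:main}) to the higher-order GSOS law induced by the operational rules of $\SKI_u^\oplus$ over the behaviour bifunctor $B_u^\oplus$ from \eqref{eq:higher-nd}; this would bypass the explicit up-to argument entirely.
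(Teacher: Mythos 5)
Your direct argument is essentially sound, but it is not the route the paper takes: the paper deliberately avoids ``yet another proof by induction on the syntax'' and obtains \Cref{prop:skicong2} purely as an instance of \Cref{th:main}, checking only that $\Set$ is regular, that the polynomial functor $\Sigma$ (now including $\oplus$) preserves reflexive coequalizers, and that $B_u^{\oplus}(X,Y)=\mypowfin(Y+Y^X)$ satisfies the mono-preservation condition because $\mypowfin$ preserves both surjections and injections; the kernel-pair equivalence furnished by \Cref{th:main} is then bisimilarity for this behaviour functor. So the ``orthogonal, more conceptual route'' you mention in your last sentence is in fact the paper's entire proof, and to make it complete you would only need to spell out those categorical side conditions rather than redo any case analysis. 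Your syntactic adaptation of the \Cref{prop:skicong1} argument does work: up-to-transitive-closure is indeed sound for strong bisimilarity on arbitrary (nondeterministic) labelled transition systems, the uniformity of labelled transitions ($r\xto{t}r'[t/x]$ with $r'$ independent of $t$) persists in $\skitermu^{\oplus}$ since $\oplus$ contributes no labelled rules, and your treatment of the two new $\oplus$-contexts is correct (using that $\hat\sim$ is reflexive via the trivial context). The one small inaccuracy is the claim that the earlier cases apply ``verbatim'': the inductive statement in the deterministic proof was phrased as an either/or dichotomy (every term has exactly one unlabelled transition or a full family of labelled ones), which fails under $\oplus$; you must restate the claim as the usual condition that each transition of $C[p]$ is matched by some transition of $C[q]$ and symmetrically, after which the same case distinctions go through. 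The trade-off between the two approaches is exactly the point of the section: your hands-on proof is self-contained but repeats a laborious induction and must re-justify the up-to technique, whereas the paper's instantiation of \Cref{th:main} buys the result with three routine functor-level checks and scales to further extensions of the calculus.
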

Rather than giving yet another proof by induction on the syntax, we will derive
the above proposition from our abstract congruence result in \Cref{th:main}. This
highlights the advantage of the genericity achieved by working in a
category-theoretic framework.

\section{Higher-order abstract GSOS}
\label{sec:hogsos}
We are now ready to present our main contribution, a theory of abstract GSOS for
higher-order systems. \Cref{prop:yon1} suggests that we can
abstract away from specific behaviour bifunctors and consider families
of morphisms 
\begin{align}\label{eq:law-specialcase}
\rho_{X,Y}\c\Sigma(X\times B(X,Y))\to B(X,\Sigma^\star (X+Y)),
\end{align}
dinatural in $X$ and natural in $Y$, as an abstract format that is parametric
in the base category $\gcat$ and two functors $\Sigma \c \gcat
\to \gcat$ and $B\colon \C^\opp\times \C\to \C$ representing syntax and behaviour, respectively.

With the developments in \Cref{sec:lam} in mind, we will
actually work with a slightly more general format where $X$ is
required to be a \emph{pointed object}. More precisely, given a fixed
object $\Pt$ of a category~$\gcat$, let~$\Pt/\gcat$ denote the induced coslice
category. Its objects are the \emph{$\Pt$-pointed objects} of~$\gcat$,
that is, pairs $(X,p_X)$ of an object $X\in\gcat$ and a morphism
$p_X\colon\Pt\to X$ of $\gcat$. A morphism from~$(X,p_X)$ to $(Y,p_Y)$
is a morphism $h\colon X\to Y$ of $\C$ such that $h\comp p_X =
p_Y$. The idea is that the object $\Pt$ represents variables,~$X$~is a
set of program terms in free variables from $\Pt$, and the map
$p_X\c\Pt\to X$ corresponds to the inclusion of variables; see
\Cref{sec:lam} for more details.
\begin{notation}
We denote by $j\c \Pt/\gcat \to \gcat$ the forgetful functor given by
$(X, p_X) \mapsto X$.
\end{notation}
\begin{assumptions}
  We assume that $\C$ has finite limits
  and colimits, and that the functor $\Sigma$ has the form $\Sigma = \Pt + \Sigma'$
  where $\Sigma'$ admits free algebras; hence, in particular, the free monad $\Sigmas$ exists.
\end{assumptions}
%
%
%
\begin{definition}
  \label{def:hog}
  A \emph{$\Pt$-pointed higher-order GSOS law} of $\Sigma$ over
  $B$ is a family of
  morphisms
\begin{align}\label{eq:law}
\rho_{X,Y} \c \Sigma (jX \times B(jX,Y))\to B(jX, \Sigma^\star (jX+Y))
\end{align}
dinatural in $X\in\Pt/\gcat$ and natural in $Y\in \gcat$.
\end{definition}
\noindent Laws of the form \eqref{eq:law-specialcase} emerge from~\eqref{eq:law} 
by choosing $\Pt=0$, the initial object of $\gcat$. When running the semantics, 
both $X$ and $Y$ will be matched with the free algebra~$\mS$
-- abstracting from this choice ensures that the corresponding terms are used in 
a suitably polymorphic, uniform way.


\begin{rem}
  For every $\Sigma $-algebra $(A,a)$, we regard $A$ as a $\Pt$-pointed
  object with point
  \[p_A = \bigl(\Pt\xra{\inl} \Pt+\Sigma' A = \Sigma  A \xra{a} A\bigr).\] Note that
  if $h\colon (A,a)\to (B,b)$ is a morphism of $\Sigma $-algebras,
  then $h$ is also a morphism of the corresponding $\Pt$-pointed
  objects.
\end{rem}

Every object $X \in \gcat$ induces an endofunctor
$B(X,\argument) \c \gcat \to \gcat$. For instance, the transition
system $\gamma_{u}\c\skitermu \to \skitermu + \skitermu^{\skitermu}$
from \eqref{eq:skiutr} is a $B_u(\mS, \argument)$-coalgebra. The state
space~$\skitermu$ is the initial $\Sigma$-algebra for the
corresponding polynomial set functor $\Sigma$; the codomain is
$B_u(\mS,\mS)$. The definition of the map~$\gamma_{u}$ is inductive on
the structure of terms. It turns out to be an instance of a definition
by initiality\sgnote{that is what I would precisely call "by
  induction". shall we explain how syntactic induction corresponds to
  categorical induction in prelims? SM: The text is clear here; I do
  not think we need extra explanations in prelims.}  in which we assign to a
$\Pt$-pointed higher-order GSOS law its canonical operational~model.
For technical reasons, we formulate the abstract definition of
$\gamma_{u}$ yet more generally, by parametrizing it with a
$\Sigma$-algebra $(A,a)$ --- the motivating instance is
obtained by instantiating $A$ with the initial algebra $\mS$.



\begin{lemma}\label{lem:clubs}
Given a $\Pt$-pointed higher-order GSOS law $\rho$ as in \eqref{eq:law}, every $\Sigma $-algebra $(A,a)$ induces a
\emph{unique} morphism $a^\clubsuit\c\mS\to B(A,A)$ in $\C$ such that the following
diagram commutes:
%
%
%
\begin{equation*}
\begin{tikzcd}[column sep=8ex, row sep=normal]
\Sigma(\mS) 
  \dar["\Sigma \brks{\iter a,\,a^\clubsuit}"']
  \ar[rrr,"\iota"] 
  &[-2ex] &[1ex] &[2ex]
\mS 
  \dar["a^\clubsuit"]
  \\
\Sigma (A\times {B(A,A)})
  \rar["{\rho_{A,A}}"] 
&
B(A,\Sigma^\star(A+A))
  \rar["{B(\id,\Sigmas\nabla)}"] &
B(A,\Sigma^\star A)
  \rar["{B(\id, \hat a)}"] &
B(A,A)
\end{tikzcd}
\end{equation*}
Here $\iota \c  \Sigma(\mS)  \to \mS $ is the initial $\Sigma $-algebra,
$\nabla\c A+A\to A$ is the codiagonal, and $\hat a\colon \Sigmas  A \to
A$ is the Eilenberg-Moore algebra
corresponding to $a\colon \Sigma  A\to A$.
\end{lemma}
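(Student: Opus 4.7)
The plan is to apply the universal property of the initial algebra $(\mS, \iota)$ to a carefully chosen $\Sigma$-algebra structure on the product $A \times B(A,A)$, using a standard mutual-recursion trick: although the diagram looks self-referential, the pairing $\langle \iter a, a^\clubsuit\rangle$ turns out to be forced by initiality.

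First, view $(A,a)$ as a $\Pt$-pointed object via $p_A := a \circ \inl \c \Pt \to A$ (using $\Sigma = \Pt + \Sigma'$), so that $\rho_{A,A}$ with $X = (A,p_A) \in \Pt/\C$ and $Y = A \in \C$ is well-defined. Next, define a $\Sigma$-algebra structure
\[
\xi \c \Sigma(A \times B(A,A)) \to A \times B(A,A)
\]
by pairing:
\begin{itemize}
\item first component $a \circ \Sigma\fst \c \Sigma(A \times B(A,A)) \to A$,
\item second component the composite
\[
\Sigma(A\times B(A,A)) \xto{\rho_{A,A}} B(A,\Sigmas(A+A)) \xto{B(\id,\,\Sigmas\nabla)} B(A,\Sigmas A) \xto{B(\id,\,\hat a)} B(A,A).
\]
\end{itemize}
By initiality of $\mS$, there is a unique $\Sigma$-algebra morphism $h\c (\mS,\iota) \to (A \times B(A,A), \xi)$.

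I would then identify the two components of $h$. Reading off the first component of the equation $h \circ \iota = \xi \circ \Sigma h$ shows that $\fst \circ h$ satisfies $\fst \circ h \circ \iota = a \circ \Sigma(\fst \circ h)$, so it is a $\Sigma$-algebra morphism $\mS \to (A,a)$; by uniqueness of $\iter a$, we conclude $\fst \circ h = \iter a$. Defining $a^\clubsuit := \snd \circ h$, we therefore have $h = \langle \iter a, a^\clubsuit\rangle$, and reading off the second component of $h \circ \iota = \xi \circ \Sigma h$ gives precisely the required commutative diagram.

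For uniqueness, suppose $a^\clubsuit$ is any morphism satisfying the diagram. Then $\langle \iter a, a^\clubsuit\rangle \c \mS \to A \times B(A,A)$ is a $\Sigma$-algebra morphism into $(A \times B(A,A), \xi)$: the first-component equation reduces to the defining property of $\iter a$, and the second-component equation is exactly the given commuting diagram. By initiality, this pairing must coincide with $h$, forcing $a^\clubsuit = \snd \circ h$.

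The argument is essentially routine; the only subtle point is verifying that $A$ is correctly pointed via $p_A = a \circ \inl$ so that $\rho_{A,A}$ is defined. Dinaturality of $\rho$ in $X$ is not used in this lemma itself, but the pointed structure is indispensable for the law to apply. The construction is a higher-order analogue of the classical Turi--Plotkin recipe for extracting a coalgebra structure from a GSOS law, the only novelty being the need to pair with $\iter a$ because the bifunctor $B$ takes its state argument covariantly and contravariantly simultaneously.
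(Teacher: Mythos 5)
Your proof is correct and follows essentially the same route as the paper: both define the $\Sigma$-algebra structure $\langle a\comp\Sigma\fst,\; B(\id,\hat a)\comp B(\id,\Sigmas\nabla)\comp\rho_{A,A}\rangle$ on $A\times B(A,A)$, obtain $\langle\iter a, a^\clubsuit\rangle$ as the unique morphism out of the initial algebra (identifying the first component as $\iter a$ by initiality), and derive uniqueness by observing that any $a^\clubsuit$ satisfying the diagram makes $\langle\iter a, a^\clubsuit\rangle$ an algebra morphism. Your additional remark about pointing $A$ via $p_A = a\comp\inl$ matches the paper's preceding remark, so nothing is missing.
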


\begin{rem}\label{rem:a-clubsuit-sigma-morphism}
It follows that $\langle \iter a, a^\clubsuit\rangle$ is a $\Sigma$-algebra morphism from $\mS$ to the $\Sigma$-algebra given by the lower row of the commutative diagram below:
\[
\begin{tikzcd}[column sep=7ex, row sep=normal]
\Sigma (\mS) 
  \dar["\Sigma\brks{\iter a,\,a^\clubsuit}"']
  \ar[rrr,"\iota"] 
  &[1.5ex] &[2ex] &
\mS 
  \dar["\brks{\iter a,\,a^\clubsuit}"]
  \\
\Sigma (A\times {B(A,A)})
  \rar["\brks{a\cdot\Sigma\fst,\,\rho_{A,A}}"] 
&
A\times B(A,\Sigma^\star(A+A))
  \rar["{\id\times B(\id,\Sigmas\nabla)}"] &
A\times B(A,\Sigma^\star A)
  \rar["{\id\times B(\id, \hat a)}"] &
A\times B(A,A)
\end{tikzcd}
\]
\end{rem}

\begin{definition}\label{def:operational-model}
The \emph{operational model} of $\rho$ is given by the $B(\mS,-)$-coalgebra
\[\iota^\clubsuit\c\mu \Sigma \to B(\mS ,\mS ).\]
\end{definition}

\begin{example}\label{ex:a-clubsuit-ski}
Consider the higher-order GSOS law $\rho$ corresponding to $\SKI_u$ (see \Cref{ex:ski-to-ho,ex:ski-to-rho}). Then $\iota^\clubsuit$ is precisely the transition system induced by the rules in \Cref{fig:skirules}. More generally, for a $\Sigma$-algebra $(A,a)$, the morphism $a^\clubsuit$ is obtained by interpreting all those transitions in the algebra $A$. For instance, since there is a transition $K\xto{t}K'(t)$, we have $a^\clubsuit(K)\in A^A$ given by $a^\clubsuit(K)(u)=(K')^A(u)$, where $(K')^A\c A\to A$ is the interpretation of $K'\in \Sigma$ in $A$.
\end{example}

\begin{rem}\label{rem:gsos-to-ho-gsos}
  Every GSOS law $\lambda_Y\c \Sigma(Y\times FY)\to F\Sigmas Y$
  ($Y\in \C$) can be turned into an equivalent~$0$-pointed
  higher-order GSOS law
  $\rho_{X,Y}\c \Sigma(X\times B(X,Y))\to B(X,\Sigmas(X+Y))$
  ($X,Y\in \C$) where
  \[
    B(X,Y)=FY\quad\text{and}\quad \rho_{X,Y} \;=\; \Sigma(X\times FY)  \xto{\Sigma(\inl\times F\inr)} \Sigma((X+Y)\times F(X+Y))  \xto{\lambda_{X+Y}} F\Sigmas(X+Y).
\]
It is not difficult to verify that the operational models of $\lambda$
(see \Cref{sec:abstract-gsos}) and $\rho$ coincide.
\end{rem}

\subsection{Compositionality of Higher-Order Abstract GSOS}
We now investigate when a higher-order GSOS law gives rise to a compositional semantics.
Recall from \Cref{sec:abstract-gsos} that in first-order GSOS, compositionality comes for free and is an immediate consequence of $\mS$ and $\nu B$ extending to initial and final bialgebras, respectively, for a given law. We shall see in \Cref{sec:bialg} that the latter is no longer true in the higher-order setting. Therefore, we take a different route to compositionality, working in a framework of regular categories (\Cref{sec:categories}).

Assuming that the final $B(\mS ,\argument)$-coalgebra
\[
 \zeta \c
  \finalc
  \to
  B(\mS , \finalc)
\]
exists, we think of the unique coalgebra morphism
$\coiter \iota^\clubsuit \c \mS  \to Z$ as the map assigning to each
program in $\mS $ its abstract behaviour. The ensuing notion
of \emph{behavioural equivalence} is then expressed categorically by
the kernel pair of $\coiter\iota^\clubsuit$, i.e.\ the pullback
\begin{equation}
  \label{eq:kernel}
\begin{tikzcd}[column sep=3em, row sep=normal]
  E
  \rar["p_1"]
  \dar["p_2"']
  \ar[dr, phantom , very near start, color=black]
  \pullbackangle{-45}
  &
\mS 
  \dar["\coiter \iota^\clubsuit"]\\
\mS  \rar["\coiter \iota^\clubsuit"] & \finalc
\end{tikzcd}
\end{equation}
We say that this kernel pair forms a \emph{congruence} if the
coequalizer $q\c \mS \epito Q$ of $p_1,p_2$ can be equipped with a
$\Sigma $-algebra structure $a\colon \Sigma  Q \to Q$ such that $q$
is a $\Sigma $-algebra morphism from~$\mS $ to $(Q,a)$, that is,
$q=\iter a$.

\begin{rem}
  For $\C=\Set$ and $\Sigma$ a polynomial functor, the kernel $E$ is
  the equivalence relation on~$\mS $ defined by
  \[
    E = \{\, (s,t)\in \mS \times \mS \c (\coiter\iota^\clubsuit)(s)=(\coiter\iota^\clubsuit)(t)\, \}
  \]
  with the two projection maps $p_1(s,t)=s$ and $p_2(s,t)=t$, and $E$
  forms a congruence in the above categorical sense if and only if it
  forms a congruence in the usual algebraic sense as recalled in \autoref{sec:categories}, i.e.~an equivalence relation
  that is compatible with the $\Sigma $-algebra
  structure of $\mS $.
\end{rem}
\noindent Our main compositionality result asserts that for a regular base category~$\C$ and under mild conditions on the functors $\Sigma$ and $B$, behavioural equivalence is a congruence.
\begin{rem}\label{rem:reflexive}
  Recall that a parallel pair $f,g\c X \parto Y$ is \emph{reflexive}
  if there exists a common splitting, viz.~a morphism $s\c Y \to X$ such that
  $f\comp s = \id_Y = g \comp s$. A \emph{reflexive coequalizer} is a
  coequalizer of a reflexive pair. Preservation of reflexive
  coequalizers is a relatively mild condition for set functors. In
  particular, every polynomial set functor $\Sigma$ and, more generally,
  every finitary set functor preserves reflexive coequalizers~\cite[Cor.~6.30]{AdamekEA11}.
\end{rem}
\begin{theorem}
  \label{th:main}
  Let $\rho$ be a $\Pt$-pointed higher-order GSOS law of
  $\Sigma \c \gcat \to \gcat$ over
  $B\c \gcat^{\opp} \product \gcat \to \gcat$. Suppose that the
  category $\C$ is regular, that $\Sigma $ preserves reflexive
  coequalizers, and that $B$ preserves monomorphisms. Then the kernel
  pair of $\coiter \iota^\clubsuit \c \mS \to \finalc$ is a
  congruence.
\end{theorem}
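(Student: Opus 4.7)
The plan is to reduce congruence to a single coinductive equation about the
operational model and then discharge the latter by uniqueness of the
coinductive extension. The three hypotheses—regularity of $\C$, preservation of
reflexive coequalizers by $\Sigma$, and preservation of monomorphisms by
$B$—each play a distinct role in the argument.

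\emph{Reduction.} By regularity of $\C$, factor the behaviour map as
$\coiter\iota^\clubsuit = m\comp q$, with $q\colon\mS\epito Q$ a regular epi
and $m\colon Q\monoto\finalc$ a mono. Since $m$ is monic, $q$ is the
coequalizer of the kernel pair $(p_1,p_2)$ of $\coiter\iota^\clubsuit$ from
\eqref{eq:kernel}. To witness that $(p_1,p_2)$ is a congruence it suffices to
produce a $\Sigma$-algebra structure $\bar\iota\colon\Sigma Q\to Q$ satisfying
$\bar\iota\comp \Sigma q = q\comp\iota$. The pair $(p_1,p_2)$ is reflexive,
split by the diagonal $\Delta\colon\mS\to E$ induced by $(\id,\id)$; hence, by
the hypothesis on $\Sigma$, the morphism $\Sigma q$ is itself the coequalizer of
$\Sigma p_1,\Sigma p_2$. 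Thus $\bar\iota$ exists iff
$q\comp\iota\comp\Sigma p_1 = q\comp\iota\comp\Sigma p_2$, and, since $m$ is
monic, this is equivalent to the coinductive equation
\begin{equation*}
  \coiter\iota^\clubsuit\comp\iota\comp\Sigma p_1 \;=\; \coiter\iota^\clubsuit\comp\iota\comp\Sigma p_2 \tag{$\ast$}
\end{equation*}
of maps $\Sigma E\to\finalc$.

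\emph{Coinductive core.} To prove $(\ast)$, I would construct a
$B(\mS,\argument)$-coalgebra $\delta\colon\Sigma E\to B(\mS,\Sigma E)$ such that
each map $\iota\comp\Sigma p_i\colon\Sigma E\to\mS$ ($i=1,2$) is a coalgebra
morphism to the operational model $(\mS,\iota^\clubsuit)$, i.e.\
$B(\mS,\iota\comp\Sigma p_i)\comp\delta = \iota^\clubsuit\comp\iota\comp\Sigma
p_i$. Uniqueness of the map into the final coalgebra $(\finalc,\zeta)$ then
yields $\coiter\iota^\clubsuit\comp\iota\comp\Sigma p_i = \coiter\delta$
independently of $i$, which is exactly $(\ast)$. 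The definition of $\delta$
builds on \Cref{lem:clubs} for $(A,a)=(\mS,\iota)$, which expresses
$\iota^\clubsuit\comp\iota$—and hence $\iota^\clubsuit\comp\iota\comp\Sigma
p_i$—in terms of $\rho$ and the one-step behaviour
$\iota^\clubsuit\comp p_i\colon E\to B(\mS,\mS)$. Although these latter two
maps differ for $i=1,2$, the coalgebra equation
$\zeta\comp\coiter\iota^\clubsuit = B(\mS,\coiter\iota^\clubsuit)\comp
\iota^\clubsuit$, together with Lambek's lemma (so that $\zeta$ is an
isomorphism) and monicity of $B(\mS,m)$ (from preservation of monos by $B$),
implies $B(\mS,q)\comp\iota^\clubsuit\comp p_1 = B(\mS,q)\comp\iota^\clubsuit
\comp p_2$, providing a common ``quotient behaviour'' of $E$ in $B(\mS,Q)$.

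\emph{Main obstacle.} The hard part is precisely the assembly of $\delta$ out
of this quotient behaviour and the action of $\rho$ on one-level terms. The
key technical ingredient is dinaturality of $\rho$ in its pointed first
argument—available because $E$ is canonically $\Pt$-pointed via
$\Pt\to\mS\xrightarrow{\Delta}E$—which lets the mixed variance of $B$ be
reconciled across the three carriers $\mS$, $E$, and~$Q$ that appear in the
construction. All three hypotheses are used crucially at this step: regularity
supplies the image factorization, preservation of monos by $B$ lets equalities
descend along $m$, and preservation of reflexive coequalizers by $\Sigma$
permits the lifting of the algebra structure to~$Q$.
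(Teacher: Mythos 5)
Your ``Reduction'' step is fine and matches the scaffolding of the paper's argument: in a regular category the regular-epi part $q$ of $\coiter\iota^\clubsuit$ is the coequalizer of the kernel pair $(p_1,p_2)$, the pair is reflexive, so $\Sigma q$ coequalizes $\Sigma p_1,\Sigma p_2$ and congruence reduces to your equation $(\ast)$ (the paper does the same thing with the free extensions $p_1^\star,p_2^\star$). The genuine gap is the ``coinductive core'': you never construct the coalgebra $\delta\c\Sigma E\to B(\mS,\Sigma E)$, you only assert that dinaturality should make the assembly possible --- and in fact no such $\delta$ exists, even in the motivating $\SKI_u$ instance. Take $w=e_1\app e_2\in\Sigma E$ with $e_1=(S''(K,I),\,S''(K,K))$ (these are bisimilar, hence in $E$) and $e_2=(I,I)$. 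The unique successors of $\iota\comp\Sigma p_1(w)=S''(K,I)\app I$ and $\iota\comp\Sigma p_2(w)=S''(K,K)\app I$ are the unlabelled reducts $(K\app I)\app(I\app I)$ and $(K\app I)\app(K\app I)$. Your two coalgebra-morphism equations force $\delta(w)$ to be a \emph{flat} term $e'\app e''$ with $e',e''\in E$ projecting onto these reducts, hence $e''=(I\app I,\,K\app I)\in E$; but $I\app I\to I$ while $K\app I\to K'(I)$, and $I\not\sim K'(I)$ (they differ on the input $S$), so $e''\notin E$. Thus the object you propose to coinduct with cannot be built.

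Replacing $\Sigma E$ by $\Sigma^\star E$ (asking both $p_1^\star,p_2^\star$ to be coalgebra morphisms) removes this particular obstruction but turns the task into showing that the congruence closure of behavioural equivalence carries a coalgebra structure over the relation --- exactly the hard content that the concrete up-to-transitive-closure proof of \Cref{prop:skicong1} handles by hand, and it is not derivable from the stated hypotheses: $B$ is only assumed to preserve monomorphisms, not (weak) pullbacks, so the kernel pair of $\coiter\iota^\clubsuit$ need not admit projections that are coalgebra morphisms at all, and your ``common quotient behaviour'' in $B(\mS,Q)$ cannot be transported back to $B(\mS,\Sigma^\star E)$, since that would require going against the quotient map in the covariant argument. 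The paper's proof avoids relation lifting entirely: preservation of reflexive coequalizers yields an algebra $\iotaq$ on the coequalizer $\mSq$ of $p_1^\star,p_2^\star$; \Cref{lem:clubs} applied to this quotient algebra gives $\iotaq^\clubsuit$, which \Cref{lem:law_comm} (the place where dinaturality is actually used) compares with $\iota^\clubsuit$; mono-preservation and regularity then show that $\iotaq^\clubsuit$ coequalizes $p_1,p_2$, so a coalgebra $\varsigma$ lives on $\mSq$ (\Cref{lem:mSq-coalg}); finality of $\finalc$ gives $\coiter\iota^\clubsuit=m\comp\iter\iotaq$, whence $(p_1,p_2)$ and $(p_1^\star,p_2^\star)$ have the same coequalizer $\iter\iotaq$, which is a $\Sigma$-algebra morphism. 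The missing idea in your outline is precisely this switch from building a coalgebra over the relation $E$ to building one on the quotient algebra.
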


\noindent (Note that a morphism $(f,g)$ in
$\gcat^{\opp} \product \gcat$ is monic iff $g$ is monic in~$\gcat$ and
$f$ is epic in $\gcat$.) We make use of the following two lemmas,
which are of independent interest, en route to proving
\Cref{th:main}. First, we establish a crucial connection between
$\iota^\clubsuit$ and $a^\clubsuit$ (for general $a\c\Sigma A\to A$),
showing that they can be unified by running the unique morphism
$\iter a\c\mS\to A$ at the covariant and the contravariant positions
of $B$ correspondingly. This critically relies on (di)naturality
of~\eqref{eq:law}.
%
\begin{lemma}\label{lem:law_comm}
Let $(A,a)$ be a $\Sigma $-algebra. Then the following diagram commutes:
%
\begin{equation*}
\begin{tikzcd}[column sep=10em, row sep=normal]
\mS 
  \rar["\iota^\clubsuit"]
  \dar["a^\clubsuit"'] &
B(\mS ,\mS )
  \dar["{B(\id, \iter a)}"] \\
B(A,A)
  \rar["{B(\iter a,\id)}"] &
{B(\mS , A)}
\end{tikzcd}
\end{equation*}
\end{lemma}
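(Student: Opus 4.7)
My plan is to prove the equation $B(\id_\mS, \iter a) \cdot \iota^\clubsuit = B(\iter a, \id_A) \cdot a^\clubsuit$ by exhibiting both sides, call them $L$ and $R$ respectively, as the second component of a $\Sigma$-algebra morphism from the initial algebra $(\mS, \iota)$ into a carefully chosen $\Sigma$-algebra on $\mS \times B(\mS, A)$, and concluding by initiality. The target algebra structure is
\[
  \theta \;:=\; \brks{\iota \cdot \Sigma\fst,\;\; B(\id_\mS, \hat a \cdot \Sigmas[\iter a, \id_A]) \cdot \rho_{\mS, A}} \c \Sigma(\mS \times B(\mS, A)) \to \mS \times B(\mS, A),
\]
built from the initial algebra structure in the first component and from $\rho$ at $(\mS, A)$ in the second. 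The claim is that both $\brks{\id_\mS, L}$ and $\brks{\id_\mS, R}$ are $\Sigma$-algebra morphisms from $(\mS, \iota)$ to $(\mS \times B(\mS, A), \theta)$; agreement on the first component is immediate, so the substantive work is entirely in the second component.

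For $L$, I would precompose with $\iota$, expand $\iota^\clubsuit \cdot \iota$ via the defining equation from \Cref{lem:clubs}, and then apply \emph{naturality of $\rho$ in $Y$} at the morphism $\iter a \c \mS \to A$ to convert $\rho_{\mS, \mS}$ into $\rho_{\mS, A} \cdot \Sigma(\id \times B(\id, \iter a))$ modulo a $B(\id_\mS, \Sigmas(\id + \iter a))$-term. The identity $\hat a \cdot \Sigmas \iter a = \iter a \cdot \hat \iota$, which holds because $\iter a$ lifts to an Eilenberg-Moore algebra morphism for the free monad $\Sigmas$, collapses the resulting composite into precisely the second component of $\theta \cdot \Sigma\brks{\id_\mS, L}$. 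For $R$, the argument is analogous but hinges on \emph{dinaturality of $\rho$ in $X$}. After expanding $a^\clubsuit \cdot \iota$ via \Cref{lem:clubs}, the dinaturality hexagon applied at the coslice morphism $\iter a \c (\mS, p_\mS) \to (A, p_A)$ in $\Pt/\gcat$ rewrites $B(\iter a, \id) \cdot \rho_{A, A} \cdot \Sigma(\iter a \times \id)$ as $B(\id, \Sigmas(\iter a + \id)) \cdot \rho_{\mS, A} \cdot \Sigma(\id \times B(\iter a, \id))$, and the coproduct identity $\nabla_A \cdot (\iter a + \id) = [\iter a, \id]$ matches the shape of $\theta$.

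Once both pairings are verified to be $\Sigma$-algebra morphisms into $(\mS \times B(\mS, A), \theta)$, initiality of $(\mS, \iota)$ forces $\brks{\id, L} = \brks{\id, R}$ and hence $L = R$. The main obstacle is the dinaturality step: the two occurrences of $\iter a$ produced by $B(\iter a, \id_A)$ and by $\Sigma(\iter a \times \id)$ sit at opposite variances of $B$, and one must verify that the coslice-morphism hypothesis of \Cref{def:hog} genuinely applies. This is where the decomposition $\Sigma = \Pt + \Sigma'$ and the fact that $\Sigma$-algebra morphisms preserve the induced $\Pt$-points are essential. The remaining calculations are routine manipulations using bifunctoriality of $B$ and the monad structure of $\Sigmas$.
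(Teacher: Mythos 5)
Your proposal is correct and follows essentially the same route as the paper: the algebra structure $\theta$ you define on $\mS\times B(\mS,A)$ is exactly the paper's algebra $b$, and the paper likewise shows that $(\id\times B(\id,\iter a))\comp\brks{\id,\iota^\clubsuit}$ and $(\id\times B(\iter a,\id))\comp\brks{\id,a^\clubsuit}$ are $\Sigma$-algebra morphisms into it — the first via naturality of $\rho$ in $Y$ together with $\hat a\comp\Sigmas(\iter a)=\iter a\comp\hat\iota$, the second via the same dinaturality instance at $\iter a$ — concluding by initiality. Your remark that $\iter a$ must be a pointed (coslice) morphism is also the same justification the paper provides through its observation that $\Sigma$-algebra morphisms preserve the induced $\Pt$-points.
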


\begin{example}[$\SKI_u$, cf.\ \Cref{ex:a-clubsuit-ski}]
The two legs of the diagram send $K\in \mS$ to   $f\in A^\mS$ given by $f(t)=(K')^A((\iter a)(t))$, equivalently $f(t)=(\iter a)(K'(t))$ since $\iter a$ is a $\Sigma$-algebra morphism.
\end{example}

\begin{proof}[Proof of \Cref{lem:law_comm}]
We strengthen the claim a bit and show that the outside of the following diagram commutes:
\begin{equation*}
\begin{tikzcd}[column sep=-3em, row sep=3ex]
	\mS  & &[35ex] &[-4ex] \mS \times B(\mS ,\mS )  \\[3ex]
	& \mS \times B(A,A)  & \mS \times B(\mS , A)  \\[1ex]
	A\times B(A,A) & & & A\times B(\mS , A)
	\arrow[from=2-2, to=2-3, "{\id \times B(\iter a, \id)}"]
	\arrow[from=2-3, to=3-4, "{\iter a\times\id}"']
	\arrow[from=1-1, to=2-2, "\brks{\id,\,a^\clubsuit}" {yshift=-4pt, xshift=6pt}]
	\arrow[bend left=6pt, from=1-1, to=2-3, "\iter b" {yshift=-5pt, xshift=22pt}]
  \arrow[from=1-4, to=2-3, "{\id\times B(\id ,\iter a)}"']
	\arrow[from=1-1, to=1-4, "\brks{\id,\,\iota^\clubsuit}"]
	\arrow[from=2-2, to=3-1, "{\iter a\times\id}"]
	\arrow[from=3-1, to=3-4, "{\id\times B(\iter a,\id)}"]
	\arrow[bend right=20pt, from=1-1, to=3-1, "\brks{\iter a,\,a^\clubsuit}"']
	\arrow[bend left=25pt,  from=1-4, to=3-4, "{\iter a\times B(\id,\iter a)}"]
\end{tikzcd}
\end{equation*}
To this end, we prove commutativity of all inner cells, where $b$ is the $\Sigma $-algebra structure
%
\[
\begin{tikzcd}[column sep=1.9em]
  \Sigma (\mS \times {\BmS A })
  \ar{r}
  [yshift=.5em]
  {\langle \iota\comp \Sigma\fst,\, \rho_{\mS,A}\rangle}
  &
  \mS\times B(\mS,\Sigmas(\mS+A))
  \ar{r}[yshift=.5em]{\id\times B(\mS,\Sigmas[\iter a,\id])}
  &
  \mS\times B(\mS,\Sigmas A)
  \ar{r}
  [yshift=.5em]
  {\id\times B(\id,\hat a)}
  &
  \mS \times\BmS A.
\end{tikzcd}
\]
The bottom quadrangular cell as well as the left and the right triangles obviously commute.
We complete the proof by showing that the compositions
\begin{align}
\mS \xto{\brks{\id ,\,\iota^\clubsuit}}   &\;\mS \times B(\mS,\mS)  \xto{\id\times B(\id, \iter a)} \mS \times B(\mS ,A)\label{eq:law_comm2} \\
\mS \xto{\brks{\id ,\,a^\clubsuit}}       &\;\mS \times B(A,A)      \xto{\id\times B(\iter a, \id)} \mS \times B(\mS ,A),\label{eq:law_comm1}
\end{align}
are both $\Sigma $-algebra morphisms from $\mS$ to $(\mS\times B(\mS,A),b)$, hence equal to $\iter b$ by initiality of $\mS$.

The morphism \eqref{eq:law_comm2} is a composition of $\Sigma $-algebra morphisms:
$\brks{\id,\,\iota^\clubsuit}$ is so by definition, and $\id\times B(\id,\iter a)$
is so by commutativity of the following diagram:
%
%
\begin{equation*}
\begin{tikzcd}[column sep = 8em]
\Sigma (\mS \times B(\mS ,\mS ))
  \ar[dd,swap,"{\brks{\iota\comp\Sigma\fst,\, \rho_{\mS ,\mS }}}"]
  \ar[r,"{\Sigma (\id\times B(\id,\iter a))}"]
&
\Sigma (\mS\times B(\mS, A))
  \ar[d,"{\brks{\iota\comp\Sigma \fst,\,\rho_{\mS ,A}}}"]
\\
&
\mS\times B(\mS, \Sigmas (\mS+A))
  \ar[d,"{\id\times B(\id, \Sigmas(\iter a + \id))}"]
\\
\mS\times B(\mS ,\Sigmas  (\mS +\mS ))
  \ar[d,swap,"{\id\times B(\id,\Sigmas\nabla)}"]
  \ar[r,"{\id\times B(\id, \Sigmas(\iter a+\iter a))}"]
  \ar[ur,near end,bend left=10pt,"{\id\times B(\id,\Sigmas(\id+\iter a))}"]
&
\mS \times B(\mS ,\Sigmas (A+A))  
  \ar[d,"{\id\times B(\id,\Sigmas\nabla)}"]
\\
\mS\times B(\mS,\Sigmas  \mS )
  \ar[d,swap,"{\id\times B(\id,\hat \iota)}"]
  \ar[r,"{\id\times B(\id,\Sigmas(\iter a))}"]
&
\mS\times B(\mS ,\Sigmas  A)
  \ar[d,"{\id\times B(\id,\hat a)}"]
\\
\mS\times B(\mS,\mS )
  \ar[r,"{\id\times B(\id,\iter a)}"]
&
\mS \times B(\mS ,A)
\end{tikzcd}
\end{equation*}
The three upper cells commute by naturality of $\rho$ and by functoriality of $B$
in the second argument; the bottom cell commutes because $\iter a\c {\mS\to A}$ 
is a $\Sigma $-algebra morphism. 

That the morphism~\eqref{eq:law_comm1} is a $\Sigma $-algebra morphism
is shown from the following diagram:
\begin{equation*}
\begin{tikzcd}[column sep=2em, row sep=normal]
  \Sigma(\mS)
  \ar[dddd,"\iota"'] 
  \rar["\Sigma\brks{\id,\, a^\clubsuit}"]
  & 
  \Sigma (\mS \times{B(A,A)})
  \rar["\Sigma (\id\times{B(\iter a,\,\id)})"]
  \dar["\brks{\iota\comp\Sigma\fst,\,\Sigma (\iter a\times\id)}"']
  &[3em]
  \Sigma (\mS\times {\BmS A })
  \dar["\brks{\iota\comp\Sigma\fst,\,\rho_{\mS ,A}}"]
  \\
  & \mS\times \Sigma (A\times B(A,A))
  \dar["\id\times\rho_{A,A}"'] &
  \mS\times\BmS{\Sigma^\star (\mS +A)}
  \dar["{\id\times\BmSf{\Sigma^\star(\iter a+\id)}}"] \\
  &
  \mS \times B(A,\Sigma^\star (A+A))
  \dar["{\id\times B(\id,\Sigma^\star\nabla)}"']
  \rar["{\id\times B(\iter a,\,\id)}"] &
  \mS \times\BmS{\Sigma^\star (A+A)}
  \dar["{\id\times\BmSf{\Sigma^\star\nabla}}"] \\
  &
  \mS\times B(A,\Sigma^\star A)
  \rar["{\id\times B(\iter a,\,\id)}"]
  \dar["{\id\times B(\id,\hat a)}"'] &
  \mS \times \BmS{\Sigma^\star A}
  \dar["{\id\times\BmSf{\hat a}}"] \\
  \mS
  \rar["\brks{\id,\,a^\clubsuit}"]&
  \mS \times B(A,A)
  \rar["{\id\times B(\iter a,\,\id)}"] &
  \mS \times \BmS A
\end{tikzcd}
\end{equation*}
%
The left cell commutes by definition of~$a^\clubsuit$. The two lower right cells 
commute by functoriality of $B$, and the
right upper cell commutes by an instance of dinaturality for $\rho$:
%
\begin{equation*}
\begin{tikzcd}[column sep = 15,row sep=2ex, baseline = (B.base)]
  &
  \Sigma(\mS\times B(\mS ,A))
  \rar["\rho_{\mS ,A}"]
  &[1em]
  B(\mS, \Sigma^\star (\mS +A))
  \ar{dr}{{B(\id,\Sigma^\star(\iter a+\id))}}
  \\
  \Sigma (\mS\times{B(A,A)})
  \ar[ru,"{\Sigma(\id\times{B(\iter a,\id)})}"]
  \ar[rd,"{\Sigma(\iter a\times\id)}"']
  & & &[-1.5ex]
  B(\mS ,\Sigma^\star(A+A))
  \\
  &
  |[alias = B]|
  \Sigma (A\times{B(A,A)})
  \rar["\rho_{A,A}"]
  &
  B(A,\Sigma^\star (A+A))
  \ar[ur,"{B(\iter a,\id)}"']
\end{tikzcd}      \tag*{\qedhere}
\end{equation*}
%
%
%
\end{proof}
\takeout{
\begin{lemma}
For every $\Sigma $-algebra $(A,a)$ the following diagram commutes:
\begin{equation*}
\begin{tikzcd}[column sep=10em, row sep=normal]
\mS 
  \rar["\iter\brks{\id,\,\iota^\clubsuit}"]
  \dar["\iter \brks{\iter a,\,a^\clubsuit}"'] &
\mS \times B(\mS ,\mS )
  \dar["{\iter a\times B(\mS ,\iter a)}"] \\
A\times B(A,A)
  \rar["{A\times B(\iter a,A)}"] &
A\times {B(\mS ,A)}
\end{tikzcd}
\end{equation*}
\end{lemma}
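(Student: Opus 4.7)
My strategy is to exploit initiality of $\mS=\mu\Sigma$. I would first strengthen the claim by pairing both composites with $\id_\mS$ in the first component, so that the two legs become parallel morphisms $\mS \to \mS \times B(\mS, A)$, and then equip the target with a suitable $\Sigma$-algebra structure $b$ such that both legs are $\Sigma$-algebra morphisms into $(\mS\times B(\mS,A),b)$. Initiality of $\mS$ will then force the two legs to coincide; projecting to the second component yields the lemma.

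The structure $b\c\Sigma(\mS\times B(\mS,A)) \to \mS \times B(\mS, A)$ I have in mind pairs $\iota\cdot\Sigma\fst$ with the composite of $\rho_{\mS,A}$, $B(\id,\Sigmas[\iter a,\id])$, and $B(\id,\hat a)$ on the second factor (cf.\ \Cref{rem:a-clubsuit-sigma-morphism}, specialized to a mixed covariant slot). For the first leg, the paired morphism factors as $\langle\id,\iota^\clubsuit\rangle$ followed by $\id\times B(\id,\iter a)$. The first factor is a $\Sigma$-algebra morphism into the analogous structure $b'$ on $\mS\times B(\mS,\mS)$ by \Cref{rem:a-clubsuit-sigma-morphism} applied to $(\mS,\iota)$. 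Checking that the second factor $\id\times B(\id,\iter a)$ is a $\Sigma$-algebra morphism from $b'$ to $b$ is a straightforward diagram chase invoking naturality of $\rho$ in its covariant second argument along $\iter a\c\mS\to A$, together with the Eilenberg--Moore identity $\iter a\cdot\hat\iota = \hat a\cdot\Sigmas(\iter a)$ expressing that $\iter a$ is a $\Sigma$-algebra morphism.

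The second leg factors as $\langle\iter a,a^\clubsuit\rangle$, which is a $\Sigma$-algebra morphism into the structure on $A\times B(A,A)$ from \Cref{rem:a-clubsuit-sigma-morphism} applied to $(A,a)$, followed by $\iter a\times B(\iter a,\id)$ into $(\mS\times B(\mS,A),b)$. I expect the main obstacle to lie in showing that this post-composition is a $\Sigma$-algebra morphism: here dinaturality of $\rho$ is essential. Concretely, I will apply the hexagonal dinaturality condition to $\iter a$ viewed as a morphism in $\Pt/\gcat$ (well-defined because $\iter a$ preserves the canonical $\Pt$-pointings induced by $\iota$ and $a$), which relates $\rho_{\mS,\argument}$ and $\rho_{A,\argument}$ by pre- and post-composition with $\iter a$ in the contravariant position. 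Combining the resulting hexagon with functoriality of $B$ and the substitution $\Sigmas[\iter a,\id]\c\Sigmas(\mS+A)\to\Sigmas A$ appearing in~$b$ should close up the diagram. The delicate point is aligning the indices so that the contravariant transport along $\iter a$ produced by dinaturality matches the one already present in the algebra structure $b$.
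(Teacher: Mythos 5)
Your proposal takes essentially the same route as the paper: strengthen both legs to parallel morphisms $\mS\to\mS\times B(\mS,A)$, equip that object with the $\Sigma$-algebra structure built from $\langle\iota\comp\Sigma\fst,\rho_{\mS,A}\rangle$, $B(\id,\Sigmas[\iter a,\id])$ and $B(\id,\hat a)$, show both legs are $\Sigma$-algebra morphisms (naturality of $\rho$ in the covariant argument for the first, dinaturality along $\iter a$ viewed as a morphism of $V$-pointed objects for the second, plus the defining square of $a^\clubsuit$), and conclude by initiality of $\mS$. The only blemish is a typing slip in the second leg: it factors as $\langle\id,a^\clubsuit\rangle$ followed by $\id_{\mS}\times B(\iter a,\id)$, not as $\langle\iter a,a^\clubsuit\rangle$ followed by $\iter a\times B(\iter a,\id)$ (which does not typecheck, as $\iter a\c\mS\to A$ cannot serve as the first factor out of $A\times B(A,A)$), and accordingly one verifies the whole composite to be a $\Sigma$-algebra morphism rather than the second factor in isolation — exactly as the paper does.
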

\begin{proof}
  First, recall that $A\times {B(A,A)}$ and
  $\mS \times {B(\mS ,\mS )}$ are $\Sigma $-algebras with the
  structures~$\brks{\iter a,\,a^\clubsuit}$ and $\brks{\id,\,\iota^\clubsuit}$ obtained as
  in~\eqref{eq:iter}. In addition, we define $\Sigma $-algebra
  structures $\alpha$ and $\alpha'$ on $\mS \times B(A,A)$ and
  $\mS \times B(\mS ,A)$, respectively, as follows:
  \[
    \begin{tikzcd}[column sep = 60]
      \Sigma (\mS \times B(A,A))
      \ar{d}{\brks{\iota\comp \Sigma \fst,\, \Sigma (\iter a\times B(A,A))}}
      \ar[shiftarr = {xshift=-60}]{dddd}[swap]{\alpha}
      &
      \Sigma (\mS \times B(\mS ,A))
      \ar{d}[swap]{ \brks{ \iota\comp \Sigma \fst,\,\rho_{\mS ,A} }  }
      \ar[shiftarr = {xshift=70}]{dddd}{\alpha'}
      \\
      \mS  \times \Sigma (A\times B(A,A))
      \ar{d}{\mS \times \rho_{A,A}}
      &
      \mS  \times B(\mS ,\Sigmas (\mS +A))
      \ar{d}[swap]{\mS \times B(\mS ,\Sigmas (\iter a + A))}
      \\
      \mS  \times B(A,\Sigmas  (A+A))
      \ar{d}{\mS  \times B(A,\Sigmas \nabla) }
      &
      \mS  \times B(\mS ,\Sigmas (A+A))
      \ar{d}[swap]{\mS  \times B(\mS ,\Sigmas \nabla) }
      \\
      \mS  \times B(A,\Sigmas  A)
      \ar{d}{\mS  \times B(A,\hat a)}
      &
      \mS  \times B(\mS ,\Sigmas  A)
      \ar{d}[swap]{\mS  \times B(\mS ,\hat a)}
      \\
      \mS  \times B(A,A)
      &
      \mS  \times B(\mS ,A)
    \end{tikzcd}
  \]

  It is our task to prove that the outside of the diagram below commutes: 
  \[
    \begin{tikzcd}[column sep=50]
      \mS 
      \ar[shiftarr = {xshift=-60}]{dd}[swap]{\iter \brks{\iter a,\,a^\clubsuit}}
      \ar{r}{\iter\brks{\id,\,\iota^\clubsuit}}
      \ar{d}[swap]{\iter \alpha}
      \ar{rd}[near end]{\iter \alpha'}
      & 
      \mS \times B(\mS ,\mS )
      \ar[shiftarr = {xshift=70}]{dd}{\iter a\times B(\mS ,\iter a)}
      \ar[dashed]{d}{\mS  \times B(\mS ,\iter a)}
      \\
      \mS \times B(A,A)
      \ar[dashed]{d}[swap]{\iter a \times B(A,A)}
      \ar[dashed]{r}[swap]{\mS  \times B(\iter a, A)}
      &
      \mS \times B(\mS ,A)
      \ar{d}{\iter a \times B(\mS , A)}
      \\
      A\times B(A,A)
      \ar{r}{A\times B(\iter a,A)}
      & 
      A\times B(\mS ,A)
    \end{tikzcd}
  \]
%
%
%
%
  The lower rectangle and the right-hand part obviously commute. To
  see that the remaining three parts commutes, we will prove that the
  three dashed arrows are appropriate $\Sigma $-algebra morphisms; the
  desired parts then commute by the initiality of $\mS $.
  \begin{enumerate}
  \item To show that
    $\iter a\times B(A,A)\c \mS \times B(A,A)\to A\times B(A,A)$ is a
    morphism, we prove that the following diagram commutes:
    \[
      \begin{tikzcd}[column sep=8em]
        \Sigma (\mS \times B(A,A))
        \ar{d}{\brks{\iota\comp \Sigma \fst,\. \Sigma (\iter a\times B(A,A))}}
        \ar{r}{\Sigma (\iter a \times B(A,A))}
        \ar[shiftarr = {xshift=-70}]{dddd}[swap]{\alpha}
        &
        \Sigma (A\times B(A,A))
        \ar{dd}{ \brks{ a\comp \Sigma \fst,\,\rho_{A,A} }  }
        \ar[shiftarr = {xshift=60}]{dddd}{\brks{\iter a,\,a^\clubsuit}}
        \\
        \mS  \times \Sigma (A\times B(A,A))
        \ar{d}[swap]{\mS \times \rho_{A,A}}
        \\
        \mS  \times B(A,\Sigmas  (A+A))
        \ar{d}[swap]{\mS  \times  B(A,\Sigmas \nabla) }
        \ar{r}{\iter a\times B(A,\Sigmas (A+A))}
        &
        A \times B(A,\Sigmas (A+A))
        \ar{d}{A \times  B(A,\Sigmas \nabla) }
        \\
        \mS  \times B(A,\Sigmas  A)
        \ar{d}[swap]{\mS  \times B(A,\hat a)}
        \ar{r}{\iter a\times B(A,\Sigmas  A)}
        &
        A \times B(A,\Sigmas  A) \ar{d}{A \times B(A,\hat a)}
        \\
        \mS  \times B(A,A)
        \ar{r}{\iter a\times B(A,A)}
        &
        A \times B(A,A)
      \end{tikzcd}
    \]
    Indeed, the upper part commutes because $\iter a$ is a morphism
    of $\Sigma $-algebras, and the other parts trivially commute.

  \item Next, we show that
    $\mS \times B(\mS ,\iter a)\c \mS \times B(\mS ,\mS ) \to
    \mS \times B(\mS ,A)$ is a $\Sigma$-algebra morphism, which amounts to
    commutativity of the following diagram:
    \[
      \begin{tikzcd}[column sep = 80]
        \Sigma (\mS \times B(\mS ,\mS ))
        \ar{dd}[swap]{\brks{\iota\comp \Sigma \fst,\, \rho_{\mS ,\mS } }}
        \ar{r}{ \Sigma (\mS \times B(\mS ,\iter a)) }
        \ar[shiftarr = {xshift=-70}]{dddd}[swap]{\brks{\id,\,\iota^\clubsuit}}
        &
        \Sigma (\mS \times B(\mS ,A))
        \ar{d}{ \brks{ \iota\comp \Sigma \fst,\,\rho_{\mS ,A} }  }
        \ar[shiftarr = {xshift=90}]{dddd}{\alpha'}
        \\
        &
        \mS  \times B(\mS ,\Sigmas (\mS +A))
        \ar{d}{\mS \times B(\mS , \Sigmas (\iter a + A))}
        \\
        \mS  \times B(\mS ,\Sigmas  (\mS +\mS ))
        \ar{d}[swap]{\mS  \times B(\mS ,\Sigmas \nabla) }
        \ar{r}[swap]{\mS \times B(\mS ,\Sigmas (\iter a+\iter a))}
        \ar{ur}[near end]{\mS \times B(\mS , \Sigmas (\mS +\iter a)) }
        &
        \mS  \times B(\mS ,\Sigmas (A+A))  \ar{d}{\mS  \times B(\mS ,\Sigmas \nabla) }
        \\
        \mS  \times B(\mS ,\Sigmas  \mS )
        \ar{d}[swap]{\mS  \times B(\mS ,\hat \iota)}
        \ar{r}{ \mS \times B(\mS ,\Sigmas  \iter a) }
        &
        \mS  \times B(\mS ,\Sigmas  A)
        \ar{d}{\mS  \times B(\mS ,\hat a)}
        \\
        \mS  \times B(\mS ,\mS )
        \ar{r}{ \mS \times B(\mS , \iter a)  }
        &
        \mS  \times B(\mS ,A)
      \end{tikzcd}
    \]
    The upper part commutes by naturality of $\rho_{\mS ,-}$ and the
    lower part because $\iter a\c \mS  \to A$ is a $\Sigma $-algebra
    morphism. The two remaining parts commute trivially.

  \item Finally, we prove that 
    $\mS \times B(\iter a,A)\c \mS \times{B(A,A)}\to \mS \times
    B(\mS ,A)$ is a $\Sigma$-algebra morphism. The relevant diagram is
    \[
      \begin{tikzcd}[column sep=8em, row sep=normal]
        \Sigma (\mS \times{B(A,A)})
        \rar["\Sigma (\mS \times{B(\iter a,\,A)})"]
        \ar{d}{\brks{\iota\comp\Sigma \fst,\,\Sigma (\iter a\times  B(A,A))}}
        \ar[shiftarr = {xshift=-70}]{dddd}[swap]{\alpha}
        &
        \Sigma (\mS \times {B(\mS ,\,A)})
        \dar["\brks{\iota\comp\Sigma \fst,\,\rho_{\mS ,A}}"]
        \ar[shiftarr = {xshift=90}]{dddd}{\alpha'}
        \\
        \mS \times \Sigma (A\times B(A,A))
        \dar["\mS \times\rho_{A,A}"']
        &
        \mS \times B(\mS ,\Sigma^\star (\mS +A))
        \dar["{\mS \times B(\mS ,\Sigma^\star(\iter a+A))}"]
        \\
        \mS \times B(A,\Sigma^\star (A+A))
        \dar["{\mS \times B(A,\Sigma^\star\nabla)}"']
        \rar["{\mS \times B(\iter a,\,\Sigma^\star(A+A))}"]
        &
        \mS \times B(\mS ,\Sigma^\star (A+A))
        \dar["{\mS \times B(\mS ,\Sigma^\star\nabla)}"]
        \\
        \mS \times B(A,\Sigma^\star A)
        \rar["{\mS \times B(\iter a,\,\Sigma^\star A)}"]
        \dar["{\mS \times B(A,\hat a)}"']
        &
        \mS \times B(\mS ,\Sigma^\star A)
        \dar["{\mS \times B(\mS ,\hat a)}"]
        \\
        \mS \times B(A,A)
        \rar["{\mS \times B(\iter a,\,A)}"]
        &
        \mS \times B(\mS ,A)
      \end{tikzcd}
    \]
    The two lower parts commute trivially. For the upper rectangle we
    postcompose with the projections of the product in the lower
    right-hand corner: the left-hand component is easily seen to
    commute, and for the right-hand component we use the following
    instance of dinaturality for $\rho_{-,A}$:
    \[
      \begin{tikzcd}[column sep = -20, row sep = 15, baseline = (B.base)]
        &
        \Sigma (\mS \times B(\mS ,A)
        \ar{r}{\rho_{\mS ,A}}
        &[6em]
        B(\mS ,\Sigma^\star (\mS +A))
        \ar{dr}[near end]{B(\mS ,\Sigma^\star (\iter a+A))}
        \\
        \Sigma (\mS \times{B(A,A)})
        \ar{ru}[near start]{\Sigma (\mS \times{B(\iter a,A)})}
        \ar{rd}[near start,swap]{\Sigma (\iter a\times B(A,A))}
        & & &[-2em]
        B(\mS ,\Sigma^\star (A+A))
        \\
        &
        \Sigma (A\times{B(A,A)})
        \ar{r}{\rho_{A,A}}
        &[2em]
        |[alias = B]|
        B(\Sigma^\star (A+A),A)
        \ar{ur}[near end,swap]{B(\iter a,\Sigma^\star (A+A))}
      \end{tikzcd}
      \tag*{\qedhere}
    \]
  \end{enumerate}
%
%
%
\end{proof}
}
Using the universal property of the pullback~\eqref{eq:kernel}, we obtain a morphism
${s\colon \mS \to E}$ such that $p_1 \comp s = \id$ and $p_2 \comp s = \id$.
It follows that ${p_1^\star, p_2^\star\colon \Sigma^\star E \parto \mS}$ is a reflexive pair in $\C$ with common section $\eta_E \comp s$, where~$\eta$ is the unit of the monad~$\Sigma^\star$. By our assumptions, the coequalizer of $p_1^\star$ and $p_2^\star$ is
preserved by the functor~$\Sigma$. Hence, there exists a $\Sigma$-algebra structure
${\iotaq\c\Sigma(\mSq)\to\mSq}$, obtained using the universal property
of the coequalizer $\Sigma(\iter\iotaq)$ from the diagram
\begin{equation}\label{eq:it-i-quot-coeq}
\begin{tikzcd}[column sep=5em, row sep=normal]
\Sigma\Sigma^\star E\dar["\iota"']
  \ar[r,shift right=.5ex,"\Sigma p_2^\star"']
  \ar[r,shift left=.5ex,"\Sigma p_1^\star"]
  &
  \Sigma(\mS)\rar["\Sigma (\iter\iotaq)"]\dar["\iota"]
  &
  \Sigma(\mSq) \ar[dashed]{d}{\iotaq}
  \\
  \Sigma^\star E
  \ar[r,shift right=.5ex,"p_2^\star"']
  \ar[r,shift left=.5ex,"p_1^\star"] &
  \mS\rar["\iter\iotaq"] &  \mSq.
\end{tikzcd}
\end{equation}
Here we already denote the coequalizer of $p_1^\star$ and
$p_2^\star$ by $(\iter\iotaq)$, as commutation of the right-hand side
identifies it as the unique $\Sigma$-algebra morphism induced by $\iotaq$.
\begin{lemma}\label{lem:mSq-coalg}
Under the conditions of \Cref{th:main}, there exists a coalgebra structure
  $\varsigma\c\mSq\to B(\mSq,\mSq)$ making the triangle below commute,
  where $\iotaq^\clubsuit=(\iotaq)^\clubsuit$:
  \[
    \begin{tikzcd}[row sep = 2ex]
      & \mS \ar{dl}[swap]{\iter \iotaq} \ar{dr}{\iotaq^\clubsuit} & \\
      \mSq \ar{rr}{\varsigma} & & B(\mSq,\mSq) 
    \end{tikzcd}
  \]
\end{lemma}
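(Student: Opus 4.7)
My plan is to construct $\varsigma$ as the unique factorization of $\iotaq^\clubsuit$ through the coequalizer $\iter\iotaq$; since $\iter\iotaq$ is by construction the coequalizer of $p_1^\star,p_2^\star$, it suffices to establish
\[
  \iotaq^\clubsuit\cdot p_1^\star \;=\; \iotaq^\clubsuit\cdot p_2^\star.
\]

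I first reduce this equation on $\Sigma^\star E$ to an equation on $E$. By Remark~\ref{rem:a-clubsuit-sigma-morphism} applied to $(\mSq,\iotaq)$, the pairing $\langle\iter\iotaq,\iotaq^\clubsuit\rangle\colon \mS\to \mSq\times B(\mSq,\mSq)$ is a $\Sigma$-algebra morphism, so its composites with the $\Sigma$-algebra morphisms $p_1^\star,p_2^\star$ are two parallel $\Sigma$-algebra morphisms out of the free algebra $\Sigma^\star E$. By its universal property these coincide as soon as they agree after precomposition with the unit $\eta_E$; using $p_i^\star\cdot\eta_E = p_i$, the first-component equality $\iter\iotaq\cdot p_1=\iter\iotaq\cdot p_2$ follows at once from the coequalizer identity $\iter\iotaq\cdot p_1^\star = \iter\iotaq\cdot p_2^\star$. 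Thus everything reduces to proving $\iotaq^\clubsuit\cdot p_1 = \iotaq^\clubsuit\cdot p_2$ on $E$.

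This is the main obstacle. I invoke Lemma~\ref{lem:law_comm} for $(A,a) = (\mSq,\iotaq)$ to obtain $B(\iter\iotaq,\id)\cdot\iotaq^\clubsuit = B(\id,\iter\iotaq)\cdot\iota^\clubsuit$. Since $\iter\iotaq$ is a regular epi in the regular category $\C$ and $B$ preserves monomorphisms (recalling that monics in $\C^\opp\times\C$ are precisely the pairs consisting of an epi and a mono in $\C$), the morphism $B(\iter\iotaq,\id)$ is monic. Hence the problem further reduces to showing
\[
  B(\id,\iter\iotaq)\cdot\iota^\clubsuit\cdot p_1 \;=\; B(\id,\iter\iotaq)\cdot\iota^\clubsuit\cdot p_2.
\]
For this, I factor the behaviour morphism as $\coiter\iota^\clubsuit = m\cdot e$ in $\C$, with $e\colon \mS\epito I$ a regular epi and $m\colon I\monoto \finalc$ monic. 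Since $m$ is monic, the kernel pair of $e$ coincides with that of $\coiter\iota^\clubsuit$, namely $(p_1,p_2)$, so $e$ is the coequalizer of $p_1,p_2$; and since $\iter\iotaq$ also coequalizes $p_1,p_2$, it factors as $\iter\iotaq = h\cdot e$ for some $h\colon I\to \mSq$.

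To conclude, the coalgebra morphism law $\zeta\cdot\coiter\iota^\clubsuit = B(\id,\coiter\iota^\clubsuit)\cdot\iota^\clubsuit$ combined with $\coiter\iota^\clubsuit\cdot p_1 = \coiter\iota^\clubsuit\cdot p_2$ yields $B(\id,\coiter\iota^\clubsuit)\cdot\iota^\clubsuit\cdot p_1 = B(\id,\coiter\iota^\clubsuit)\cdot\iota^\clubsuit\cdot p_2$. Splitting $B(\id,\coiter\iota^\clubsuit) = B(\id,m)\cdot B(\id,e)$ and cancelling the monic $B(\id,m)$ gives $B(\id,e)\cdot\iota^\clubsuit\cdot p_1 = B(\id,e)\cdot\iota^\clubsuit\cdot p_2$, and post-composing with $B(\id,h)$ delivers the target equation. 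Chaining the reductions backwards produces $\iotaq^\clubsuit\cdot p_1^\star = \iotaq^\clubsuit\cdot p_2^\star$, and the coequalizer universal property then supplies the unique $\varsigma\colon \mSq\to B(\mSq,\mSq)$ with $\varsigma\cdot\iter\iotaq = \iotaq^\clubsuit$.
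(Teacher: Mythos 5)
Your proposal is correct and follows essentially the same route as the paper's proof: reduce to generators of the free algebra $\Sigma^\star E$ via \Cref{rem:a-clubsuit-sigma-morphism}, transfer the problem from $\iotaq^\clubsuit$ to $\iota^\clubsuit$ using \Cref{lem:law_comm} and the monomorphism $B(\iter\iotaq,\id)$, and then conclude via the regular epi--mono factorization of $\coiter\iota^\clubsuit$, mono-preservation of $B$, and the kernel-pair identity $\coiter\iota^\clubsuit\comp p_1=\coiter\iota^\clubsuit\comp p_2$. The only difference is presentational (you derive the equations forward from the known ones and postcompose with $B(\id,h)$, while the paper phrases the same chain as successive ``it suffices to show'' reductions), so nothing substantive is gained or lost.
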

\begin{proof}
By definition of $\iter\iotaq$ as a coequalizer of $p_1^\star$ and $p_2^\star$,
it suffices to show that $\iotaq^\clubsuit$ also coequalizes $p_1^\star$ and $p_2^\star$,
which we strengthen to $\brks{\iter\iotaq,\, \iotaq^\clubsuit}\comp p_1^\star=\brks{\iter\iotaq, \iotaq^\clubsuit}\comp p_2^\star$. Since $\brks{\iter\iotaq,\, \iotaq^\clubsuit}\comp p_1^\star$ and $\brks{\iter\iotaq,\, \iotaq^\clubsuit}\comp p_2^\star$
are $\Sigma$-algebra morphisms (\Cref{rem:a-clubsuit-sigma-morphism}) whose domain is the free $\Sigma$-algebra $\Sigma^\star E$,
it suffices to show that the desired equation holds when precomposed with $\eta_E\colon E \to \Sigma^\star E$.
Thus, it remains to show that $\brks{\iter\iotaq,\, \iotaq^\clubsuit}\comp p_1=\brks{\iter\iotaq,\, \iotaq^\clubsuit}\comp p_2$,
which, in turn, reduces to $\iotaq^\clubsuit\comp p_1 = \iotaq^\clubsuit\comp p_2$.
Next,
\begin{flalign*}
&&& \iotaq^\clubsuit\comp p_1=\iotaq^\clubsuit\comp p_2\\
&&\iff\; & B(\iter\iotaq,\id)\comp \iotaq^\clubsuit\comp p_1= B(\iter\iotaq,\id)\comp \iotaq^\clubsuit\comp p_2&\by{$B(\iter\iotaq,\id)$ is mono}\\
&&\iff\; & \BmSf{\iter\iotaq}\comp \iota^\clubsuit\comp p_1= \BmSf{\iter\iotaq}\comp\iota^\clubsuit\comp p_2.&\by{\autoref{lem:law_comm}}
\end{flalign*}
Let us denote the coequalizer of $p_1, p_2$ by $q\colon \mS \to
Q$. Since~$ \iter\iotaq$ coequalizes $p_1$ and $p_2$, it factorizes
through $q$. It thus suffices to show that
\begin{displaymath}
  \BmSf{q}\comp\iota^\clubsuit\comp p_1
  =
  \BmSf{q}\comp\iota^\clubsuit\comp p_2.
\end{displaymath}
By regularity of the base category~$\gcat$, the unique morphism
$m\colon Q \to Z$ such that $\coit\iota^\clubsuit = m \cdot q$ is monic. Since $\BmSf{\argument}$ 
preserves monomorphisms, it suffices to show that
\begin{displaymath}
  \BmSf{\coit\iota^\clubsuit} \comp\iota^\clubsuit\comp p_1
  =
  \BmSf{\coit\iota^\clubsuit} \comp\iota^\clubsuit\comp p_2.
\end{displaymath}
Note that
$\BmSf{\coit\iota^\clubsuit} \comp\iota^\clubsuit = \zeta\comp
\coit\iota^\clubsuit$ since $\coit\iota^\clubsuit$ is a coalgebra
morphism from $(\mS, \iota^\clubsuit)$ to $(Z, \zeta)$.  Hence the
above equation follows from
$\coit\iota^\clubsuit\comp p_1 = \coit\iota^\clubsuit\comp p_2$, which
holds by~\eqref{eq:kernel}.
\end{proof}

\noindent These preparations in hand, we can proceed with the proof of the main result.%
\begin{proof}[Proof of \Cref{th:main}]
Using the coalgebra $\varsigma\c\mSq\to B(\mSq,\mSq)$ from \autoref{lem:mSq-coalg}
together with \autoref{lem:law_comm}, the upper
rectangular cell of the following diagram commutes:
\begin{equation}\label{eq:cong-argument}
\begin{tikzcd}[column sep=2em, row sep=4ex]
\mS  
  \rar[rr,"\iota^\clubsuit"]
  \dar["\iter\iotaq"']
  \ar[dr, "\iotaq^\clubsuit", near end]
  \ar[shiftarr = {xshift=-30}]{dd}[swap]{\coit\iota^\clubsuit} 
& &[6ex]
B(\mS, \mS)
  \dar["{B(\id, \iter\iotaq)}"]
  \ar[shiftarr = {xshift=45}]{dd}{B(\id,\coit\iota^\clubsuit)}
\\
\mSq
  \ar[r,"\varsigma"]
  \ar{d}[swap]{m}
& B(\mSq,\mSq)
  \rar["{B(\iter\iotaq,\id)}"] 
&
B(\mS,  \mSq)
  \ar{d}{B(\id, m)}
\\
  Z
  \ar[rr,"\zeta"] 
& &
  B(\mS, Z)
\end{tikzcd}
\end{equation}
By finality of $(Z,\zeta)$, we also have a morphism $m$ such
that the lower rectangular cell commutes. Therefore,
$\coit\iota^\clubsuit = m \comp \iter\iotaq$ by uniqueness of
$\coit\iota^\clubsuit$. From this we derive the desired result as
follows. First, we obtain a $\Sigma^\star$-algebra structure
$e\c\Sigma^\star E\to E$ such that $p_1\comp e = p_1^\star$ and
$p_2\comp e = p_2^\star$ by the universal property of~$E$ as the
pullback~\eqref{eq:kernel}, using the fact that
$\coit\iota^\clubsuit\comp p_1^\star = m\comp(\iter\iotaq)\comp
p_1^\star= m\comp(\iter\iotaq)\comp p_2^\star =
\coit\iota^\clubsuit\comp p_2^\star$ by~\eqref{eq:it-i-quot-coeq}.
Since we also have $p_1 = p_1^\star\comp\eta_E$ and
$p_2 = p_2^\star\comp\eta_E$, it follows that the pairs $p_1,p_2$  and $p_1^\star,p_2^\star$ have the
same coequalizer, viz.\ $\iter\iotaq$. It follows that~$p_1,p_2$ is a congruence, since its
coequalizer $\iter\iotaq$ is a $\Sigma$-algebra morphism by
\eqref{eq:it-i-quot-coeq}.
%
%
%
\end{proof}
\begin{example}
  The $\mathrm{SKI_{u}}$ calculus (\Cref{sec:unary}) satisfies the
  assumptions of \Cref{th:main}: $\Set$ is a regular category, every
  polynomial functor $\Sigma$ preserves reflexive coequalizers (see
  \Cref{rem:reflexive}), and the behaviour functor $B_u(X,Y)=Y+Y^X$
  maps surjections to injections in the contravariant argument and
  preserves injections in the covariant one. Consequently,
  compositionality of $\SKI_u$ (\Cref{prop:skicong1}) is
  an instance of \Cref{th:main}. More generally, every $\HO$ specification (see \Cref{def:hoformat}) induces a compositional semantics.
\end{example} 
\begin{example}
  The nondeterministic $\mathrm{SKI_{u}}$ calculus (\Cref{sec:nd-ski})
  is handled analogously; just observe that the finite powerset
  functor $\mypowfin$ preserves both surjections and injections. Thus,
  compositionality (\Cref{prop:skicong2}) follows from \Cref{th:main}.
\end{example}
\noindent A more intricate application of our main theorem is given in \Cref{sec:lam}.

\subsection{Higher-Order Bialgebras}\label{sec:bialg}
We conclude this section with a bialgebraic perspective on higher-order GSOS laws.
\begin{definition}[Higher-Order Bialgebra]
Given a $\Pt$-pointed higher-order GSOS law $\rho$, a \emph{$\rho$-bialgebra} is a triple $(A,a\c\Sigma A\to A,c\c A\to B(A,A))$ such that
the following diagram commutes:
\begin{equation*}
\begin{tikzcd}[column sep=6ex, row sep=normal]
\Sigma A
  \rar["a"]
  \dar["\Sigma\brks{\id,c}"'] &
 A
  \rar["c"] &[5ex]
B( A, A)\\
\Sigma(A\times B(A, A))
  \rar[r,"\rho_{A,A}"]  
&
B(A,\Sigma^\star( A+ A))
  \rar["{B(\id,\Sigmas\nabla)}"]
&
B(A,\Sigma^\star A)
  \uar["{B(\id,\hat a)}"']
\end{tikzcd}
\end{equation*}
A \emph{morphism} from $(A,a,c)$ to another $\rho$-bialgebra $(A',a',c')$ is a morphism $h\c A\to A'$ of $\C$ such that $h\c (A,a)\to (A',a')$ is a $\Sigma$-algebra morphism and the following diagram commutes: 
\begin{equation*}
\begin{tikzcd}[column sep=5ex, row sep=normal]
A
  \ar[rr,"c"]
  \dar["h"'] & &[4ex]
B(A,A)
  \dar["{B(\id,h)}"] \\
A'\ar[r,"c'"] 
& 
B(A',A')
  \rar["{B(h,\id)}"] 
& 
B(A,A')
\end{tikzcd}
\end{equation*}
\end{definition}
\begin{proposition}\label{prop:rho-bialg-cat}
  All $\rho$-bialgebras and their morphisms form a category.%
  \smnote{Sentences should not start with symbols; let alone theorems.}
\end{proposition}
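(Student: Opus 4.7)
The plan is to verify the three category axioms for the proposed category of $\rho$-bialgebras and their morphisms: (i) the composite of two bialgebra morphisms is again a bialgebra morphism; (ii) the identity on any $\rho$-bialgebra is a bialgebra morphism; and (iii) associativity and unit laws. I expect (ii) and (iii) to be immediate, since $\id_A$ is trivially a $\Sigma$-algebra morphism and satisfies $B(\id,\id)\circ c = c = B(\id,\id)\circ c\circ \id_A$, while associativity and unit laws are inherited verbatim from~$\C$.

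The substantive step will be closure under composition. Given morphisms $h\c(A,a,c)\to(A',a',c')$ and $h'\c(A',a',c')\to(A'',a'',c'')$, I would first note that $h'\circ h$ is a $\Sigma$-algebra morphism by standard functoriality of~$\Sigma$. For the bialgebra compatibility condition the task reduces to deriving
\[ B(\id,\,h'\circ h)\circ c \;=\; B(h'\circ h,\,\id)\circ c''\circ (h'\circ h) \]
from the two hypotheses
\[ B(\id,h)\circ c = B(h,\id)\circ c'\circ h \qand B(\id,h')\circ c' = B(h',\id)\circ c''\circ h'. \]
My plan is a short diagram chase. Post-composing the first equation with $B(\id,h')$ produces $B(\id,h'\circ h)\circ c$ on the left, via covariant functoriality. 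On the right, I would rewrite $B(\id,h')\circ B(h,\id) = B(h,\id)\circ B(\id,h')$ using the interchange law for the bifunctor~$B$, then substitute the second hypothesis, and finally collapse $B(h,\id)\circ B(h',\id)$ to $B(h'\circ h,\,\id)$ via contravariant functoriality. Chaining these steps yields exactly the required equation.

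The only mildly delicate point I foresee is tracking the ambient object: both sides of the target equation live in $B(A,A'')$, so one must carefully navigate the mixed variance of~$B$ when shifting $h$ and $h'$ between its two arguments. I do not expect any deeper obstacle, as the argument is essentially diagrammatic bookkeeping once the interchange law $B(f,g) = B(\id,g)\circ B(f,\id) = B(f,\id)\circ B(\id,g)$ is in hand; there is no interaction with the law $\rho$ itself, which is already absorbed into the definition of a bialgebra.
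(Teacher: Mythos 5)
Your proposal is correct and matches the paper's own argument: the paper likewise notes that identities are trivially bialgebra morphisms and establishes closure under composition by pasting the two morphism squares together, using exactly the interchange $B(\id,h)\comp B(g,\id)=B(g,\id)\comp B(\id,h)$ and contravariant functoriality to obtain $B(\id,h\comp g)\comp c = B(h\comp g,\id)\comp c''\comp h\comp g$. No gaps.
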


%
As in the first-order case (\Cref{sec:abstract-gsos}), the initial algebra $\mS$ extends to an initial $\rho$-bialgebra:
\begin{proposition}\label{prop:initial-bialgebra}
The triple $(\mS,\iota,\iota^\clubsuit)$ is an initial $\rho$-bialgebra. 
\end{proposition}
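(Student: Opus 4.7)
The plan is to establish the proposition in two steps: first verifying that $(\mS,\iota,\iota^\clubsuit)$ is indeed a $\rho$-bialgebra, then showing that for every $\rho$-bialgebra $(A,a,c)$ there is a unique bialgebra morphism into it, namely $\iter a$. Both parts will rely on the universal characterization of $a^\clubsuit$ from \Cref{lem:clubs} together with \Cref{lem:law_comm}.

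For the first step, I would instantiate \Cref{lem:clubs} with the $\Sigma$-algebra $(\mS,\iota)$ itself. Since $\iter\iota = \id_\mS$ and the Eilenberg–Moore extension of $\iota$ is $\hat\iota$, the defining equation of $\iota^\clubsuit = \iota^\clubsuit$ reads precisely
\[
\iota^\clubsuit \comp \iota \;=\; B(\id,\hat\iota)\comp B(\id,\Sigmas\nabla)\comp \rho_{\mS,\mS}\comp \Sigma\brks{\id,\iota^\clubsuit},
\]
which is the $\rho$-bialgebra square for $(\mS,\iota,\iota^\clubsuit)$.

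For the second step, given any $\rho$-bialgebra $(A,a,c)$, set $h \coloneqq \iter a \c \mS \to A$; this is automatically a $\Sigma$-algebra morphism. The crux is proving the coalgebra compatibility
\[
B(\id,\iter a)\comp \iota^\clubsuit \;=\; B(\iter a,\id)\comp c\comp \iter a.
\]
By \Cref{lem:law_comm} applied to $(A,a)$, the right-hand side equals $B(\iter a,\id)\comp a^\clubsuit$ as soon as we know that $c\comp \iter a = a^\clubsuit$. I would derive this key identity from the uniqueness clause of \Cref{lem:clubs}: composing the bialgebra condition for $(A,a,c)$ with $\Sigma(\iter a)$ and using $\iter a\comp \iota = a\comp \Sigma(\iter a)$ together with the identity $\brks{\id,c}\comp \iter a = \brks{\iter a, c\comp \iter a}$, one checks that $c\comp \iter a$ satisfies exactly the defining commutative diagram of $a^\clubsuit$, so the two must agree. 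Chaining this with \Cref{lem:law_comm} yields the required equation, making $\iter a$ a bialgebra morphism.

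Uniqueness is then automatic: any bialgebra morphism $h\c (\mS,\iota,\iota^\clubsuit)\to (A,a,c)$ is in particular a $\Sigma$-algebra morphism from $\mS$ into $(A,a)$, hence equals $\iter a$ by initiality of $\mS$. The only nontrivial step is the identity $c\comp \iter a = a^\clubsuit$; I expect this to be the main technical obstacle, since it requires carefully verifying that $c\comp \iter a$ slots into the defining equation of \Cref{lem:clubs}, but it is a direct diagram chase once the bialgebra square for $(A,a,c)$ is precomposed with $\Sigma(\iter a)$.
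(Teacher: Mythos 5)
Your proposal is correct and follows essentially the same route as the paper's proof: the bialgebra square for $(\mS,\iota,\iota^\clubsuit)$ is the defining diagram of \Cref{lem:clubs} with $\iter\iota=\id$, the morphism property of $\iter a$ reduces via \Cref{lem:law_comm} to the identity $c\comp\iter a=a^\clubsuit$, which is obtained exactly as you describe from the uniqueness clause of \Cref{lem:clubs} by precomposing the bialgebra law of $(A,a,c)$ with $\Sigma(\iter a)$, and uniqueness follows from initiality of $\mS$ as a $\Sigma$-algebra.
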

In contrast to the first-order case, however, it is generally impossible to derive a final $\rho$-bialgebra from a final $B(\mS,-)$-coalgebra, which is the intended semantic domain for the GSOS law $\rho$:
\begin{example}\label{ex:final-bialgebra}
Consider the bifunctor $B(X,Y)=2^X\c\Set^\opp\times \Set \to \Set$, the empty signature $\Sigma=\emptyset$, and the unique higher-order GSOS law $\rho$ of $\Sigma$ over $B$. A $\rho$-bialgebra is just a map $z\c Z\to 2^Z$, and a morphism from a $\rho$-bialgebra $(W,w)$ to $(Z,z)$ is a map $h\c W\to Z$ making the diagram
\begin{equation}\label{eq:bialg-morph-ex}  
\begin{tikzcd}
W\ar{d}[swap]{h} \ar{rr}{w} & & 2^W \ar[equals]{d}\\
Z \ar{r}{z} & 2^Z \ar{r}{2^h} & 2^W
\end{tikzcd}
\end{equation}
commute. We claim that no final $\rho$-bialgebra exists, despite the endofunctor $B(\mS,-)=2^0\cong 1$ having a final coalgebra. Suppose for a contradiction that $(Z,z)$ is a final $\rho$-bialgebra. Choose an arbitrary $\rho$-bialgebra $(W,w)$ such that $\under{W}>\under{Z}$ and $w\c W\to 2^W$ is injective. Then no map $h\c W\to Z$ makes \eqref{eq:bialg-morph-ex} commute, since $w$ is injective but $h$ is not. 
\end{example}
On the positive side, we have an algebra structure
$\iotaq\c\Sigma(\mSq)\to\mSq$ by \Cref{th:main} and a
coalgebra structure $\varsigma\c \mSq\to B(\mSq,\mSq)$ by \Cref{lem:mSq-coalg}, and these combine to a $\rho$-bialgebra:\lsnote{@Sergey: corrected this, please check}
\begin{proposition}\label{prop:sim-bialg}
  Under the conditions of \Cref{th:main}, the triple $(\mSq,\iotaq,\varsigma)$ is a $\rho$-bialgebra.
\end{proposition}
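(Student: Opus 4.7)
The task is to verify the bialgebra pentagon for $(\mSq,\iotaq,\varsigma)$, namely
\[
  \varsigma\comp\iotaq \;=\; B(\id,\hat{\iotaq})\comp B(\id,\Sigmas\nabla)\comp \rho_{\mSq,\mSq}\comp \Sigma\brks{\id,\varsigma}.
\]
My plan is to precompose both sides with $\Sigma(\iter\iotaq)\c \Sigma(\mS)\to \Sigma(\mSq)$ and argue that this precomposition can be cancelled. By construction (see~\eqref{eq:it-i-quot-coeq}) $\iter\iotaq$ is the coequalizer of the reflexive pair $p_1^\star,p_2^\star$ (with common splitting $\eta_E\comp s$), hence a regular epimorphism; since $\Sigma$ preserves reflexive coequalizers, $\Sigma(\iter\iotaq)$ is again a regular epi, and in particular an epimorphism. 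It therefore suffices to check that the two composites agree after precomposition with $\Sigma(\iter\iotaq)$.

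For the left-hand side, using $\iotaq\comp \Sigma(\iter\iotaq) = \iter\iotaq\comp \iota$ from \eqref{eq:it-i-quot-coeq} together with $\varsigma\comp \iter\iotaq = \iotaq^\clubsuit$ from \Cref{lem:mSq-coalg}, I get
\[
  \varsigma\comp \iotaq \comp \Sigma(\iter\iotaq) \;=\; \varsigma\comp \iter\iotaq\comp \iota \;=\; \iotaq^\clubsuit\comp \iota.
\]
For the right-hand side, the key step is the identity
\[
  \Sigma\brks{\id_{\mSq},\varsigma}\comp \Sigma(\iter\iotaq) \;=\; \Sigma\brks{\iter\iotaq,\varsigma\comp\iter\iotaq} \;=\; \Sigma\brks{\iter\iotaq,\iotaq^\clubsuit},
\]
again by \Cref{lem:mSq-coalg}. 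Substituting, the precomposed right-hand side becomes precisely the lower path in the defining diagram of \Cref{lem:clubs} applied to the $\Sigma$-algebra $(\mSq,\iotaq)$, and hence equals $\iotaq^\clubsuit\comp \iota$.

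The two sides thus coincide after precomposition with the epimorphism $\Sigma(\iter\iotaq)$, which yields the desired equation. The only nontrivial point is the justification that $\Sigma(\iter\iotaq)$ is epic, which is exactly where the standing assumption that $\Sigma$ preserves reflexive coequalizers (already used in the proof of \Cref{th:main}) is invoked; the rest of the argument is a straightforward reshuffling of \Cref{lem:clubs} and \Cref{lem:mSq-coalg}.
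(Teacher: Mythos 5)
Your proposal is correct and follows essentially the same route as the paper's proof: the paper establishes the bialgebra law by pasting the defining diagram of $\iotaq^\clubsuit$ (\Cref{lem:clubs} for the algebra $(\mSq,\iotaq)$) with the cells given by \Cref{lem:mSq-coalg} and the coequalizer square \eqref{eq:it-i-quot-coeq}, then cancels the epimorphism $\Sigma(\iter\iotaq)$, which is epic because $\Sigma$ preserves the reflexive coequalizer $\iter\iotaq$. Your equational reformulation of this diagram chase is faithful, including the justification of epicness.
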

The above results indicate that bialgebras remain a meaningful concept in higher-order GSOS, but in comparison to first-order GSOS are less useful as a tool for deriving congruence results.

\section[The Lambda-calculus]{The \texorpdfstring{$\boldsymbol{\lambda}$}{$\lambda$}-calculus}
\label{sec:lam}

We now depart from combinatory calculus and move to languages with
variable binding, starting with the all-important (untyped) $\lambda$-calculus.
The $\lambda$-calculus comes in various flavours, such as
\emph{call-by-name} or \emph{call-by-value}, and the respective
operational semantics can be formulated in either big-step or
small-step style. For the purposes of our work, we are going to give a
categorical treatment of the small-step call-by-name and the small-step
call-by-value $\lambda$-calculus. We start with the former, whose operational semantics is
presented in \Cref{fig:lambda}. Here, $p,p',q$ range over possibly open $\lambda$-terms and $[q/x]$ denotes capture-avoiding substitution of the term $q$ for the variable $x$.
%
\begin{figure}[h]
  \[
    \begin{array}{l@{\qquad}l@{\qquad}l}
      \inference[\texttt{app1}]{\goes{p}{p\pr}} {\goes{p \app q}{p\pr \app q}}
      &
        \inference[\texttt{app2}]{}{\goes{(\lambda x.p) \app q}{p[q/x]}}
    \end{array}
  \]
  \caption{Small-step operational semantics of the call-by-name $\lambda$-calculus.}
  \label{fig:lambda}
\end{figure}
%
%

\noindent The operational semantics of the call-by-name
$\lambda$-calculus induces a deterministic transition relation $\to$ on the set of $\lambda$-terms modulo $\alpha$-equivalence. Every $\lambda$-term $t$ either \emph{reduces} ($t\to t'$ for some~$t'$) or is in \emph{weak head normal form}, that is, $t$ is a $\lambda$-abstraction $\lambda x.t'$ or of the form $x \app s_{1} \app s_{2} \app \cdots \app
  s_{k}$ for a variable $x$ and terms $s_1,\ldots,s_k$ ($k\geq 0$). As usual, we let application associate to the left: $t_1\app t_2\app t_3 \app \cdots \app t_n$ means $(\cdots((t_1\app t_2)\app t_3) \cdots) \app t_n$.


On the side of program equivalences, $\lambda$-calculus semantics can
be roughly divided into three kinds: \emph{applicative
  bisimilarity}~\cite{Abramsky:lazylambda}, \emph{normal form
  bisimilarity}~\cite{DBLP:conf/lics/Lassen05} and \emph{environmental
  bisimilarity}~\cite{DBLP:conf/lics/SangiorgiKS07}. We are looking to
give a coalgebraic account of \emph{strong} versions of applicative bisimilarity, see \Cref{def:strong-app} and \Cref{prop:bisim-vs-appbisim}.

\subsection{The Presheaf Approach to Higher-Order Languages}
\label{sec:prelims}

\citet{DBLP:conf/lics/FiorePT99} propose the
presheaf category $\vcat$ as a setting for algebraic signatures with
variable binding, such as the $\lambda$-calculus and the
$\pi$-calculus. We review some of the core ideas from their work as
well as follow-up work by \citet{DBLP:conf/lics/FioreT01}.

Let $\fset$ be the category of finite cardinals, the skeleton of the
category of finite sets. Objects in $\fset$ are thus
sets $n=\{0,\dots,n-1\}\;(n \in \Nat)$, and morphisms $n\to m$ are
functions. The category $\fset$ has a canonical coproduct structure
\begin{equation}
  \label{coproduct}
  n \xrightarrow{\oname{old}_{n}} n + 1 \xleftarrow{\oname{new}_{n}} 1
\end{equation}
where $\oname{old}_{n}(i)=i$ and $\oname{new}_{n}(0)=n$.
Notice the appropriate naming of the coproduct injections: The idea is
that each object $n \in \fset$ is an untyped context of $n$ free
variables, while morphisms $n \to m$ are variable
\emph{renamings}. When extending a context along
$\oname{old}_{n}(i)=i$, we understand the pre-existing elements
of~$n$ as the ``old'' variables, and the added element
$\oname{new}_{n}(0)$ as the ``new'' variable. The coproduct structure
of $\fset$ gives rise to three fundamental operations on contexts,
\emph{exchanging}, \emph{weakening} and \emph{contraction}:
\begin{equation}
  \label{eq:ops}
  \begin{aligned}
    \oname{s} &= [\oname{new}_{1},\oname{old}_{1}] \c 2 \to 2, \\
    \oname{w} &= \oname{old}_{0} \c 0 \to 1, \\
    \oname{c} &= [\id_{1},\id_{1}] \c 2 \to 1.
  \end{aligned}
\end{equation}
We think of a presheaf $X\in \vcat$ as a collection of terms: elements of $X(n)$ are ``$X$-terms'' with
free variables from the set $n=\{0,\ldots,n-1\}$, and for each $r\c n\to m$ the map $X(r)\c X(n)\to X(m)$ sends a term $t\in X(n)$ to the term $X(r)(t)\in X(m)$ obtained by renaming the free variables of $t$ according to $r$.

\begin{example}\label{ex:presheaves}
\begin{enumerate}
\item\label{ex:presheaves-v} The simplest example is the presheaf
  $V\in \vcat$ of variables, defined by
  \begin{equation*}
    V(n) = n \qquad\text{and}\qquad V(r) = r.
  \end{equation*}
  Thus, a $V$-term at stage $n$ is simply a choice of a variable $i \in
  n$.

\item\label{ex:presheaves-sigma} For every algebraic signature
  $\Sigma$, the presheaf $\Sigmas\in \vcat$ of $\Sigma$-terms is given
  by the domain restriction of the free monad on $\Sigma$ to
  $\fset$. Thus $\Sigmas(n)$ is the set of $\Sigma$-terms in variables
  from $n$.
\item\label{ex:presheaves-lambda} The presheaf $\Lambda\in \vcat$ of
  $\lambda$-terms is given by%
  \smnote{In the second line I find the substitution confusing without
    the round inner brackets because $/$ seems to bind stronger than
    forming lists.} 
  \begin{align*}
    \Lambda(n) &= \text{$\lambda$-terms modulo
      $\alpha$-equivalence with free variables from $n$, and}
    \\
    \Lambda(r)(t) & = t[(r(0),\ldots,r(n-1))/(0,\ldots,n-1)]\qquad
    \text{for $r\colon n\to m$}.
  \end{align*}
\end{enumerate}
\end{example}
The idea of substituting terms for variables can be treated at the abstract level of presheaves as follows. For every presheaf $Y \in \vcat$, there is a functor
\begin{equation}\label{eq:tensor}
  \textstyle
  - \mathbin{\mon} Y \c \vcat \to \vcat, \qquad (X \mathbin{\mon} Y)(m) = \int^{n \in \fset} X(n) \product (Y(m))^{n} = \bigl(\coprod_{n\in \fset} X(n)\times (Y(m))^n\bigr)/\approx, 
\end{equation}
where $\approx$ is the equivalence relation generated by all pairs
\[
  (x, y_0,\ldots, y_{n-1}) \approx (x', y_0',\ldots, y_{k-1}')
\]
such that $(x, y_0,\ldots, y_{n-1})\in X(n)\times Y(m)^n$, $(x', y_0',\ldots, y_{k-1}')\in X(k)\times Y(m)^k$ and there
exists $r\c n\to k$ satisfying $x'=X(r)(x)$ and $y_{i}=y'_{r(i)}$ for
$i=0,\ldots, n-1$.  An equivalence class in $(X \mathbin{\mon} Y)(m)$
can be thought of as a term $x\in X(n)$ with $n$ free variables,
together with~$n$ terms $y_0,\ldots,y_{n-1}\in Y(m)$ to be substituted
for them. The above equivalence relation then says that the outcome of the substitution should be invariant under renamings that reflect equalities among $y_1,\ldots,y_n$; for instance, if $y_i=y_j$ and $r\c n\to n$ is the bijective renaming that swaps $i$ and $j$, then substituting $y_1,\ldots, y_n$ for $1,\ldots,n$ in the term $X(r)(x)$ should produce the same outcome. Varying~$Y$,
one obtains the \emph{substitution tensor}
$\argument \mon \argument \c \vcat \product \vcat \to
\vcat$, 
which makes $\vcat$ into a (non-symmetric) monoidal category with
unit~$V$, the presheaf of variables.  Monoids in $(\vcat, \mon, V)$ can be seen as collections of terms equipped with a
substitution structure. 

For given~$Y$, the functor
$- \mathbin{\mon} Y \c \vcat \to \vcat$ has a right adjoint\footnote[1]{\citet{DBLP:conf/lics/FiorePT99} denote the right adjoint by $\langle Y,-\rangle$; we use $\llangle Y,-\rrangle$ instead to distinguish from morphisms into products.}
\begin{equation*}
  \llangle Y, \argument \rrangle \c \vcat \to \vcat, \qquad \llangle Y,W \rrangle(n) = \int_{m \in \fset} [(Y(m))^{n}, W(m)] = \NT(Y^{n} , W).
\end{equation*}
That is, an element of $\llangle Y,W \rrangle(n)$ is a natural family
of maps $Y(m)^n\to W(m)$, to be thought of as describing the substitution of $Y$-terms in $m$ variables for the $n$ variables of a fixed term, resulting in  a $W$-term in $m$ variables. Thus, $(\vcat, \mon, V)$ is in fact a closed monoidal category.

\subsection{Syntax}


%
Variable binding is captured by the \emph{context extension} endofunctor \[\delta \c \vcat \to \vcat\]
defined on objects by
\begin{equation*}
  \delta X (n) = X(n + 1) \qquad\text{and}\qquad \delta X (h) = X(h + \id_{1})
\end{equation*}
and on morphisms $h \c X \to Y$ by
\begin{equation*}
  (\delta h)_{n} =  \big(X(n + 1) \xra{h_{n + 1}} Y(n + 1)\big).
\end{equation*}
Informally, the elements of $\delta X(n)$ arise by binding the last variable in
an $X$-term with $n+1$ free variables.
The operations $\oname{s},\oname{w},\oname{c}$ on contexts, see \eqref{eq:ops},
give rise to natural transformations
\[
  \oname{swap} \c \delta^{2} \to \delta^{2},
  \qquad
  \oname{up} \c \Id \to \delta \qquad\text{and}\qquad
  \oname{contract} \c \delta^{2} \to \delta
\]
in $\vcat$, which correspond respectively to
the actions of swapping the two ``newest'' variables in a term, weakening and
contraction. Their components are defined by 
\begin{equation}
  \label{eq:ops}
  \begin{aligned}
    \oname{swap}_{X,n}
    &=  \big(\,X(n + 2) \xra{X(\id_{n} + \oname{s})} X(n + 2)\,\big), \\
    \oname{up}_{X,n}
    &= \big(\,X(n) \xra{X(\id_{n} + \oname{w})} X(n + 1)\,\big), \\
    \oname{contract}_{X,n}
    &=  \big(\,X(n + 2) \xra{X(\id_{n} + \oname{c})} X(n + 1)\,\big).
  \end{aligned}
\end{equation}

The presheaf $V$ of variables
(\Cref{ex:presheaves}\ref{ex:presheaves-v}) and the endofunctor
$\delta$ are the two main constructs that enable the categorical
modelling of syntax with variable binding. For example, the binding
signature of the $\lambda$-calculus corresponds to the endofunctor
\begin{equation}\label{eq:syn}
  \Sigma \c \vcat \to \vcat,\qquad \Sigma X = V + \delta X + X \product X.
\end{equation}
This is analogous to algebraic signatures determining (polynomial)
endofunctors on $\set$. For $\Sigma$ as in~\eqref{eq:syn}, the
forgetful functor $\alg{\Sigma} \to \vcat$ has a left adjoint that
takes a presheaf $X \in \vcat$ to the free algebra $\Sigma$-algebra
$\Sigmas X$. In particular, the initial algebra $\mS$ is the presheaf
$\Lambda$ of $\lambda$-terms
(\Cref{ex:presheaves}\ref{ex:presheaves-lambda}), a key observation
which is an instance of \cite[Thm.~2.1]{DBLP:conf/lics/FiorePT99}.


\removeThmBraces
\begin{proposition}[\cite{DBLP:conf/lics/FiorePT99}]
  \label{prop:sub}
  The presheaf $\Lambda = \mS$ of
  $\lambda$-terms admits the structure of a monoid $(\Lambda,
  \mu , \eta)$ in $(\vcat, \mon, V)$ whose unit $\eta\c V\to \Lambda$ is the inclusion of variables and whose multiplication $\mu\colon \Lambda\bullet \Lambda \to \Lambda$ is the uncurried form of the natural transformation $\bar \mu\c  \Lambda\to \llangle \Lambda,\Lambda \rrangle$ given by 
\[ \bar\mu_n \c \Lambda(n)\to \llangle \Lambda,\Lambda \rrangle(n)=\NT(\Lambda^n,\Lambda),\qquad  t\mapsto \lambda \vec{u}\in \Lambda(m)^n. \,t[\vec{u}]. \]
Here, $t[\vec{u}]$ denotes the simultaneous substitution $t[(u_0,\ldots, u_{n-1}) / (0, \ldots, {n-1})]$. 
\end{proposition}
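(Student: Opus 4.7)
The plan is to define $\bar\mu$ by initiality and then verify the monoid laws via universal property arguments. Concretely, I would equip $\llangle \Lambda,\Lambda\rrangle$ with a $\Sigma$-algebra structure and take $\bar\mu\c\Lambda\to \llangle \Lambda,\Lambda\rrangle$ to be the unique $\Sigma$-algebra morphism out of the initial algebra $\Lambda$.

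First, I would construct an algebra structure $s\c \Sigma \llangle \Lambda,\Lambda\rrangle \to \llangle \Lambda,\Lambda\rrangle$, that is, a triple of morphisms handling the three summands $V + \delta(\argument) + (\argument)\times(\argument)$ of $\Sigma$. For the $V$-summand, I would send $i\in n = V(n)$ to the natural projection family $\lambda \vec{u}\in \Lambda(m)^n.\,u_i$ in $\NT(\Lambda^n,\Lambda)$. For the product summand, given $\phi,\psi\in \NT(\Lambda^n,\Lambda)$, I take the family $\vec{u}\mapsto \phi_m(\vec{u}) \app^{\Lambda(m)} \psi_m(\vec{u})$, using the application component of the $\Sigma$-algebra on $\Lambda$. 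The $\delta$-summand is where binding comes in: given $\phi\in \NT(\Lambda^{n+1},\Lambda) = (\delta \llangle\Lambda,\Lambda\rrangle)(n)$, I produce the family in $\NT(\Lambda^n,\Lambda)$ defined by $\vec{u}\mapsto \lambda^{\Lambda(m)}\bigl(\phi_{m+1}(\oname{up}_{\Lambda,m}(\vec{u}),\eta_{m+1}(m))\bigr)$, that is, one weakens each $u_i$ to have a fresh free variable and adjoins the variable itself, feeds the resulting $(n{+}1)$-tuple into $\phi_{m+1}$, and then applies the lambda-abstraction operation from the $\Sigma$-algebra of $\Lambda$. Naturality in $m$ has to be checked, using naturality of $\oname{up}$ and $\eta$ together with the fact that $\Sigma$-algebra operations on $\Lambda$ are natural. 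With this structure in place, $\bar\mu := \iter s \c \Lambda\to \llangle\Lambda,\Lambda\rrangle$ exists uniquely.

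Next, I define $\eta\c V\to \Lambda$ as the inclusion $V\hookrightarrow \Sigma\Lambda\xrightarrow{\iota}\Lambda$, and take $\mu\c \Lambda\bullet \Lambda \to \Lambda$ as the uncurried form of $\bar\mu$ under the adjunction $(-\bullet Y)\dashv \llangle Y,-\rrangle$. To prove the three monoid laws
\[ \mu\comp(\eta\bullet \id_\Lambda) = \id_\Lambda,\qquad \mu\comp(\id_\Lambda\bullet\eta) = \id_\Lambda, \qquad \mu\comp(\mu\bullet \id_\Lambda) = \mu\comp(\id_\Lambda\bullet\mu), \]
the strategy is to transpose everything back to $\llangle \Lambda,\Lambda\rrangle$-land and then exhibit both sides as $\Sigma$-algebra morphisms from $\Lambda$ into a common target, concluding equality by initiality. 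For the left unit law, both sides transpose to $\bar\mu$-style maps which one checks are $\Sigma$-algebra morphisms into $\llangle V,\Lambda\rrangle$ agreeing with $\bar\mu$ precomposed with an obvious map. For the right unit law, $\mu\comp(\id\bullet\eta)=\id_\Lambda$ amounts to the statement that substituting each variable for itself is the identity, which reduces to showing $\id_\Lambda$ is an algebra morphism $\Lambda\to\Lambda$ and applying uniqueness. Associativity is established analogously: both sides, after transposition, are $\Sigma$-algebra morphisms $\Lambda\to \llangle \Lambda\bullet\Lambda,\Lambda\rrangle$ agreeing on generators, hence equal.

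The main obstacle is the $\delta$-case of the algebra structure $s$ and the associativity law, both of which require careful bookkeeping of weakening. In particular, one needs that $\oname{up}$ interacts coherently with the $\Sigma$-algebra operations of $\Lambda$ (so that the algebra structure $s$ is well defined as a natural family) and that iterated weakenings commute with substitution in the expected way. Fiore, Plotkin and Turi handle this by observing that $\delta$ comes with a strength/distributive law with respect to $\bullet$, which makes the coherence go through; I would follow their route. All other steps reduce to diagram chases using the universal properties of $\Lambda$ as initial algebra and of $\llangle -,-\rrangle$ as right adjoint to $-\bullet-$.
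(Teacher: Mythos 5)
The paper gives no proof of this proposition itself: it is imported from Fiore, Plotkin and Turi, and your plan is essentially a reconstruction of their argument --- endow $\llangle\Lambda,\Lambda\rrangle$ with a $\Sigma$-algebra structure (using the point of $\Lambda$ and the strength of $\delta$ in the binder case), obtain $\bar\mu$ by initiality, and derive the unit and associativity laws by uniqueness of $\Sigma$-algebra morphisms out of $\mu\Sigma$, with the $\delta$-coherence discharged exactly as in FPT. This is sound and consistent with how the paper itself constructs $\gamma_1$ and proves \Cref{prop:gamma1}; the only step left implicit is the routine structural induction identifying the initiality-defined $\bar\mu$ with the concrete substitution map $t\mapsto\lambda\vec{u}.\,t[\vec{u}]$ asserted in the statement.
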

\resetCurThmBraces

\subsection{Behaviour}
To capture the $\lambda$-calculus in the abstract categorical setting of higher-order GSOS laws developed in \Cref{sec:hogsos}, we consider the behaviour bifunctor 
\begin{equation}
  \label{def:beh}
  B \c (\vcat)^{\opp} \product \vcat \to \vcat,\qquad B(X,Y) = \llangle X,Y \rrangle \product (Y + Y^{X} + 1),
\end{equation}
where $Y^X$ denotes the exponential object in the topos $\vcat$.
Our intended operational model is a $B(\Lambda,-)$-coalgebra structure
\begin{equation}\label{eq:lambda-op-model}
\langle \gamma_{1},\gamma_{2}\rangle \c \Lambda \to
B(\Lambda,\Lambda)
\end{equation}
on the presheaf of $\lambda$-terms. For each term $t\in \Lambda(n)$, the natural transformation $\gamma_1(t)\c \Lambda^n\to \Lambda$ exposes the simultaneous substitution structure, that is, $\gamma_1(t)$ is equal to $\bar \mu(t)$ from \Cref{prop:sub}. Similarly,
 $\gamma_{2}(t)$ is an element of the coproduct
$\Lambda(n) + \Lambda^{\Lambda}(n) + 1$, representing either a reduction step, a
$\lambda$-abstraction seen as a function on terms, or that $t$ is stuck. To apply the higher-order GSOS framework, let us first note that one of its key assumptions holds:
\begin{lemma}\label{lem:final-coalgebra}
  For every $X\in \vcat$ the functor $B(X,-)$ has a final coalgebra.
\end{lemma}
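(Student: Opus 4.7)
The strategy I would take is to argue that $B(X,-)\c\vcat\to\vcat$ is an accessible endofunctor on the locally finitely presentable category $\vcat=\Set^{\fset}$, and then to invoke the classical fact that every accessible endofunctor on a locally presentable category admits a final coalgebra (see e.g.~\cite{AdamekEA11}). Concretely, the final coalgebra then arises as the limit of the transfinite terminal sequence $1\xleftarrow{} B(X,1)\xleftarrow{} B(X,B(X,1))\xleftarrow{}\cdots$, which stabilizes once its length exceeds the accessibility rank of $B(X,-)$. Local finite presentability of $\vcat$ is automatic since it is a presheaf category over a small category, and in particular a Grothendieck topos.

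To establish accessibility I would inspect the formula $B(X,Y)=\llangle X,Y\rrangle\times (Y+Y^X+1)$ componentwise in $Y$. Because the contravariant slot has been fixed, both $Y\mapsto\llangle X,Y\rrangle$ and $Y\mapsto Y^X$ are right adjoints, to $-\mon X$ and to $-\times X$ respectively (the latter using that the topos $\vcat$ is Cartesian closed, the former by the closed monoidal structure of Section~\ref{sec:prelims}). Right adjoints between locally presentable categories are automatically accessible, so each of these functors preserves $\kappa$-filtered colimits for a sufficiently large regular cardinal $\kappa$. The identity functor, the constant $1$, and finite products and coproducts are accessible as well, and accessibility is closed under these constructions; hence $B(X,-)$ is $\kappa$-accessible, where $\kappa$ may be chosen as any regular cardinal such that $X$ is $\kappa$-presentable in $\vcat$.

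I do not expect a genuine obstacle along this route: having fixed the contravariant argument turns the internal-hom and exponential into right adjoints and thereby circumvents the subtleties that would arise for the bifunctor $B$ itself. The only real technical work is bundled into the standard existence theorem for final coalgebras of accessible endofunctors on locally presentable categories, which I would cite rather than reprove.
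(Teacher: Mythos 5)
Your argument is essentially correct, but it takes a genuinely different route from the paper. The paper's proof is elementary and more informative: it notes that $\llangle X,-\rrangle$ and $(-)^X$ are right adjoints and hence preserve \emph{all} limits, and that limits of $\omega^{\opp}$-chains commute with the products and coproducts occurring in $B(X,Y)=\llangle X,Y\rrangle\times(Y+Y^X+1)$ (such limits are connected, and (co)limits in $\vcat$ are computed pointwise); hence $B(X,-)$ preserves limits of $\omega^{\opp}$-chains, and by the dual of the classical construction of \citet{adamek74} the final coalgebra is obtained as the limit of the terminal $\omega^{\opp}$-chain, i.e.\ the terminal sequence already converges at stage $\omega$. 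You instead package everything into accessibility: $\vcat$ is locally finitely presentable, the two right adjoints $\llangle X,-\rrangle$ and $(-)^X$ are accessible, accessibility is stable under the remaining product/coproduct constructions, and accessible endofunctors on locally presentable categories have final coalgebras. This is a valid argument and in a sense more robust -- it would survive, for instance, replacing a summand of $B(X,-)$ by the finite powerset functor, where $\omega$-convergence of the terminal sequence fails -- but it leans on substantially heavier machinery and yields no explicit description of the final coalgebra, whereas the paper's argument exhibits it concretely as an $\omega^{\opp}$-limit. One side remark in your plan is inaccurate, though inessential: the terminal sequence of a $\kappa$-accessible functor need \emph{not} stabilize as soon as its length exceeds the accessibility rank (the finite powerset functor is finitary, yet its terminal sequence converges only at $\omega+\omega$, by Worrell's well-known result); the existence theorem you cite does not depend on that claim, so your proof stands, but that justification of convergence should be dropped or corrected.
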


Since $B(X,-)$ preserves pullbacks, behavioural equivalence on
$B(X,-)$-coalgebras coincides with coalgebraic
bisimilarity~\cite{DBLP:journals/tcs/Rutten00}. Recall that a
\emph{bisimulation} between $B(X,-)$-coalgebras $W \to B(X,W)$ and
$Z \to B(X,Z)$ is a presheaf $R\seq W\times Z$ that can be equipped
with a coalgebra structure $R\to B(X,R)$ such that the two projection
maps $R\to W$ and $R\to Z$ are $B(X,-)$-coalgebra
morphisms. The following proposition gives an elementary
characterization of bisimulations.


\begin{notation}\label{N:exp}
  Recall~\cite[Sec.~I.6]{MacLaneMoerdijk92} that the exponential $Y^X$ in $\vcat$
  and its evaluation morphism $\ev\c Y^X \times X \to Y$ are, respectively, given by
  \[
    Y^X(n) = \NT((\argument)^n \times X, Y)
    \qquad
    \text{and}
    \qquad
    \ev_n(f, x) = f_n(\id_n, x) \in Y(n)
  \]
  for a natural transformation $f\c (\argument)^n \times X \to Y$ and an element $x
  \in X(n)$. In the following we put \[f(x) \,:=\,\ev_n(f,x).\]
\end{notation}
\begin{proposition}
  \label{prop:bisim}
  Given $X\in \vcat$ and two $B(X,\argument)$-coalgebras 
\[\langle c_{1},c_{2}\rangle \c W \to \llangle X,W\rrangle \times (W+W^X+1)\qquad\text{and}\qquad \langle d_{1},d_{2}\rangle \c Z \to \llangle X,Z\rrangle\times(Z+Z^X+1),\] a family of relations
  $R(n)\seq W(n)\times Z(n)$, $n\in \fset$, is a bisimulation if and
  only if for all $n\in \fset$ and $w \mathbin{R(n)} z$ the following hold (omitting 
  subscripts of components of the natural transformations
  $c_i$, $d_i$):
  \begin{enumerate}
  \item $W(r)(w) \mathbin{R(m)} Z(r)(z)$ for all $r \c n \to m$;
  \item\label{bisim:2} $c_{1}(w)(\vec{u})
    \mathbin{R(m)} d_{1}(z)(\vec{u})$ for all $m \in \fset$ and $\vec{u} \in X(m)^{n}$;
  \item\label{bisim:3} $c_{2}(w) =: w\pr \in W(n) \implies d_{2}(z) =: z\pr\in Z(n) \wedge w\pr \mathbin{R(n)} z\pr$;
  \item $c_{2}(w) =: f \in W^{X}(n) \implies d_{2}(z) =: g\in Z^X(n) \wedge \forall e \in
    X(n).\,f(e) \mathbin{R(n)} g(e)$;
  \item $c_{2}(w) = \ast \implies d_{2}(z) = \ast$;
  \item $d_{2}(z) =: z\pr \in Z(n) \implies c_{2}(w) =: w\pr\in W(n) \wedge w\pr \mathbin{R(n)} z\pr$;
  \item $d_{2}(z) =: g \in Z^{X}(n) \implies c_{2}(w) =: f\in W^X(n) \wedge
    \forall e \in X(n).~f(e) \mathbin{R(n)} g(e)$;
  \item \label{bisim:8} $d_{2}(z) = \ast \implies c_{2}(w) = \ast$.
  \end{enumerate}
\end{proposition}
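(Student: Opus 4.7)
The plan is to unfold the definition of bisimulation (a subpresheaf $R \hookrightarrow W \times Z$ equipped with a $B(X,-)$-coalgebra structure making both projections coalgebra morphisms) and show that it decomposes cleanly into the eight conditions, grouped along the product structure of $B(X,-) = \llangle X,-\rrangle \times (- + (-)^X + 1)$.

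First, I would verify that condition~(1) is precisely the statement that the family $R(n) \subseteq W(n) \times Z(n)$ forms a subpresheaf of $W \times Z$, which is the very first requirement for~$R$ to be a bisimulation. From now on assume~(1).

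Next, I would observe that equipping $R$ with a coalgebra $\langle e_1, e_2 \rangle \c R \to \llangle X, R \rrangle \times (R + R^X + 1)$ making the two projections $\pi_W \c R \to W$ and $\pi_Z \c R \to Z$ into coalgebra morphisms amounts to two independent conditions, one for each factor of the product, since $B(X,-)$ preserves products componentwise in the second argument. I would then treat the two factors separately.

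For the $\llangle X,-\rrangle$-factor, the coalgebra morphism conditions say that $\llangle X, \pi_W\rrangle \comp e_1 = c_1 \comp \pi_W$ and $\llangle X, \pi_Z\rrangle \comp e_1 = d_1 \comp \pi_Z$, which by the adjunction $(-\mon X) \dashv \llangle X,-\rrangle$ (and unwinding the coend defining $\mon$) is exactly the pointwise condition that for every $(w,z) \in R(n)$ and every $\vec u \in X(m)^n$, the pair $(c_1(w)(\vec u), d_1(z)(\vec u))$ again lies in~$R(m)$. This is condition~(2), and conversely, condition~(2) gives a well-defined natural $e_1$ by setting $e_1(w,z)(\vec u) = (c_1(w)(\vec u),\, d_1(z)(\vec u))$.

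For the coproduct factor $- + (-)^X + 1$, the coalgebra morphism conditions force, at each stage $n$ and each $(w,z) \in R(n)$, that $c_2(w)$ and $d_2(z)$ fall in the \emph{same} summand, and that the witnesses be $R$-related: the three cases of $c_2(w)$ being a reduct, a function, or~$\unit$, together with the symmetric cases starting from $d_2(z)$, correspond exactly to items~(3)--(8). Conversely, these items determine a value $e_2(w,z) \in R(n) + R^X(n) + 1$, and naturality of $e_2$ follows from naturality of $c_2, d_2$ together with condition~(1). I expect the only mildly delicate step here to be the case $c_2(w) \in W^X(n)$: one must check that the assignment $e \mapsto (f(e), g(e))$ actually lives in $R^X(n)$, which via \Cref{N:exp} requires naturality in the extra parameter; this follows from naturality of $f = c_2(w)$ and $g = d_2(z)$ combined with condition~(1) applied to the renamings arising in the definition of exponentials.

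Assembling the two components yields the desired coalgebra structure on $R$, and the two arguments above are manifestly reversible, giving both directions of the equivalence.
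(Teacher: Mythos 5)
Your plan follows essentially the same route as the paper's proof: condition (1) is exactly the subpresheaf requirement, the hom-object component of the coalgebra structure on $R$ is the pairing of $c_1(w)$ and $d_1(z)$ and corresponds to condition (2), and the coproduct component is defined by the case analysis recorded in (3)--(8), with (6)--(8) handled symmetrically. The only place where your justification is too thin is the step you yourself flag as delicate: to see that $e\mapsto (f(e),g(e))$ defines an element of $R^X(n)=\NT((\argument)^n\times X,R)$ you must check that $(f_m(r,e'),g_m(r,e'))\in R(m)$ for \emph{all} $r\c n\to m$ and $e'\in X(m)$, and since such a pair $(r,e')$ is in general not in the naturality-image of any $(\id_n,e)$ with $e\in X(n)$, naturality of $f$ and $g$ together with condition (1) does not suffice. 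What is actually needed is naturality of the coalgebra structures $c_2,d_2$, which identifies $W^X(r)(f)=c_2(W(r)(w))$ and $Z^X(r)(g)=d_2(Z(r)(z))$, followed by a re-application of condition (4) at stage $m$ to the pair $(W(r)(w),Z(r)(z))$, which is $R(m)$-related by (1). The paper packages this bookkeeping by splitting $R=R_0+R_1+R_2$ via extensivity and currying a globally defined natural transformation $h\c R_1\times X\to R$ with $h_m((w',z'),e)=(c_2(w')(e),d_2(z')(e))$, whose componentwise well-typedness is precisely condition (4) at each stage; it also obtains naturality of the resulting coalgebra structure on $R$ for free from the fact that the two maps into the codomains $B(X,W)$ and $B(X,Z)$ are jointly monic, rather than verifying it by hand as you propose. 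With that one step repaired, your argument goes through and is the same proof.
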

\noindent Condition (1) states that $B(X,-)$-bisimulations are compatible with
the renaming of free variables: given a renaming $r \c n \to m$, the renamed
terms $W(r)(w)$ and $Z(r)(z)$ are related by $R(m)$. Similarly, condition (2)
states that $B(X,-)$-bisimulations are compatible with substitutions: given a
substitution $\vec{u} \in X(m)^{n}$, the resulting terms $c_{1}(w)(\vec{u})$ and
$d_{1}(z)(\vec{u})$ are related by $R(m)$. Conditions (6)--(8) are symmetric to
(3)--(5); in fact, since $B(X,-)$-coalgebras are
deterministic transition systems, the former conditions imply the latter for
every bisimulation $R$. We opted to state (6)--(8) explicitly, as
these conditions become relevant in nondeterministic extensions of the
$\lambda$-calculus.

\subsection{Semantics}


As explained above, in our intended operational model
$\langle \gamma_{1},\gamma_{2}\rangle \c \Lambda \to B(\Lambda,\Lambda)$ the component $\gamma_1$ should be the transpose of the monoid multiplication
$\mu \c \Lambda \bullet \Lambda \to \Lambda$ from \Cref{prop:sub} under
the adjunction
$\argument \mon \Lambda \dashv \llangle \Lambda , \argument
\rrangle$. As an interesting technical subtlety, for this model to be induced by a
higher-order GSOS law $\rho_{X,Y}$ of some sort, the argument $X$ is required to be 
equipped with a \emph{point} $\var \c V \to X$. The importance
of points for defining substitution was first identified by
\cite{DBLP:conf/lics/FiorePT99} (see also \cite{DBLP:conf/lics/Fiore08}) and is
worth recalling from its original source.

Fiore et al.~argued that, given an endofunctor $F \c \vcat \to \vcat$,
in order to define a substitution structure
$F^{\star}V \bullet F^{\star}V \to F^{\star}V$ on the free $F$-algebra
over $V$, it is necessary for $F$ to be \emph{tensorially strong}, in
that there is a natural transformation
$\strength_{X,Y} \c FX \bullet Y \to F(X \bullet Y)$ satisfying the expected
coherence laws~\cite[§ 3]{DBLP:conf/lics/FiorePT99}. For the
special case of~$F$ being the context extension endofunctor $\delta$,
this
 requires the presheaf $Y$
to be equipped with a \emph{point} $\var \c V \to Y$: the strength map
$\strength_{X,Y} \c \delta X \bullet Y \to \delta(X \bullet Y)$ is
given at $m\in\fset$ by%
\smnote{I changed $n$ to $m$ here; otherwise the typing makes no
  sense whatsoever.}
\[
  [t \in X(n+1),\vec{u} \in Y(m)^{n}] \quad\xmapsto{\strength_{X,Y}}\quad
  [t,(\oname{up}_{Y,m}(\vec{u}),\var_{m+1}(\oname{new}_{m}))\in Y(m+1)^{n+1}];
\]
see
\eqref{eq:tensor} for the definition of $\bullet$. Intuitively, given
a substitution of length $n$ on a term with $n+1$ free variables, a
fresh variable in $Y$ should be used to (sensibly) produce a
substitution of length $n+1$.  This situation is relevant in the
context of higher-order GSOS laws of binding signatures over $B(X,Y)$
where, e.g. in the case of the $\lambda$-calculus, one is asked to
define a map of the form (factoring out the unnecessary parts)
\begin{equation}
  \label{eq:rhodelta}
  \rho_{1} \c \delta\llangle X,Y \rrangle \to \llangle X,\delta Y \rrangle.
\end{equation}
Writing $\overline{\strength}$ for the transpose of $\strength$ under $\argument
\mon X \dashv \llangle X , \argument \rrangle$, we obtain $\rho_{1}$ simply as
\begin{equation*}
  \begin{tikzcd}
    \delta\llangle X,Y \rrangle
    \arrow[rr,"\overline{\strength}_{\llangle X,Y \rrangle,X}"]
    & & \llangle X, \delta(\llangle X,Y \rrangle \bullet X)\rrangle
    \arrow[rr,"\llangle X {,} \delta (\varepsilon) \rrangle"]
    & & \llangle X, \delta Y \rrangle,
  \end{tikzcd}
\end{equation*}
where $\varepsilon\c \llangle X,Y \rrangle \bullet X\to Y$ is the evaluation morphism for the hom-object  $\llangle X,
Y\rrangle$. The map $\rho_{1}$ should be considered a form of \emph{capture-avoiding
  substitution}, as in elementary terms it takes a natural transformation $f\colon X^{n+1}\to Y$ to the natural transformation $\rho_1(f)\c X^n\to \delta Y$ given by
\[
  \vec{u}
  \in X(m)^{n} \quad \mapsto \quad f_{m+1}(\oname{up}_{X,m}(\vec{u}),\var_{m+1}(\oname{new}_m)) \in Y(m+1).
\]
Thus $\rho_{1}$ represents simultaneous substitution in which the freshest
variable is bound, hence it should not be substituted. 
At the same time, a higher-order GSOS law for the $\lambda$-calculus
needs to turn a $\lambda$-abstraction into a function on
potentially open terms precisely by only substituting the bound
variable. This implies that we need natural transformation of the form
\begin{equation}
  \label{eq:rholam2}
  \rho_{2} \c \delta\llangle X,Y \rrangle \to Y^{X}.
\end{equation}
Again, we make use of the point $\var\c V \to X$ to produce $\rho_{2}$:
\begin{equation*}
  \begin{tikzcd}
    \delta\llangle X,Y \rrangle
    \arrow[r,"\cong"]
    & \llangle X,Y^{X} \rrangle
    \arrow[rr,"\llangle \var {,} Y^{X} \rrangle"]
    & & \llangle V, Y^{X}\rrangle
    \arrow[r,"\cong"]
    & Y^{X}.
  \end{tikzcd}
\end{equation*}
Here, the first isomorphism is given by
\[ \delta\llangle X,Y\rrangle(n) = \NT(X^{n+1},Y) \cong \NT(X^n,Y^X) = \llangle
  X,Y^X\rrangle (n). \]
Thus, in elementary terms,
\[ \rho_2(f)(e) = f_{n}(\var_n(0),\ldots, \var_n(n-1),e)\qquad\text{for $f\c X^{n+1}\to Y$ and $e\in X(n)$.} \]

With these preparations at hand, we are now ready to define the small-step operational semantics of the
call-by-name $\lambda$-calculus in terms of a
$V$-pointed higher-order GSOS law of the syntax endofunctor $\Sigma X = V+\delta X + X \product X$ over the behaviour bifunctor
$B(X,Y)=\llangle X,Y\rrangle \times (Y+Y^X+1)$. A law of this type is given by a family of presheaf maps
\[
\begin{tikzcd}
  V+\delta(X\times \llangle X,Y\rrangle \product (Y + Y^{X} + 1) ) + (X\times \llangle X,Y \rrangle \product (Y + Y^{X} + 1))^2 \ar{d}{\rho_{X,Y}}  \\
 \llangle X,\Sigmas(X+Y)  \rrangle \product (\Sigmas(X+Y) + (\Sigmas(X+Y))^{X} + 1)
\end{tikzcd}
\]
dinatural in $(X,\var_X)\in V/\vcat$ and natural in $Y\in \vcat$. We let $\rho_{X,Y,n}$ denote the component of $\rho_{X,Y}$ at $n\in \fset$.
\begin{notation}
  We write
  \[\lambda.(-)\c \delta\Sigmas\to \Sigmas \qquad\text{and}\qquad
    \circ\c \Sigmas\times \Sigmas\to \Sigmas \] for the natural
  transformations whose components come from the $\Sigma$-algebra structure
  on free $\Sigma$-algebras; here $\circ$ denotes application. In the following we will consider free
  algebras of the form $\Sigmas(X+Y)$. For simplicity, we usually keep inclusion maps implicit: Given $t_1,t_2\in X(n)$ and
  $t_1'\in Y(n)$ we write $t_1\app t_2$
  for $[\eta\comp \inl(t_1)]\circ [\eta\comp \inl(t_2)]$, and similarly
  $t_1\app t_1'$ for $[\eta\comp\inl(t_1)]\circ [\eta\comp \inr(t_1')]$
  etc., where $\inl$ and $\inr$ are the coproduct injections and $\eta\c \Id\to\Sigmas$ is the unit of the free monad $\Sigmas$.
\end{notation}
\begin{notation}
  Let \[\pi\colon V\to \llangle X,\Sigmas(X+Y)\rrangle \] be the adjoint
  transpose of
  \[
    V\bullet X \xto{\cong} X \xto{\inl} X+Y \xto{\eta}
    \Sigmas(X+Y).
  \]
  Thus for $v\in V(n)=n$, the natural transformation
  $\pi(v)(n)\colon X^n\to \Sigmas(X+Y)$ is the $v$-th projection
  $X^n\to X$ followed by $\eta\comp \inl$. Further, recall 
  that $j \c \Pt/\vcat \to \vcat$ denotes the forgetful functor. 
\end{notation}

\begin{definition}[$V$-pointed higher-order GSOS law for the call-by-name
  $\lambda$-calculus]
  \label{def:lamgsos}
  \begin{align*}
    \rho^{\cn}_{X,Y} \c \quad & \Sigma(jX \times B(jX,Y)) & \to \quad & B(jX, \Sigma^\star (jX+Y)) \\
  \rho^{\cn}_{X,Y,n}(tr) = \quad & \texttt{case}~tr~\texttt{of} \\
  & v \in V(n) & \mapsto \quad & \pi(v),* & \\
  & \mathsf{\lambda}.(t, f,\_) & \mapsto
    \quad & \llangle X, \lambda.(-)\comp \eta \comp \inr \rrangle (\rho_{1}(f)),
   (\eta \comp \inr)^{X}(\rho_{2}(f)) & \\
  & (t_{1}, g, t_{1}\pr) \app (t_{2}, h,\_) & \mapsto
    \quad & \lambda \vec{u}. ( g_{m}(\vec{u}) \app h_{m}(\vec{u}) ),t_{1}\pr \app t_{2} & \\
  & (t_{1}, g, k) \app (t_{2}, h,\_) & \mapsto
    \quad & \lambda \vec{u}. (g_{m}(\vec{u}) \app h_{m}(\vec{u})),\eta \comp \inr \comp k(t_{2}) & \\
  & (t_{1}, g, *) \app (t_{2}, h,\_) & \mapsto
    \quad & \lambda \vec{u}. (g_{m}(\vec{u}) \app h_{m}(\vec{u})),* &
\end{align*}
where $t\in \delta X(n)$, $f\in \delta\llangle X,Y\rrangle(n)$, $g,h\in \llangle X,Y\rrangle(n)$, $\vec{u} \in
X(m)^{n}$ for $m \in \mathbb{N}$, $k\in Y^X(n)$, $t_1, t_2\in X(n)$
and $t_1'\in Y(n)$ (we have omitted the brackets around the pairs on
the right).
\end{definition}

In \Cref{fig:cn} the definition of $\rho^\cn$ is rephrased in the
familiar style of inference rules. Here, the notation $[\ldots]$ and $\to$
refers to the first and second component of $B(jX,Y)$, respectively, and
labelled arrows correspond to function application. For instance, the rule \texttt{lam} expresses that for every $t\in \delta X(n)$, $f\in \delta\llangle X,Y\rrangle$ and $e\in X(n)$, putting $\vec{u} = (\var_n(0),\ldots, \var_n(n-1),e)$ and $t'=f(\vec{u})$, the second component of $\rho_{X,Y,n}(\lambda.(t,f,\_))$ lies in $Y^X(n)$ and satisfies $\rho_{X,Y,n}(\lambda.(t,f,\_))(e)=t'$. This matches precisely the corresponding clause of \Cref{def:lamgsos}.

 \begin{figure}[h]
   \[
     \begin{array}{l@{\qquad}l}
       \inference[\texttt{var}]{}{v \not\xrightarrow{}}
       & \inference[\texttt{lam}]{
       \vec{u} = (\var_n(0),\ldots, \var_n(n-1),e\in X(n))
       & t[\vec{u}] = t'}{\goesv{\lambda.t}{t\pr}{e}}
     \end{array}
   \]
   \[
     \begin{array}{l@{\qquad}l@{\qquad}l}
       \inference[\texttt{app1}]{\goes{t_{1}}{t_{1}\pr}}{\goes{t_{1} \app t_{2}}{t_{1}\pr \app t_{2}}}
       & \inference[\texttt{app2}]{t_{1} \xrightarrow{t_{2}} t_{1}\pr}{\goes{t_{1} \app t_{2}}{t_{1}\pr}}
       &  \inference[\texttt{app3}]{t_{1}\not\xrightarrow{}}{t_{1}\app t_{2} \not\xrightarrow{}}
     \end{array}
   \]
  
   \[
     \begin{array}{l@{\qquad}l}
       \inference[\texttt{varSub}]{}{v[\vec{u}] = \vec{u}(v)}
       &
         \inference[\texttt{lamSub}]{\vec{w} =
         (\oname{up}_{X,m}(\vec{u}),\var_{m+1}(\oname{new}_m)) & t[\vec{w}]         
         =t^{\prime\prime}}{(\lambda. t)[\vec{u}]=
         \lambda. t^{\prime\prime}}
     \end{array}
   \]
  
   \[
     \inference[\texttt{appSub}]{t_{1}[\vec{u}]=
       t_{1}\pr & t_{2}[\vec{u}]=t_{2}\pr}{(t_{1} \app
       t_{2})[\vec{u}]=t_{1}\pr \app t_{2}\pr}
   \]
   \caption{Law $\rho^{\cn}$ in the form of inference rules.}
   \label{fig:cn}
 \end{figure}

%

\noindent Since $\mu\Sigma\cong \Lambda$, the operational model of $\rho^\cn$ is given by a coalgebra \begin{equation}\label{eq:op-model-lambda}
  \iota^\clubsuit=\langle \gamma_{1},\gamma_{2}\rangle \c \Lambda \to
  \llangle \Lambda,\Lambda \rrangle \times (\Lambda + \Lambda^\Lambda+1).
\end{equation}
The following two propositions assert that it
matches the intended model described in
\eqref{eq:lambda-op-model}, that is, its first component correctly
exposes the substitution structure of $\lambda$-terms and its second
component yields the transition system $\to$ on $\lambda$-terms
derived from the operational semantics in \Cref{fig:lambda}.

\begin{proposition}\label{prop:gamma1}
For every $m,n\in\fset$, $t\in \Lambda(n)$ and $\vec{u}\in \Lambda(m)^n$, we have
\[ \gamma_1(t)(\vec{u}) = t[\vec{u}]. \]
\end{proposition}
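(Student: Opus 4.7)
The strategy is structural induction on $t \in \Lambda = \mu\Sigma$, directly unfolding the defining equation of $\iota^\clubsuit$ obtained from \autoref{lem:clubs} instantiated at the initial algebra $(\Lambda,\iota)$. Since $\iter\iota = \id_\Lambda$, this equation simplifies to
\[
  \iota^\clubsuit \comp \iota \;=\; B(\id,\hat\iota)\comp B(\id,\Sigmas\nabla) \comp \rho^{\cn}_{\Lambda,\Lambda} \comp \Sigma\brks{\id,\iota^\clubsuit},
\]
which allows us to compute $\gamma_1(t) = \fst \comp \iota^\clubsuit(t)$ case-by-case from the clauses of $\rho^{\cn}$ in \autoref{def:lamgsos}.

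First I would treat the base case $t = v \in V(n) \subseteq \Lambda(n)$. The variable clause of $\rho^{\cn}$ gives $\gamma_1(v) = \pi(v)$, and unwinding the definition of $\pi$ yields $\pi(v)(\vec{u}) = \vec{u}(v)$, which matches $v[\vec{u}]$ by \texttt{varSub}. Next, for the application case $t = t_1 \app t_2$, the third clause of $\rho^{\cn}$ (together with the action of $\Sigmas\nabla$ and $\hat\iota$, which simply translate the syntactic $\app$ back to the $\Sigma$-algebra structure on $\Lambda$) gives $\gamma_1(t_1 \app t_2)(\vec{u}) = \gamma_1(t_1)(\vec{u}) \app \gamma_1(t_2)(\vec{u})$; applying the induction hypothesis and the rule \texttt{appSub} closes this case.

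The main obstacle, and the only case requiring genuine care, is the abstraction $t = \lambda.t_0$ with $t_0 \in \Lambda(n+1)$. Here the $\lambda$-clause of $\rho^{\cn}$ yields, on the first component,
\[
  \gamma_1(\lambda.t_0) \;=\; \hat\iota \comp \Sigmas\nabla \comp \llangle \Lambda,\lambda.(-)\comp\eta\comp\inr\rrangle(\rho_1(\gamma_1(t_0))).
\]
Evaluating at $\vec{u}\in\Lambda(m)^n$ and using the explicit description of $\rho_1$ given just after~\eqref{eq:rhodelta}, this reduces to
\[
  \gamma_1(\lambda.t_0)(\vec{u}) \;=\; \lambda.\bigl(\gamma_1(t_0)(\oname{up}_{\Lambda,m}(\vec{u}),\var_{m+1}(\oname{new}_m))\bigr),
\]
because $\hat\iota$ simply interprets the outer $\Sigmas$-layer in $\Lambda$ and $\eta\comp\inr$ followed by $\Sigmas\nabla$ leaves the underlying $\Lambda$-term unchanged. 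Writing $\vec{w} = (\oname{up}_{\Lambda,m}(\vec{u}),\var_{m+1}(\oname{new}_m))\in \Lambda(m+1)^{n+1}$ and applying the induction hypothesis at stage $m+1$ gives $\gamma_1(t_0)(\vec w) = t_0[\vec w]$; then the rule \texttt{lamSub} yields $(\lambda.t_0)[\vec{u}] = \lambda.t_0[\vec w]$, matching the right-hand side.

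The hard part is therefore the bookkeeping of the binder: one must carefully verify that the combination of $\rho_1$ (which uses the point $\var$ to introduce a fresh variable) and the natural transformation $\llangle\Lambda,\lambda.(-)\comp\eta\comp\inr\rrangle$ exactly implements the clause \texttt{lamSub}, i.e.\ weakens the substituted terms by $\oname{up}$ and assigns the new variable $\oname{new}_m$ to the freshly bound position. Once this identification is made explicit using Notation~\ref{N:exp} and the definitions of $\oname{up}$ and $\var$, the induction goes through.
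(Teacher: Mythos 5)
Your proposal is correct and follows essentially the same route as the paper's own proof: structural induction on $t$, unfolding $\gamma_1$ via the defining diagram of $\iota^\clubsuit$ and the clauses of $\rho^{\cn}$, with the variable case handled by $\pi$, the application case by the common first component $\lambda\vec{u}.(g_m(\vec{u})\app h_m(\vec{u}))$, and the abstraction case by the explicit description of $\rho_1$ together with \texttt{lamSub}. The only nitpick is that in the application case you cite ``the third clause'' of $\rho^{\cn}$, whereas the matched clause depends on $\gamma_2(t_1)$; this is harmless since all three application clauses share the same first component.
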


\begin{proposition}\label{prop:gamma2}
For every $n\in\fset$ and $t\in \Lambda(n)$, exactly one the following statements holds:
\begin{enumerate}
\item $\gamma_2(t) \in \Lambda(n)$ and $t\to \gamma_2(t)$;
\item $\gamma_2(t)\in \Lambda^\Lambda(n)$, $t=\lambda x.t'$ for some $t'$, and $\gamma_2(t)(e)=t'[e/x]$ for every $e\in \Lambda(n)$;
\item $\gamma_2(t)=\ast$ and $t=x\app s_1\app\cdots\app s_k$ for some $x\in V(n)$ and $s_1,\ldots, s_k\in \Lambda(n)$, where $k\geq 0$.
\end{enumerate}
\end{proposition}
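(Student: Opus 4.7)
The plan is to proceed by structural induction on $t\in\Lambda(n)$, exploiting the fact that $\Lambda\cong\mu\Sigma$ is freely generated by the three constructors $V$, $\lambda.(-)$ and $\app$. Throughout, we unfold $\gamma_2 = \snd\comp \iota^\clubsuit$ using the defining equation of $\iota^\clubsuit$ from \Cref{lem:clubs} together with the case analysis of $\rho^\cn$ in \Cref{def:lamgsos} (or equivalently, the rules in \Cref{fig:cn}). Exclusivity of the three alternatives is immediate from the fact that $\Lambda+\Lambda^\Lambda+1$ is a coproduct with disjoint injections, so it suffices to verify existence in each case.

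For $t=v\in V(n)$, the variable clause of $\rho^\cn$ gives $\gamma_2(v)=\ast$; writing $v=x$ with empty spine ($k=0$), we are in case (3). For $t=\lambda.t'$ with $t'\in\delta\Lambda(n)$, the $\lambda$-clause yields
\[
  \gamma_2(t) \;=\; (\eta\comp\inr)^X(\rho_2(f))
  \;\in\; \Lambda^{\Lambda}(n),
\]
where, by \Cref{rem:a-clubsuit-sigma-morphism}, $f = (\delta\gamma_1)(t')$ is the second component of $\langle\id,\gamma_1\rangle$ applied to $t'$ at stage $n+1$. Unfolding the definition of $\rho_2$ (\Cref{eq:rholam2}), we obtain
\[
  \gamma_2(t)(e) \;=\; \gamma_1(t')(\var_n(0),\ldots,\var_n(n-1),e),
\]
and by \Cref{prop:gamma1} this equals $t'[\var_n(0),\ldots,\var_n(n-1),e]$, i.e.\ the capture-avoiding substitution of $e$ for the bound variable of $\lambda x.t'$. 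Hence case (2) holds.

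Finally, for $t=t_1\app t_2$ we invoke the induction hypothesis on $t_1$ and discriminate on which summand $\gamma_2(t_1)$ inhabits; this precisely matches the three application-clauses of $\rho^\cn$. If $\gamma_2(t_1)=t_1'\in\Lambda(n)$, then \texttt{app1} of \Cref{fig:cn} (together with the corresponding clause of $\rho^\cn$) gives $\gamma_2(t)=t_1'\app t_2\in\Lambda(n)$; combined with $t_1\to t_1'$ from the IH this yields $t\to \gamma_2(t)$ by rule \texttt{app1} of \Cref{fig:lambda}, so case (1) holds. If $\gamma_2(t_1)=k\in\Lambda^\Lambda(n)$, then by IH $t_1=\lambda x.t_1''$ and $k(e)=t_1''[e/x]$; the corresponding clause of $\rho^\cn$ returns $\gamma_2(t)=k(t_2)=t_1''[t_2/x]\in\Lambda(n)$, and rule \texttt{app2} of \Cref{fig:lambda} delivers the required reduction, so case (1) holds. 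If $\gamma_2(t_1)=\ast$, then by IH $t_1=x\app s_1\app\cdots\app s_k$, and the third application-clause returns $\gamma_2(t)=\ast$, so case (3) holds with spine $s_1,\ldots,s_k,t_2$.

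The only nontrivial calculation is the $\lambda$ case, where one must carefully track how the strength-based definition of $\rho_2$ (and the role of the point $\var\c V\to X$) interacts with the monoid multiplication $\mu$ that underlies $\gamma_1$ (via \Cref{prop:sub} and \Cref{prop:gamma1}). Concretely, one needs to check that substituting $\var_n(0),\ldots,\var_n(n-1)$ for the first $n$ variables of $t'\in\Lambda(n+1)$ and $e$ for the $(n{+}1)$-st coincides with the usual single-variable substitution $t'[e/x]$ on $\alpha$-equivalence classes; this is the point where the presheaf-theoretic machinery must be aligned with the informal $\lambda$-calculus notation, and constitutes the main technical obstacle.
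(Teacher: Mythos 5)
Your proof is correct and follows essentially the same route as the paper's: structural induction on $t$, unfolding $\iota^\clubsuit$ via \Cref{lem:clubs} and the clauses of $\rho^\cn$, and using \Cref{prop:gamma1} to identify $\gamma_1(t')(\var_n(0),\ldots,\var_n(n-1),e)$ with $t'[e/x]$ in the abstraction case. The only (cosmetic) difference is that in the application case you split on which summand $\gamma_2(t_1)$ inhabits, whereas the paper splits on the syntactic shape of $t_1$; by the induction hypothesis these case distinctions coincide.
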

Let ${\sim^{\Lambda}} \seq \Lambda \times \Lambda$ be the bisimilarity
relation (i.e.\ the greatest bisimulation) on the coalgebra \eqref{eq:op-model-lambda}.
It turns out that $\sim^\Lambda$ matches the strong variant of applicative bisimilarity~\cite{Abramsky:lazylambda}:
\begin{definition}\label{def:strong-app}
  \emph{Strong applicative bisimilarity} is the greatest relation $\sim^\ap_0 \,\subseteq\, \Lambda(0) \times \Lambda(0)$ on the set of closed $\lambda$-terms such that for 
  $t_{1} \sim^\ap_0 t_{2}$ the following conditions hold:
  \begin{align*}
    t_{1} \to t_{1}\pr
    & \;\implies\; \exists 
    t_{2}\pr.~t_{2} \to t_{2}\pr \wedge t_{1}\pr \sim^\ap_0 t_{2}\pr;
    \tag*{(A1)} \\
    t_{1} = \lambda x.t_1'
    &\;\implies\; \exists t_2'.\,t_{2} = \lambda x.t_2' \wedge \forall e \in
    \Lambda(0).~t_1'[e/x]\sim^\ap_0t_2'[e/x];
    \tag*{(A2)} \\
    t_{2} \to t_{2}\pr
    &\;\implies\; \exists t_{1}\pr.~t_{1} \to t_{1}\pr \wedge t_{1}\pr \sim^\ap_0 t_{2}\pr;
    \tag*{(A3)} \\
    t_{2} = \lambda x.t_2'
    &\;\implies\; \exists t_1'.\,t_{1} = \lambda x.t_1' \wedge \forall e \in
    \Lambda(0).~t_1'[e/x]\sim^\ap_0t_2'[e/x].\tag*{(A4)}
  \end{align*}
The \emph{open extension} of strong applicative bisimilarity is the relation $\sim^{\ap}\,\seq \Lambda\times \Lambda$ whose component $\sim^{\ap}_n\,\seq \Lambda(n)\times \Lambda(n)$ for $n>0$ is given by
\[ t_1 \sim^{\ap}_n t_2 \qquad \text{iff}\qquad t_1[\vec{u}] \sim^{\ap}_0 t_2[\vec{u}]\quad \text{for every $\vec{u}\in \Lambda(0)^n$}.\]
\end{definition}

\begin{proposition}\label{prop:bisim-vs-appbisim}
Bisimilarity coincides with the open extension of strong applicative bisimilarity:
\[ {\sim^\Lambda}\; =\, {\sim^\ap}. \]
\end{proposition}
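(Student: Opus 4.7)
The plan is to prove the two inclusions ${\sim^\Lambda}\subseteq{\sim^\ap}$ and ${\sim^\ap}\subseteq{\sim^\Lambda}$ separately, using the concrete characterization of bisimulations on $B(\Lambda,-)$-coalgebras given by \Cref{prop:bisim} together with the explicit descriptions of $\gamma_1$ and $\gamma_2$ from \Cref{prop:gamma1,prop:gamma2}.

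For the forward inclusion, I would first show that the closed restriction ${\sim^\Lambda_0}\seq\Lambda(0)\times\Lambda(0)$ is a strong applicative bisimulation. Instantiating conditions (3)--(8) of \Cref{prop:bisim} on closed terms and invoking \Cref{prop:gamma2} (noting that the third ``stuck'' summand of $\gamma_2$ is vacuous on closed terms, since every closed term either reduces or is a $\lambda$-abstraction) yields exactly (A1)--(A4). The lift to open terms then uses substitution compatibility (condition~(2) of \Cref{prop:bisim}) together with \Cref{prop:gamma1}: if $t_1\sim^\Lambda_n t_2$ and $\vec u\in\Lambda(0)^n$, then $t_1[\vec u]=\gamma_1(t_1)(\vec u)\sim^\Lambda_0\gamma_1(t_2)(\vec u)=t_2[\vec u]$, whence $t_1[\vec u]\sim^\ap_0 t_2[\vec u]$ by the closed case, giving $t_1\sim^\ap_n t_2$.

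For the reverse inclusion, I would verify that the family $R(n)={\sim^\ap_n}$ is itself a bisimulation by checking conditions (1)--(8) of \Cref{prop:bisim}. Compatibility with renaming~(1) and with open substitution~(2) follows from the associativity law $(t[\vec v])[\vec w]=t[\vec v[\vec w]]$: any open substitution $\vec v\in\Lambda(m)^n$ composes with a further closed $\vec w\in\Lambda(0)^m$ to yield a closed substitution $\vec v[\vec w]\in\Lambda(0)^n$, reducing the goal back to the very definition of $\sim^\ap_n$. For the behavioral conditions (3)--(8), I would transport every step $t_i\to t_i'$ or every $\lambda$-abstraction structure $t_i=\lambda x.s_i$ through an arbitrary closed $\vec u$, using that reduction and $\lambda$-binding commute with substitution, and then invoke (A1)--(A4) on the closed side. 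For the $\lambda$-clause specifically, I would first argue $\lambda x.t_1'\sim^\ap_n\lambda x.t_2' \Rightarrow t_1'\sim^\ap_{n+1} t_2'$ by probing with extended substitutions $(\vec u,e)$, then use the already-established substitution compatibility of $\sim^\ap$ to conclude $t_1'[e/x]\sim^\ap_n t_2'[e/x]$ for an \emph{arbitrary}, possibly open $e\in\Lambda(n)$, as demanded by \Cref{prop:bisim}(4).

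The main obstacle will be the \emph{shape discrimination} subgoal embedded in the reverse inclusion: showing that $t_1\sim^\ap_n t_2$ forces $\gamma_2(t_1)$ and $\gamma_2(t_2)$ to land in the same summand of $\Lambda(n)+\Lambda^\Lambda(n)+1$, i.e.\ both reducing, both $\lambda$-abstractions, or both head-variable stuck. Cross-shape mismatches have to be ruled out by carefully constructed closed substitutions, exploiting that there are no stuck closed terms: to distinguish, for instance, a stuck open term $y\app s_1\app\cdots\app s_k$ from a $\lambda$-abstraction or a reducing term, one instantiates $u_y$ with a well-chosen closed witness (e.g.\ $\lambda z.z$ when $k\geq 1$, to force an observable head reduction, or a divergent term such as $\Omega$ when $k=0$) so that the closed behavior of $t_2[\vec u]$ becomes incompatible with that of $t_1[\vec u]$ under clauses (A1)--(A4), contradicting $t_1\sim^\ap_n t_2$.
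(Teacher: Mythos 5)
Your overall route is the same as the paper's: both directions are reduced, via \Cref{prop:bisim,prop:gamma1,prop:gamma2}, to (i) showing that the closed restriction $\sim^\Lambda_0$ is a strong applicative bisimulation and lifting along substitution closure, and (ii) showing that $\sim^\ap$ itself satisfies the bisimulation clauses, with the $\lambda$-clause for open $e$ handled by the same substitution identity $t_1'[e/x][\vec u]=t_1'[\vec u,x][e[\vec u]/x]$ that the paper uses. Those parts of your proposal are fine.

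The genuine gap is in the step you yourself flag as the main obstacle, namely ruling out that a \emph{reducing} term $t_1$ is applicatively bisimilar to a head-stuck term $t_2=x\app s_1\app\cdots\app s_k$. Your recipe chooses the witness substitution from the shape of $t_2$ alone ($\lambda z.z$ if $k\geq 1$, a divergent term if $k=0$), and that cannot work: the head variable $x$ may also occur in $t_1$, and after substitution both sides keep reducing, so ``forcing a head reduction'' on $t_2[\vec u]$ produces no immediate mismatch under \emph{strong} (step-counting) bisimilarity. Concretely, for $k=0$ take $t_2=x$ and $t_1=(\lambda y.y\app y)\app(\lambda y.y\app y)$: your prescribed witness $u_x$ divergent makes $t_1[\vec u]$ and $t_2[\vec u]$ both diverge without ever becoming abstractions, hence bisimilar, so no contradiction is produced (the pair is in fact distinguished by $u_x=\lambda z.z$, i.e.\ by a witness your recipe does not supply). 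Similarly, for $k\geq1$ take $t_2=x\app \Omega$ and $t_1=((\lambda z.z)\app x)\app\Omega$ with $\Omega$ divergent: under $u_x=\lambda z.z$ both sides reduce forever without reaching an abstraction, so again no mismatch; a distinguishing witness must supply enough leading abstractions (here $u_x=\lambda x_1.\lambda y.y$) to create a step-counted difference in when an abstraction is reached. This is exactly why the paper's proof of clause (3) performs a case analysis on the reduction behaviour of $t_1$ relative to $k$: whether $t_1$ reduces for at least $k$ steps, and whether the resulting term $\ol{t_1}$ is an abstraction, an application, the head variable $x$ itself with $l$ versus $k-m$ remaining arguments, or a different variable $y$; accordingly $u_x$ is chosen to be divergent, or $\lambda x_1.\cdots\lambda x_k.\lambda y.y$, or $\lambda x_1.\cdots\lambda x_k.t$ with $t$ a closed non-abstraction, sometimes also constraining $u_y$. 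Without a witness that depends on $t_1$ in this way, the discrimination argument fails, so this part of your proposal needs to be replaced by (something like) the paper's case analysis.
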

It is not difficult to verify that the present setting satisfies the conditions of our general compositionality result (\Cref{th:main}). Thus, from the latter and \Cref{prop:bisim-vs-appbisim} we get

\begin{corollary}
  \label{cor:cong}
 The open extension $\sim^\ap$ of strong applicative bisimilarity is a congruence.
\end{corollary}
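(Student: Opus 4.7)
The plan is to derive \Cref{cor:cong} by combining the main compositionality result (\Cref{th:main}) applied to the higher-order GSOS law $\rho^\cn$ (\Cref{def:lamgsos}) with the identification of coalgebraic bisimilarity and the open extension of strong applicative bisimilarity (\Cref{prop:bisim-vs-appbisim}). The bulk of the work is to verify the hypotheses of \Cref{th:main} for the present setting, and then to translate the abstract conclusion (the kernel pair of $\coiter\iota^\clubsuit$ is a congruence) into the elementary statement about $\sim^{\ap}$.

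First I would check that the base category $\vcat = \Set^{\fset}$ is regular: this is a standard fact for presheaf categories, since finite limits, coequalizers, and image factorizations are computed pointwise from $\Set$, and regular epis are precisely the componentwise surjective natural transformations. Next I would verify that $\Sigma X = V + \delta X + X\times X$ preserves reflexive coequalizers. Since colimits in $\vcat$ are pointwise, it suffices to note that (i)~the constant functor $X\mapsto V$ and the binary product functor $X\mapsto X\times X$ preserve reflexive coequalizers by the classical result for $\Set$ (\Cref{rem:reflexive}), and (ii)~$\delta$, being precomposition with $(-)+1\c\fset\to\fset$, preserves all colimits.

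The third hypothesis is that $B(X,Y) = \llangle X,Y\rrangle \times (Y + Y^X + 1)$ preserves monomorphisms, which in $\vcat^{\opp}\times \vcat$ means that regular epis in the contravariant slot and monos in the covariant slot are sent to monos. Covariance in $Y$ is straightforward: products, coproducts, constants, the exponential $(-)^X$ in the topos $\vcat$, and $\llangle X,-\rrangle$ (which is a right adjoint) all preserve monomorphisms. For contravariance in $X$, a componentwise surjection $f\c X\epito X'$ gives rise to injections $\llangle X',Y\rrangle \monoto \llangle X,Y\rrangle$ and $Y^{X'}\monoto Y^X$: in both cases, at stage $n$ the induced map is precomposition with $f^n$ (resp.\ $\id\times f$), which is injective on natural transformations since $f_k$ is surjective at every $k\in\fset$.

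Having checked the hypotheses, \Cref{th:main} yields that the kernel pair of the behaviour map $\coiter\iota^\clubsuit\c\Lambda \to Z$ is a congruence. Since $B(\Lambda,-)$ preserves pullbacks, behavioural equivalence coincides with coalgebraic bisimilarity, so this kernel pair equals $\sim^\Lambda$ componentwise, and its quotient carries a $\Sigma$-algebra structure making $q\c\Lambda\epito \Lambda/{\sim^\Lambda}$ a $\Sigma$-algebra morphism. Unfolding the definition of $\Sigma$, this is exactly the statement that $\sim^\Lambda$ is closed under variable inclusion, $\lambda$-abstraction, and application, i.e.\ $\sim^\Lambda$ is a congruence in the algebraic sense for the $\lambda$-calculus syntax. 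Invoking \Cref{prop:bisim-vs-appbisim} to rewrite $\sim^\Lambda$ as $\sim^{\ap}$ yields the claim. The only subtle step to be careful about is the translation between the categorical notion of congruence (existence of an algebra structure on the coequalizer) and the elementary one (compatibility with each operation); this is mediated by the fact that the $\Sigma$-algebra structure on $\Lambda$ provided by $\iota$ realises the $\lambda$-calculus constructors pointwise in $\fset$, so that naturality of $q$ delivers compatibility both with renamings and with each syntactic constructor simultaneously.
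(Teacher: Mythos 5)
Your proposal is correct and follows essentially the same route as the paper's own proof: verify regularity of $\vcat$, preservation of reflexive coequalizers by $\Sigma$ (the paper justifies cocontinuity of $\delta$ via its right adjoint $\llangle V+1,-\rrangle$ rather than via precomposition, but this is the same fact), and preservation of monomorphisms by $B$ in both arguments, then invoke \Cref{th:main} and \Cref{prop:bisim-vs-appbisim}. Your extra paragraph spelling out the passage from the categorical notion of congruence to the elementary one is a harmless elaboration of what the paper leaves implicit.
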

While the above results are for the call-by-name
$\lambda$-calculus (\Cref{fig:lambda}), the call-by value $\lambda$-calculus
(\Cref{fig:cbv}) can be treated in an analogous manner.
\begin{figure}
  \[
    \begin{array}{l@{\qquad}l@{\qquad}l}
     \inference[\texttt{app1}]{ \quad q = \lambda x.\_}
        {\goes{(\lambda x.p) \app q}{p[q/x]}}
      & \inference[\texttt{app2}]{\goes{p}{p\pr}}{\goes{p \app q}{p\pr \app
        q}}
      & \inference[\texttt{app3}]{\goes{q}{q\pr}}{\goes{(\lambda x.p) \app q}{(\lambda x.p) \app q\pr}}
    \end{array}
  \]
  \caption{Small-step operational semantics of the call-by-value $\lambda$-calculus.}
  \label{fig:cbv}
\end{figure}%
The corresponding higher-order GSOS law differs from the one in \Cref{def:lamgsos}
only in the case of application on closed terms.
\begin{definition}[$V$-pointed higher-order GSOS law of the call-by-value $\lambda$-calculus]
  \begin{align*}
    \rho^{\cv}_{X,Y} \c \quad & \Sigma(jX \times B(jX,Y)) & \to \quad & B(jX, \Sigma^\star (jX+Y)) \\
    \rho^{\cv}_{X,Y,n}(tr) = \quad & \texttt{case}~tr~\texttt{of} \\
  & v \in V(n) & \mapsto \quad & \pi(v),* & \\
  & \mathsf{\lambda}.(t, f,\_) & \mapsto
    \quad & \llangle X, \lambda.(-)\comp \eta \comp \inr \rrangle (\rho_{1}(f)),
   (\eta \comp \inr)^{X}(\rho_{2}(f)) & \\
  & (t_{1}, g, t_{1}\pr) \app (t_{2}, h,\_) & \mapsto
    \quad & \lambda \vec{u}. ( g_{m}(\vec{u}) \app h_{m}(\vec{u}) ),t_{1}\pr \app t_{2} & \\
    & (t_{1}, g, k) \app (t_{2}, h,t_{2}\pr) & \mapsto
      \quad & \lambda \vec{u}. (g_{m}(\vec{u}) \app h_{m}(\vec{u})),t_{1} \app t_{2}\pr & \\
  & (t_{1}, g, k) \app (t_{2}, h,\_) & \mapsto
    \quad & \lambda \vec{u}. (g_{m}(\vec{u}) \app h_{m}(\vec{u})),\eta \comp \inr \comp k(t_{2}) & \\
  & (t_{1}, g, *) \app (t_{2}, h,\_) & \mapsto
    \quad & \lambda \vec{u}. (g_{m}(\vec{u}) \app h_{m}(\vec{u})),* &
  \end{align*}
  where $t\in \delta X(n)$, $f\in \delta\llangle X,Y\rrangle(n)$, $g,h\in
  \llangle X,Y\rrangle(n)$, $\vec{u} \in X(m)^{n}$ for $m \in \mathbb{N}$, $k\in
  Y^X(n)$, $t_1, t_2\in X(n)$ and $t_1',t_{2}'\in Y(n)$ (once again
  the brackets around the pairs on the right are omitted).
\end{definition}
Applying \Cref{th:main} for the call-by-value $\lambda$-calculus shows
that coalgebraic bisimilarity, as expressed in \Cref{prop:bisim}, is a
congruence. However, unlike the case for the call-by-name
$\lambda$-calculus, coalgebraic bisimilarity does not correspond to a
strong version of call-by-value applicative
bisimilarity~\cite{pitts_2011}: The former relates terms if they
exhibit the same behaviour when applied to arbitrary closed terms,
while the latter considers only application to \emph{values}. Capturing call-by-value applicative
bisimilarity in the coalgebraic framework is left as an open problem.





\section{Conclusions and Future Work}
\label{sec:concl}

We have introduced the notion of \emph{pointed higher-order GSOS law},
effectively transferring the principles behind Turi and Plotkin's
bialgebraic framework~\shortcite{DBLP:conf/lics/TuriP97} to higher-order
languages. We have demonstrated that, under mild assumptions, bisimilarity in
systems given as
pointed higher-order GSOS laws is a congruence, a result
guaranteeing the compositionality of semantics within our abstract
framework. In addition, we have implemented the $\SKI$ calculus as
well as the call-by-name and call-by-value $\lambda$-calculus as
pointed higher-order GSOS laws in suitable categories.

Currently, we have a general compositionality result for coalgebraic
bisimilarity which, in the case of the $\lambda$-calculus, amounts to
a strong variant of Abramsky's applicative bisimilarity. However,
applicative bisimilarity typically abstracts away from
$\beta$-reductions, treating them as invisible steps. This is an
example of \emph{weak} bisimilarity. We aim to extend our
compositionality result to weak bisimilarity, potentially involving
more restrictive versions of $\Pt$-pointed higher-order GSOS laws. One
possibility would be to look at comparable work on first-order
languages, for instance the \emph{cool} rule formats of
\citet{DBLP:journals/tcs/Bloom95} and
\citet{DBLP:journals/tcs/Glabbeek11} or the more abstract approach by
\citet{DBLP:conf/mfcs/0001WNDP21}, aimed at bialgebraic semantics. We
expect that abstract congruence results for higher-order weak
bisimilarity require the development of new proof techniques, such as
a categorical version of Howe's method~\cite{DBLP:conf/lics/Howe89,
  DBLP:journals/iandc/Howe96} suited for use with higher-order GSOS
laws, possibly taking inspiration from recent work on
Howe's method in the context of familial monads~\cite{DBLP:journals/lmcs/HirschowitzL22,DBLP:conf/lics/BorthelleHL20}.

\begin{sloppypar}
Another relevant direction is the extension of our framework to programming
languages with effects, e.g.\ nondeterministic, probabilistic, or stateful
versions of the $\lambda$-calculus. A powerful approach to
compositionality results for such languages is given by environmental
bisimulations~\cite{DBLP:conf/lics/SangiorgiKS07}, which we aim to investigate
from the perspective of higher-order abstract GSOS.
\end{sloppypar}

Supporting typed programming languages is a further significant step
towards a unifying, compositional framework based on abstract GSOS.
In recent work, \citet{DBLP:conf/lics/ArkorF20} build on the ideas by
\citet{DBLP:conf/lics/FiorePT99} to give an abstract, algebraic
account of simple type theory by considering presheaves over a
category of \emph{typed} cartesian contexts, as opposed to the
category~$\fset$ of untyped cartesian contexts. In future work we will
aim for a similar development as in \Cref{sec:lam} to give
higher-order GSOS laws for typed $\lambda$-calculi and other typed
languages.

\begin{sloppypar}
 Finally, another goal of interest is to extend the
notion of a \emph{morphism of distributive
  laws}~\cite{DBLP:journals/entcs/Watanabe02, DBLP:conf/calco/KlinN15}
 to $\Pt$-pointed higher-order GSOS laws, in order to model compilers
of higher-order languages that preserve semantic properties across
compilation. This idea has been previously explored for
first-order GSOS laws \cite{DBLP:conf/cmcs/0001NDP20, DBLP:conf/aplas/AbateBT21}.
\end{sloppypar}

\begin{acks}                            
  Stelios Tsampas wishes to thank Andreas Nuyts and Christian Williams for the
  numerous and fruitful discussions.
\end{acks}

%
%
\clearpage

\bibliography{../common/mainBiblio}


\begin{thebibliography}{48}


\ifx \showCODEN    \undefined \def \showCODEN     #1{\unskip}     \fi
\ifx \showDOI      \undefined \def \showDOI       #1{#1}\fi
\ifx \showISBNx    \undefined \def \showISBNx     #1{\unskip}     \fi
\ifx \showISBNxiii \undefined \def \showISBNxiii  #1{\unskip}     \fi
\ifx \showISSN     \undefined \def \showISSN      #1{\unskip}     \fi
\ifx \showLCCN     \undefined \def \showLCCN      #1{\unskip}     \fi
\ifx \shownote     \undefined \def \shownote      #1{#1}          \fi
\ifx \showarticletitle \undefined \def \showarticletitle #1{#1}   \fi
\ifx \showURL      \undefined \def \showURL       {\relax}        \fi
\providecommand\bibfield[2]{#2}
\providecommand\bibinfo[2]{#2}
\providecommand\natexlab[1]{#1}
\providecommand\showeprint[2][]{arXiv:#2}

\bibitem[\protect\citeauthoryear{Abate, Busi, and Tsampas}{Abate
  et~al\mbox{.}}{2021}]%
        {DBLP:conf/aplas/AbateBT21}
\bibfield{author}{\bibinfo{person}{Carmine Abate}, \bibinfo{person}{Matteo
  Busi}, {and} \bibinfo{person}{Stelios Tsampas}.}
  \bibinfo{year}{2021}\natexlab{}.
\newblock \showarticletitle{Fully Abstract and Robust Compilation: And How to
  Reconcile the Two, Abstractly}. In \bibinfo{booktitle}{\emph{19th Asian
  Symposium on Programming Languages and Systems, APLAS'21}}
  \emph{(\bibinfo{series}{Lecture Notes Comput.~Sci.})},
  \bibfield{editor}{\bibinfo{person}{Hakjoo Oh}} (Ed.),
  Vol.~\bibinfo{volume}{13008}. \bibinfo{publisher}{Springer},
  \bibinfo{pages}{83--101}.
\newblock
\urldef\tempurl%
\url{https://doi.org/10.1007/978-3-030-89051-3\_6}
\showDOI{\tempurl}


\bibitem[\protect\citeauthoryear{Abramsky}{Abramsky}{1990}]%
        {Abramsky:lazylambda}
\bibfield{author}{\bibinfo{person}{S. Abramsky}.}
  \bibinfo{year}{1990}\natexlab{}.
\newblock \showarticletitle{The lazy $\lambda$-calculus}.
\newblock In \bibinfo{booktitle}{\emph{Research topics in Functional
  Programming}}. \bibinfo{publisher}{Addison Wesley}, \bibinfo{pages}{65--117}.
\newblock


\bibitem[\protect\citeauthoryear{Abramsky and Ong}{Abramsky and Ong}{1993}]%
        {DBLP:journals/iandc/AbramskyO93}
\bibfield{author}{\bibinfo{person}{Samson Abramsky} {and}
  \bibinfo{person}{C.{-}H.~Luke Ong}.} \bibinfo{year}{1993}\natexlab{}.
\newblock \showarticletitle{Full Abstraction in the Lazy Lambda Calculus}.
\newblock \bibinfo{journal}{\emph{Inf. Comput.}} \bibinfo{volume}{105},
  \bibinfo{number}{2} (\bibinfo{year}{1993}), \bibinfo{pages}{159--267}.
\newblock
\urldef\tempurl%
\url{https://doi.org/10.1006/inco.1993.1044}
\showDOI{\tempurl}


\bibitem[\protect\citeauthoryear{Ad\'amek}{Ad\'amek}{1974}]%
        {adamek74}
\bibfield{author}{\bibinfo{person}{Ji\v{r}\'i Ad\'amek}.}
  \bibinfo{year}{1974}\natexlab{}.
\newblock \showarticletitle{Free algebras and automata realizations in the
  language of categories}.
\newblock \bibinfo{journal}{\emph{Comment. Math. Univ. Carol.}}
  \bibinfo{volume}{15}, \bibinfo{number}{4} (\bibinfo{year}{1974}),
  \bibinfo{pages}{589--602}.
\newblock


\bibitem[\protect\citeauthoryear{Ad\'{a}mek, Rosick\'y, and Vitale}{Ad\'{a}mek
  et~al\mbox{.}}{2011}]%
        {AdamekEA11}
\bibfield{author}{\bibinfo{person}{Ji\v{r}\'\i\ Ad\'{a}mek},
  \bibinfo{person}{Ji\v{r}\'\i\ Rosick\'y}, {and} \bibinfo{person}{Enrico
  Vitale}.} \bibinfo{year}{2011}\natexlab{}.
\newblock \bibinfo{booktitle}{\emph{Algebraic Theories}}.
\newblock \bibinfo{publisher}{Cambridge University Press}.
\newblock


\bibitem[\protect\citeauthoryear{Arkor and Fiore}{Arkor and Fiore}{2020}]%
        {DBLP:conf/lics/ArkorF20}
\bibfield{author}{\bibinfo{person}{Nathanael Arkor} {and}
  \bibinfo{person}{Marcelo Fiore}.} \bibinfo{year}{2020}\natexlab{}.
\newblock \showarticletitle{Algebraic models of simple type theories: {A}
  polynomial approach}. In \bibinfo{booktitle}{\emph{35th Annual {ACM/IEEE}
  Symposium on Logic in Computer Science, LICS'20}},
  \bibfield{editor}{\bibinfo{person}{Holger Hermanns}, \bibinfo{person}{Lijun
  Zhang}, \bibinfo{person}{Naoki Kobayashi}, {and} \bibinfo{person}{Dale
  Miller}} (Eds.). \bibinfo{publisher}{{ACM}}, \bibinfo{pages}{88--101}.
\newblock
\urldef\tempurl%
\url{https://doi.org/10.1145/3373718.3394771}
\showDOI{\tempurl}


\bibitem[\protect\citeauthoryear{Barr}{Barr}{1970}]%
        {Barr70}
\bibfield{author}{\bibinfo{person}{Michael Barr}.}
  \bibinfo{year}{1970}\natexlab{}.
\newblock \showarticletitle{Coequalizers and free triples}.
\newblock \bibinfo{journal}{\emph{Math.~Z.}}  \bibinfo{volume}{116}
  (\bibinfo{year}{1970}), \bibinfo{pages}{307--322}.
\newblock


\bibitem[\protect\citeauthoryear{Bartels}{Bartels}{2004}]%
        {56f40c248cb44359beb3c28c3263838e}
\bibfield{author}{\bibinfo{person}{Falk Bartels}.}
  \bibinfo{year}{2004}\natexlab{}.
\newblock \emph{\bibinfo{title}{On generalised coinduction and probabilistic
  specification formats: Distributive laws in coalgebraic modelling}}.
\newblock \bibinfo{thesistype}{Ph.D. Dissertation}. \bibinfo{school}{Vrije
  Universiteit Amsterdam}.
\newblock


\bibitem[\protect\citeauthoryear{Bernstein}{Bernstein}{1998}]%
        {DBLP:conf/lics/Bernstein98}
\bibfield{author}{\bibinfo{person}{Karen~L. Bernstein}.}
  \bibinfo{year}{1998}\natexlab{}.
\newblock \showarticletitle{A Congruence Theorem for Structured Operational
  Semantics of Higher-Order Languages}. In \bibinfo{booktitle}{\emph{13th
  Annual {IEEE} Symposium on Logic in Computer Science, LICS'98}}.
  \bibinfo{publisher}{{IEEE} Computer Society}, \bibinfo{pages}{153--164}.
\newblock
\urldef\tempurl%
\url{https://doi.org/10.1109/LICS.1998.705652}
\showDOI{\tempurl}


\bibitem[\protect\citeauthoryear{Bloom}{Bloom}{1995}]%
        {DBLP:journals/tcs/Bloom95}
\bibfield{author}{\bibinfo{person}{Bard Bloom}.}
  \bibinfo{year}{1995}\natexlab{}.
\newblock \showarticletitle{Structural Operational Semantics for Weak
  Bisimulations}.
\newblock \bibinfo{journal}{\emph{Theor. Comput. Sci.}} \bibinfo{volume}{146},
  \bibinfo{number}{1{\&}2} (\bibinfo{year}{1995}), \bibinfo{pages}{25--68}.
\newblock
\urldef\tempurl%
\url{https://doi.org/10.1016/0304-3975(94)00152-9}
\showDOI{\tempurl}


\bibitem[\protect\citeauthoryear{Bloom, Istrail, and Meyer}{Bloom
  et~al\mbox{.}}{1995}]%
        {DBLP:journals/jacm/BloomIM95}
\bibfield{author}{\bibinfo{person}{Bard Bloom}, \bibinfo{person}{Sorin
  Istrail}, {and} \bibinfo{person}{Albert~R. Meyer}.}
  \bibinfo{year}{1995}\natexlab{}.
\newblock \showarticletitle{Bisimulation Can't be Traced}.
\newblock \bibinfo{journal}{\emph{J. {ACM}}} \bibinfo{volume}{42},
  \bibinfo{number}{1} (\bibinfo{year}{1995}), \bibinfo{pages}{232--268}.
\newblock
\urldef\tempurl%
\url{https://doi.org/10.1145/200836.200876}
\showDOI{\tempurl}


\bibitem[\protect\citeauthoryear{Borthelle, Hirschowitz, and Lafont}{Borthelle
  et~al\mbox{.}}{2020}]%
        {DBLP:conf/lics/BorthelleHL20}
\bibfield{author}{\bibinfo{person}{Peio Borthelle}, \bibinfo{person}{Tom
  Hirschowitz}, {and} \bibinfo{person}{Ambroise Lafont}.}
  \bibinfo{year}{2020}\natexlab{}.
\newblock \showarticletitle{A Cellular Howe Theorem}. In
  \bibinfo{booktitle}{\emph{35th Annual {ACM/IEEE} Symposium on Logic in
  Computer Science, LICS'20}}, \bibfield{editor}{\bibinfo{person}{Holger
  Hermanns}, \bibinfo{person}{Lijun Zhang}, \bibinfo{person}{Naoki Kobayashi},
  {and} \bibinfo{person}{Dale Miller}} (Eds.). \bibinfo{publisher}{{ACM}},
  \bibinfo{pages}{273--286}.
\newblock
\urldef\tempurl%
\url{https://doi.org/10.1145/3373718.3394738}
\showDOI{\tempurl}


\bibitem[\protect\citeauthoryear{Curry}{Curry}{1930}]%
        {10.2307/2370619}
\bibfield{author}{\bibinfo{person}{H.~B. Curry}.}
  \bibinfo{year}{1930}\natexlab{}.
\newblock \showarticletitle{Grundlagen der Kombinatorischen Logik}.
\newblock \bibinfo{journal}{\emph{Am. J. Math.}} \bibinfo{volume}{52},
  \bibinfo{number}{3} (\bibinfo{year}{1930}), \bibinfo{pages}{509--536}.
\newblock
\showISSN{00029327, 10806377}
\urldef\tempurl%
\url{http://www.jstor.org/stable/2370619}
\showURL{%
\tempurl}


\bibitem[\protect\citeauthoryear{Dal~Lago, Gavazzo, and Levy}{Dal~Lago
  et~al\mbox{.}}{2017}]%
        {DBLP:conf/lics/LagoGL17}
\bibfield{author}{\bibinfo{person}{Ugo Dal~Lago}, \bibinfo{person}{Francesco
  Gavazzo}, {and} \bibinfo{person}{Paul~Blain Levy}.}
  \bibinfo{year}{2017}\natexlab{}.
\newblock \showarticletitle{Effectful applicative bisimilarity: Monads,
  relators, and Howe's method}. In \bibinfo{booktitle}{\emph{32nd Annual
  {ACM/IEEE} Symposium on Logic in Computer Science, LICS'17}}.
  \bibinfo{publisher}{{IEEE} Computer Society}, \bibinfo{pages}{1--12}.
\newblock
\urldef\tempurl%
\url{https://doi.org/10.1109/LICS.2017.8005117}
\showDOI{\tempurl}


\bibitem[\protect\citeauthoryear{Dreyer, Ahmed, and Birkedal}{Dreyer
  et~al\mbox{.}}{2011}]%
        {DBLP:journals/corr/abs-1103-0510}
\bibfield{author}{\bibinfo{person}{Derek Dreyer}, \bibinfo{person}{Amal Ahmed},
  {and} \bibinfo{person}{Lars Birkedal}.} \bibinfo{year}{2011}\natexlab{}.
\newblock \showarticletitle{Logical Step-Indexed Logical Relations}.
\newblock \bibinfo{journal}{\emph{Log. Methods Comput. Sci.}}
  \bibinfo{volume}{7}, \bibinfo{number}{2} (\bibinfo{year}{2011}).
\newblock
\urldef\tempurl%
\url{https://doi.org/10.2168/LMCS-7(2:16)2011}
\showDOI{\tempurl}


\bibitem[\protect\citeauthoryear{Fiore}{Fiore}{2008}]%
        {DBLP:conf/lics/Fiore08}
\bibfield{author}{\bibinfo{person}{Marcelo~P. Fiore}.}
  \bibinfo{year}{2008}\natexlab{}.
\newblock \showarticletitle{Second-Order and Dependently-Sorted Abstract
  Syntax}. In \bibinfo{booktitle}{\emph{23d Annual {IEEE} Symposium on Logic in
  Computer Science, LICS'08}}. \bibinfo{publisher}{{IEEE} Computer Society},
  \bibinfo{pages}{57--68}.
\newblock
\urldef\tempurl%
\url{https://doi.org/10.1109/LICS.2008.38}
\showDOI{\tempurl}


\bibitem[\protect\citeauthoryear{Fiore, Plotkin, and Turi}{Fiore
  et~al\mbox{.}}{1999}]%
        {DBLP:conf/lics/FiorePT99}
\bibfield{author}{\bibinfo{person}{Marcelo~P. Fiore},
  \bibinfo{person}{Gordon~D. Plotkin}, {and} \bibinfo{person}{Daniele Turi}.}
  \bibinfo{year}{1999}\natexlab{}.
\newblock \showarticletitle{Abstract Syntax and Variable Binding}. In
  \bibinfo{booktitle}{\emph{14th Annual {IEEE} Symposium on Logic in Computer
  Science, LICS'99}}. \bibinfo{publisher}{{IEEE} Computer Society},
  \bibinfo{pages}{193--202}.
\newblock
\urldef\tempurl%
\url{https://doi.org/10.1109/LICS.1999.782615}
\showDOI{\tempurl}


\bibitem[\protect\citeauthoryear{Fiore and Staton}{Fiore and Staton}{2006}]%
        {DBLP:conf/lics/FioreS06}
\bibfield{author}{\bibinfo{person}{Marcelo~P. Fiore} {and} \bibinfo{person}{Sam
  Staton}.} \bibinfo{year}{2006}\natexlab{}.
\newblock \showarticletitle{A Congruence Rule Format for Name-Passing Process
  Calculi from Mathematical Structural Operational Semantics}. In
  \bibinfo{booktitle}{\emph{21st Annual {IEEE} Symposium on Logic in Computer
  Science, LICS'06}}. \bibinfo{publisher}{{IEEE} Computer Society},
  \bibinfo{pages}{49--58}.
\newblock
\urldef\tempurl%
\url{https://doi.org/10.1109/LICS.2006.7}
\showDOI{\tempurl}


\bibitem[\protect\citeauthoryear{Fiore and Turi}{Fiore and Turi}{2001}]%
        {DBLP:conf/lics/FioreT01}
\bibfield{author}{\bibinfo{person}{Marcelo~P. Fiore} {and}
  \bibinfo{person}{Daniele Turi}.} \bibinfo{year}{2001}\natexlab{}.
\newblock \showarticletitle{Semantics of Name and Value Passing}. In
  \bibinfo{booktitle}{\emph{16th Annual {IEEE} Symposium on Logic in Computer
  Science, LICS'01}}. \bibinfo{publisher}{{IEEE} Computer Society},
  \bibinfo{pages}{93--104}.
\newblock
\urldef\tempurl%
\url{https://doi.org/10.1109/LICS.2001.932486}
\showDOI{\tempurl}


\bibitem[\protect\citeauthoryear{Goncharov, Milius, Schr{\"{o}}der, Tsampas,
  and Urbat}{Goncharov et~al\mbox{.}}{2022}]%
        {DBLP:conf/fscd/0001MS0U22}
\bibfield{author}{\bibinfo{person}{Sergey Goncharov}, \bibinfo{person}{Stefan
  Milius}, \bibinfo{person}{Lutz Schr{\"{o}}der}, \bibinfo{person}{Stelios
  Tsampas}, {and} \bibinfo{person}{Henning Urbat}.}
  \bibinfo{year}{2022}\natexlab{}.
\newblock \showarticletitle{Stateful Structural Operational Semantics}. In
  \bibinfo{booktitle}{\emph{7th International Conference on Formal Structures
  for Computation and Deduction, FSCD'22}} \emph{(\bibinfo{series}{LIPIcs})},
  \bibfield{editor}{\bibinfo{person}{Amy~P. Felty}} (Ed.),
  Vol.~\bibinfo{volume}{228}. \bibinfo{publisher}{Schloss Dagstuhl -
  Leibniz-Zentrum f{\"{u}}r Informatik}, \bibinfo{pages}{30:1--30:19}.
\newblock
\urldef\tempurl%
\url{https://doi.org/10.4230/LIPIcs.FSCD.2022.30}
\showDOI{\tempurl}


\bibitem[\protect\citeauthoryear{Gordon and Pitts}{Gordon and Pitts}{1999}]%
        {10.5555/309656}
\bibfield{editor}{\bibinfo{person}{Andrew~D. Gordon} {and}
  \bibinfo{person}{Andrew~M. Pitts}} (Eds.). \bibinfo{year}{1999}\natexlab{}.
\newblock \bibinfo{booktitle}{\emph{Higher Order Operational Techniques in
  Semantics}}.
\newblock \bibinfo{publisher}{Cambridge University Press},
  \bibinfo{address}{USA}.
\newblock
\showISBNx{0521631688}


\bibitem[\protect\citeauthoryear{Hermida, Reddy, and Robinson}{Hermida
  et~al\mbox{.}}{2014}]%
        {DBLP:journals/entcs/HermidaRR14}
\bibfield{author}{\bibinfo{person}{Claudio Hermida}, \bibinfo{person}{Uday~S.
  Reddy}, {and} \bibinfo{person}{Edmund~P. Robinson}.}
  \bibinfo{year}{2014}\natexlab{}.
\newblock \showarticletitle{Logical Relations and Parametricity - {A} Reynolds
  Programme for Category Theory and Programming Languages}.
\newblock \bibinfo{journal}{\emph{Electron. Notes Theor. Comput. Sci.}}
  \bibinfo{volume}{303} (\bibinfo{year}{2014}), \bibinfo{pages}{149--180}.
\newblock
\urldef\tempurl%
\url{https://doi.org/10.1016/j.entcs.2014.02.008}
\showDOI{\tempurl}


\bibitem[\protect\citeauthoryear{Hirschowitz}{Hirschowitz}{2019}]%
        {DBLP:journals/pacmpl/Hirschowitz19}
\bibfield{author}{\bibinfo{person}{Tom Hirschowitz}.}
  \bibinfo{year}{2019}\natexlab{}.
\newblock \showarticletitle{Familial monads and structural operational
  semantics}. In \bibinfo{booktitle}{\emph{46th ACM-SIGACT Symposium on
  Principles of Programming Languages, POPL'19}}, Vol.~\bibinfo{volume}{3}.
  \bibinfo{publisher}{ACM}, \bibinfo{pages}{21:1--21:28}.
\newblock
\urldef\tempurl%
\url{https://doi.org/10.1145/3290334}
\showDOI{\tempurl}


\bibitem[\protect\citeauthoryear{Hirschowitz and Lafont}{Hirschowitz and
  Lafont}{2022}]%
        {DBLP:journals/lmcs/HirschowitzL22}
\bibfield{author}{\bibinfo{person}{Tom Hirschowitz} {and}
  \bibinfo{person}{Ambroise Lafont}.} \bibinfo{year}{2022}\natexlab{}.
\newblock \showarticletitle{A categorical framework for congruence of
  applicative bisimilarity in higher-order languages}.
\newblock \bibinfo{journal}{\emph{Log. Methods Comput. Sci.}}
  \bibinfo{volume}{18}, \bibinfo{number}{3} (\bibinfo{year}{2022}).
\newblock
\urldef\tempurl%
\url{https://doi.org/10.46298/lmcs-18(3:37)2022}
\showDOI{\tempurl}


\bibitem[\protect\citeauthoryear{Howe}{Howe}{1989}]%
        {DBLP:conf/lics/Howe89}
\bibfield{author}{\bibinfo{person}{Douglas~J. Howe}.}
  \bibinfo{year}{1989}\natexlab{}.
\newblock \showarticletitle{Equality In Lazy Computation Systems}. In
  \bibinfo{booktitle}{\emph{4th Annual Symposium on Logic in Computer Science,
  LICS'89}}. \bibinfo{publisher}{{IEEE} Computer Society},
  \bibinfo{pages}{198--203}.
\newblock
\urldef\tempurl%
\url{https://doi.org/10.1109/LICS.1989.39174}
\showDOI{\tempurl}


\bibitem[\protect\citeauthoryear{Howe}{Howe}{1996}]%
        {DBLP:journals/iandc/Howe96}
\bibfield{author}{\bibinfo{person}{Douglas~J. Howe}.}
  \bibinfo{year}{1996}\natexlab{}.
\newblock \showarticletitle{Proving Congruence of Bisimulation in Functional
  Programming Languages}.
\newblock \bibinfo{journal}{\emph{Inf. Comput.}} \bibinfo{volume}{124},
  \bibinfo{number}{2} (\bibinfo{year}{1996}), \bibinfo{pages}{103--112}.
\newblock
\urldef\tempurl%
\url{https://doi.org/10.1006/inco.1996.0008}
\showDOI{\tempurl}


\bibitem[\protect\citeauthoryear{Klin and Nachyla}{Klin and Nachyla}{2015}]%
        {DBLP:conf/calco/KlinN15}
\bibfield{author}{\bibinfo{person}{Bartek Klin} {and} \bibinfo{person}{Beata
  Nachyla}.} \bibinfo{year}{2015}\natexlab{}.
\newblock \showarticletitle{Presenting Morphisms of Distributive Laws}. In
  \bibinfo{booktitle}{\emph{6th Conference on Algebra and Coalgebra in Computer
  Science, CALCO'15}} \emph{(\bibinfo{series}{LIPIcs})}.
  \bibinfo{publisher}{Schloss Dagstuhl - Leibniz-Zentrum fuer Informatik},
  \bibinfo{pages}{190--204}.
\newblock
\urldef\tempurl%
\url{https://doi.org/10.4230/LIPIcs.CALCO.2015.190}
\showDOI{\tempurl}


\bibitem[\protect\citeauthoryear{Klin and Sassone}{Klin and Sassone}{2008}]%
        {DBLP:conf/fossacs/KlinS08}
\bibfield{author}{\bibinfo{person}{Bartek Klin} {and}
  \bibinfo{person}{Vladimiro Sassone}.} \bibinfo{year}{2008}\natexlab{}.
\newblock \showarticletitle{Structural Operational Semantics for Stochastic
  Process Calculi}. In \bibinfo{booktitle}{\emph{11th International Conference
  Foundations of Software Science and Computational Structures, FOSSACS'08}}
  \emph{(\bibinfo{series}{Lecture Notes Comput.~Sci.})},
  \bibfield{editor}{\bibinfo{person}{Roberto~M. Amadio}} (Ed.),
  Vol.~\bibinfo{volume}{4962}. \bibinfo{publisher}{Springer},
  \bibinfo{pages}{428--442}.
\newblock
\urldef\tempurl%
\url{https://doi.org/10.1007/978-3-540-78499-9\_30}
\showDOI{\tempurl}


\bibitem[\protect\citeauthoryear{Lassen}{Lassen}{2005}]%
        {DBLP:conf/lics/Lassen05}
\bibfield{author}{\bibinfo{person}{S{\o}ren~B. Lassen}.}
  \bibinfo{year}{2005}\natexlab{}.
\newblock \showarticletitle{Eager Normal Form Bisimulation}. In
  \bibinfo{booktitle}{\emph{20th {IEEE} Symposium on Logic in Computer Science,
  LICS'05}}. \bibinfo{publisher}{{IEEE} Computer Society},
  \bibinfo{pages}{345--354}.
\newblock
\urldef\tempurl%
\url{https://doi.org/10.1109/LICS.2005.15}
\showDOI{\tempurl}


\bibitem[\protect\citeauthoryear{MacLane and Moerdijk}{MacLane and
  Moerdijk}{1992}]%
        {MacLaneMoerdijk92}
\bibfield{author}{\bibinfo{person}{Saunders MacLane} {and}
  \bibinfo{person}{Ieke Moerdijk}.} \bibinfo{year}{1992}\natexlab{}.
\newblock \bibinfo{booktitle}{\emph{Sheaves in Geometry and Logic: A First
  Introduction to Topos Theory}}.
\newblock \bibinfo{publisher}{Springer-Verlag}.
\newblock
\urldef\tempurl%
\url{https://doi.org/10.1007/978-1-4612-0927-0)}
\showDOI{\tempurl}


\bibitem[\protect\citeauthoryear{Miculan and Peressotti}{Miculan and
  Peressotti}{2016}]%
        {DBLP:journals/tcs/MiculanP16}
\bibfield{author}{\bibinfo{person}{Marino Miculan} {and} \bibinfo{person}{Marco
  Peressotti}.} \bibinfo{year}{2016}\natexlab{}.
\newblock \showarticletitle{Structural operational semantics for
  non-deterministic processes with quantitative aspects}.
\newblock \bibinfo{journal}{\emph{Theor. Comput. Sci.}}  \bibinfo{volume}{655}
  (\bibinfo{year}{2016}), \bibinfo{pages}{135--154}.
\newblock
\urldef\tempurl%
\url{https://doi.org/10.1016/j.tcs.2016.01.012}
\showDOI{\tempurl}


\bibitem[\protect\citeauthoryear{Milner}{Milner}{1989}]%
        {DBLP:books/daglib/0067019}
\bibfield{author}{\bibinfo{person}{Robin Milner}.}
  \bibinfo{year}{1989}\natexlab{}.
\newblock \bibinfo{booktitle}{\emph{Communication and concurrency}}.
\newblock \bibinfo{publisher}{Prentice Hall}.
\newblock
\showISBNx{978-0-13-115007-2}


\bibitem[\protect\citeauthoryear{Milner, Parrow, and Walker}{Milner
  et~al\mbox{.}}{1992}]%
        {DBLP:journals/iandc/MilnerPW92a}
\bibfield{author}{\bibinfo{person}{Robin Milner}, \bibinfo{person}{Joachim
  Parrow}, {and} \bibinfo{person}{David Walker}.}
  \bibinfo{year}{1992}\natexlab{}.
\newblock \showarticletitle{A Calculus of Mobile Processes, {I}}.
\newblock \bibinfo{journal}{\emph{Inf. Comput.}} \bibinfo{volume}{100},
  \bibinfo{number}{1} (\bibinfo{year}{1992}), \bibinfo{pages}{1--40}.
\newblock
\urldef\tempurl%
\url{https://doi.org/10.1016/0890-5401(92)90008-4}
\showDOI{\tempurl}


\bibitem[\protect\citeauthoryear{Mousavi and Reniers}{Mousavi and
  Reniers}{2007}]%
        {DBLP:journals/entcs/MousaviR07}
\bibfield{author}{\bibinfo{person}{Mohammad~Reza Mousavi} {and}
  \bibinfo{person}{Michel~A. Reniers}.} \bibinfo{year}{2007}\natexlab{}.
\newblock \showarticletitle{On Well-Foundedness and Expressiveness of Promoted
  Tyft: Being Promoted Makes a Difference}.
\newblock \bibinfo{journal}{\emph{Electron. Notes Theor. Comput. Sci.}}
  \bibinfo{volume}{175}, \bibinfo{number}{1} (\bibinfo{year}{2007}),
  \bibinfo{pages}{45--56}.
\newblock
\urldef\tempurl%
\url{https://doi.org/10.1016/j.entcs.2006.09.015}
\showDOI{\tempurl}


\bibitem[\protect\citeauthoryear{O'Hearn and Riecke}{O'Hearn and
  Riecke}{1995}]%
        {DBLP:journals/iandc/OHearnR95}
\bibfield{author}{\bibinfo{person}{Peter~W. O'Hearn} {and}
  \bibinfo{person}{Jon~G. Riecke}.} \bibinfo{year}{1995}\natexlab{}.
\newblock \showarticletitle{Kripke Logical Relations and {PCF}}.
\newblock \bibinfo{journal}{\emph{Inf. Comput.}} \bibinfo{volume}{120},
  \bibinfo{number}{1} (\bibinfo{year}{1995}), \bibinfo{pages}{107--116}.
\newblock
\urldef\tempurl%
\url{https://doi.org/10.1006/inco.1995.1103}
\showDOI{\tempurl}


\bibitem[\protect\citeauthoryear{Pitts}{Pitts}{2011}]%
        {pitts_2011}
\bibfield{author}{\bibinfo{person}{Andrew Pitts}.}
  \bibinfo{year}{2011}\natexlab{}.
\newblock \bibinfo{booktitle}{\emph{Howe's method for higher-order languages}}.
\newblock \bibinfo{publisher}{Cambridge University Press},
  \bibinfo{pages}{197232}.
\newblock
\urldef\tempurl%
\url{https://doi.org/10.1017/CBO9780511792588.006}
\showDOI{\tempurl}


\bibitem[\protect\citeauthoryear{Rutten}{Rutten}{2000}]%
        {DBLP:journals/tcs/Rutten00}
\bibfield{author}{\bibinfo{person}{Jan J. M.~M. Rutten}.}
  \bibinfo{year}{2000}\natexlab{}.
\newblock \showarticletitle{Universal coalgebra: a theory of systems}.
\newblock \bibinfo{journal}{\emph{Theor. Comput. Sci.}} \bibinfo{volume}{249},
  \bibinfo{number}{1} (\bibinfo{year}{2000}), \bibinfo{pages}{3--80}.
\newblock
\urldef\tempurl%
\url{https://doi.org/10.1016/S0304-3975(00)00056-6}
\showDOI{\tempurl}


\bibitem[\protect\citeauthoryear{Sangiorgi}{Sangiorgi}{1994}]%
        {DBLP:journals/iandc/Sangiorgi94}
\bibfield{author}{\bibinfo{person}{Davide Sangiorgi}.}
  \bibinfo{year}{1994}\natexlab{}.
\newblock \showarticletitle{The Lazy Lambda Calculus in a Concurrency
  Scenario}.
\newblock \bibinfo{journal}{\emph{Inf. Comput.}} \bibinfo{volume}{111},
  \bibinfo{number}{1} (\bibinfo{year}{1994}), \bibinfo{pages}{120--153}.
\newblock
\urldef\tempurl%
\url{https://doi.org/10.1006/inco.1994.1042}
\showDOI{\tempurl}


\bibitem[\protect\citeauthoryear{Sangiorgi}{Sangiorgi}{1996}]%
        {DBLP:journals/iandc/Sangiorgi96}
\bibfield{author}{\bibinfo{person}{Davide Sangiorgi}.}
  \bibinfo{year}{1996}\natexlab{}.
\newblock \showarticletitle{Bisimulation for Higher-Order Process Calculi}.
\newblock \bibinfo{journal}{\emph{Inf. Comput.}} \bibinfo{volume}{131},
  \bibinfo{number}{2} (\bibinfo{year}{1996}), \bibinfo{pages}{141--178}.
\newblock
\urldef\tempurl%
\url{https://doi.org/10.1006/inco.1996.0096}
\showDOI{\tempurl}


\bibitem[\protect\citeauthoryear{Sangiorgi, Kobayashi, and Sumii}{Sangiorgi
  et~al\mbox{.}}{2007}]%
        {DBLP:conf/lics/SangiorgiKS07}
\bibfield{author}{\bibinfo{person}{Davide Sangiorgi}, \bibinfo{person}{Naoki
  Kobayashi}, {and} \bibinfo{person}{Eijiro Sumii}.}
  \bibinfo{year}{2007}\natexlab{}.
\newblock \showarticletitle{Environmental Bisimulations for Higher-Order
  Languages}. In \bibinfo{booktitle}{\emph{22nd Annual {IEEE} Symposium on
  Logic in Computer Science, LICS'07}}. \bibinfo{publisher}{{IEEE} Computer
  Society}, \bibinfo{pages}{293--302}.
\newblock
\urldef\tempurl%
\url{https://doi.org/10.1109/LICS.2007.17}
\showDOI{\tempurl}


\bibitem[\protect\citeauthoryear{Scott}{Scott}{1970}]%
        {3720}
\bibfield{author}{\bibinfo{person}{Dana Scott}.}
  \bibinfo{year}{1970}\natexlab{}.
\newblock \bibinfo{booktitle}{\emph{Outline of a Mathematical Theory of
  Computation}}.
\newblock \bibinfo{type}{{T}echnical {R}eport} PRG02.
  \bibinfo{institution}{Oxford University Computing Laboratory}.
\newblock


\bibitem[\protect\citeauthoryear{Statman}{Statman}{1985}]%
        {DBLP:journals/iandc/Statman85}
\bibfield{author}{\bibinfo{person}{Richard Statman}.}
  \bibinfo{year}{1985}\natexlab{}.
\newblock \showarticletitle{Logical Relations and the Typed lambda-Calculus}.
\newblock \bibinfo{journal}{\emph{Inf. Control.}} \bibinfo{volume}{65},
  \bibinfo{number}{2/3} (\bibinfo{year}{1985}), \bibinfo{pages}{85--97}.
\newblock
\urldef\tempurl%
\url{https://doi.org/10.1016/S0019-9958(85)80001-2}
\showDOI{\tempurl}


\bibitem[\protect\citeauthoryear{Tait}{Tait}{1967}]%
        {tait1967intensional}
\bibfield{author}{\bibinfo{person}{William~W Tait}.}
  \bibinfo{year}{1967}\natexlab{}.
\newblock \showarticletitle{Intensional interpretations of functionals of
  finite type I}.
\newblock \bibinfo{journal}{\emph{The journal of symbolic logic}}
  \bibinfo{volume}{32}, \bibinfo{number}{2} (\bibinfo{year}{1967}),
  \bibinfo{pages}{198--212}.
\newblock


\bibitem[\protect\citeauthoryear{Tsampas, Nuyts, Devriese, and
  Piessens}{Tsampas et~al\mbox{.}}{2020}]%
        {DBLP:conf/cmcs/0001NDP20}
\bibfield{author}{\bibinfo{person}{Stelios Tsampas}, \bibinfo{person}{Andreas
  Nuyts}, \bibinfo{person}{Dominique Devriese}, {and} \bibinfo{person}{Frank
  Piessens}.} \bibinfo{year}{2020}\natexlab{}.
\newblock \showarticletitle{A Categorical Approach to Secure Compilation}. In
  \bibinfo{booktitle}{\emph{15th {IFIP} {WG} 1.3 International Workshop on
  Coalgebraic Methods in Computer Science, CMCS'20}}
  \emph{(\bibinfo{series}{Lecture Notes Comput.~Sci.})},
  \bibfield{editor}{\bibinfo{person}{Daniela Petrisan} {and}
  \bibinfo{person}{Jurriaan Rot}} (Eds.), Vol.~\bibinfo{volume}{12094}.
  \bibinfo{publisher}{Springer}, \bibinfo{pages}{155--179}.
\newblock
\urldef\tempurl%
\url{https://doi.org/10.1007/978-3-030-57201-3\_9}
\showDOI{\tempurl}


\bibitem[\protect\citeauthoryear{Tsampas, Williams, Nuyts, Devriese, and
  Piessens}{Tsampas et~al\mbox{.}}{2021}]%
        {DBLP:conf/mfcs/0001WNDP21}
\bibfield{author}{\bibinfo{person}{Stelios Tsampas}, \bibinfo{person}{Christian
  Williams}, \bibinfo{person}{Andreas Nuyts}, \bibinfo{person}{Dominique
  Devriese}, {and} \bibinfo{person}{Frank Piessens}.}
  \bibinfo{year}{2021}\natexlab{}.
\newblock \showarticletitle{Abstract Congruence Criteria for Weak
  Bisimilarity}. In \bibinfo{booktitle}{\emph{46th International Symposium on
  Mathematical Foundations of Computer Science, MFCS'21}}
  \emph{(\bibinfo{series}{LIPIcs})}, \bibfield{editor}{\bibinfo{person}{Filippo
  Bonchi} {and} \bibinfo{person}{Simon~J. Puglisi}} (Eds.),
  Vol.~\bibinfo{volume}{202}. \bibinfo{publisher}{Schloss Dagstuhl -
  Leibniz-Zentrum f{\"{u}}r Informatik}, \bibinfo{pages}{88:1--88:23}.
\newblock
\urldef\tempurl%
\url{https://doi.org/10.4230/LIPIcs.MFCS.2021.88}
\showDOI{\tempurl}


\bibitem[\protect\citeauthoryear{Turi and Plotkin}{Turi and Plotkin}{1997}]%
        {DBLP:conf/lics/TuriP97}
\bibfield{author}{\bibinfo{person}{Daniele Turi} {and}
  \bibinfo{person}{Gordon~D. Plotkin}.} \bibinfo{year}{1997}\natexlab{}.
\newblock \showarticletitle{Towards a Mathematical Operational Semantics}. In
  \bibinfo{booktitle}{\emph{12th Annual {IEEE} Symposium on Logic in Computer
  Science, LICS'97}}. \bibinfo{pages}{280--291}.
\newblock
\urldef\tempurl%
\url{https://doi.org/10.1109/LICS.1997.614955}
\showDOI{\tempurl}


\bibitem[\protect\citeauthoryear{van Glabbeek}{van Glabbeek}{2011}]%
        {DBLP:journals/tcs/Glabbeek11}
\bibfield{author}{\bibinfo{person}{Rob~J. van Glabbeek}.}
  \bibinfo{year}{2011}\natexlab{}.
\newblock \showarticletitle{On cool congruence formats for weak bisimulations}.
\newblock \bibinfo{journal}{\emph{Theor. Comput. Sci.}} \bibinfo{volume}{412},
  \bibinfo{number}{28} (\bibinfo{year}{2011}), \bibinfo{pages}{3283--3302}.
\newblock
\urldef\tempurl%
\url{https://doi.org/10.1016/j.tcs.2011.02.036}
\showDOI{\tempurl}


\bibitem[\protect\citeauthoryear{Watanabe}{Watanabe}{2002}]%
        {DBLP:journals/entcs/Watanabe02}
\bibfield{author}{\bibinfo{person}{Hiroshi Watanabe}.}
  \bibinfo{year}{2002}\natexlab{}.
\newblock \showarticletitle{Well-behaved Translations between Structural
  Operational Semantics}.
\newblock \bibinfo{journal}{\emph{Electr. Notes Theor. Comput. Sci.}}
  \bibinfo{volume}{65}, \bibinfo{number}{1} (\bibinfo{year}{2002}),
  \bibinfo{pages}{337--357}.
\newblock
\urldef\tempurl%
\url{https://doi.org/10.1016/S1571-0661(04)80372-4}
\showDOI{\tempurl}


\end{thebibliography}

\clearpage
\appendix
\section{Appendix}
This appendix provides all proofs omitted in the main parts of the paper.

\subsection*{Proof of \Cref{prop:yon1}}
\begin{enumerate}
\item Removing the syntactic sugar from \Cref{def:hoformat}, we see
  that $\HO$ specifications for a signature~$\Sigma$ correspond
  bijectively to elements of the set
\begin{equation}
  \label{eq:hoformatns}
  \prod_{\f\in \Sigma}\prod_{W\seq \ar(\f)}\Bigl(\Sigma^\star \bigl(\ar(\f)+W+\ar(\f)\times \ol{W}\bigr) + \Sigma^\star \bigl(\ar(\f)+1+W+(\ar(\f)+1)\times
    \ol{W}\bigr)\Bigr).
\end{equation}
Here, we identify the natural number $\ar(\f)$ with the set $\{1,\ldots, \ar(\f)\}$, and we let $\ol{W}=\ar(\f)\smin W$ denote the complement. The idea is that the
summands under $\Sigmas$ spell out which variables may be used in the conclusion of the
respective rule. For instance, the rule \texttt{app1-b} of \Cref{ex:ski-to-ho} corresponds to the element
\[ (2,1)\in \ar(f)\times \ol{W}\seq \Sigma^\star \bigl(\ar(\f)+W+\ar(\f)\times \ol{W}\bigr) \]
where the variable $x_1^{x_2}$ is identified with $(2,1)$, $\f=\circ$ is the application operator and $W=\{2\}$.

 We are thus left to prove that elements of the set~\eqref{eq:hoformatns} are in a bijective correspondence with families maps of type~\eqref{eq:simpleho}.
\item 
  Given functors $F,G\colon \Set^{\opp}\times \Set\times \Set\to \Set$ we
  write $\DiNat_{X,Y}(F(X,X,Y),G(X,X,Y))$ for the collection of all families of
  maps
  $\rho_{X,Y}\colon F(X,X,Y)\to G(X,X,Y)$
dinatural in $X\in \Set$ and natural in $Y\in \Set$. Then we have the following
chain of bijections:
\begin{align*}
& \DiNat_{X,Y}(\Sigma(X\times B_u(X,Y)),B_u(X,\Sigmas(X+Y))) \\
=\; & \DiNat_{X,Y}\bigl(\coprod_{\f\in\Sigma} (X\times B_u(X,Y))^{\ar(\f)},B_u(X,\Sigmas(X+Y))\bigr) \\
\cong\; & \prod_{\f\in \Sigma}\DiNat_{X,Y}\bigl((X\times B_u(X,Y))^{\ar(\f)},B_u(X,\Sigmas(X+Y))\bigr) \\
=\; & \prod_{\f\in \Sigma}\DiNat_{X,Y}\bigl((X\times (Y + Y^{X}))^{\ar(\f)}, B_u(X,\Sigmas(X+Y))\bigr) \\
\cong\; & \prod_{\f\in \Sigma}\DiNat_{X,Y}\bigl(X^{\ar(\f)}\times (Y + Y^{X})^{\ar(\f)}, B_u(X,\Sigmas(X+Y))\bigr) \\
\cong\; & \prod_{\f\in \Sigma}\DiNat_{X,Y}\bigl(\coprod_{W\seq \ar(\f)} X^{\ar(\f)}\times Y^W \times Y^{X\times \ol{W}} , B_u(X,\Sigmas(X+Y))\bigr) \\
\cong\; & \prod_{\f\in \Sigma} \prod_{W\seq \ar(\f)}\DiNat_{X,Y}\bigl(X^{\ar(\f)}\times Y^{W + X\times \ol{W}} , B_u(X,\Sigmas(X+Y))\bigr). 
\end{align*}
For the penultimate step we use that products distribute over coproducts in $\Set$.
We claim that 
\begin{equation}\label{eq:dinat-sum}
 \begin{gathered}
    \DiNat_{X,Y}\bigl( X^{\ar(\f)}\times Y^{W + X\times \ol{W}} , B_u(X,\Sigmas(X+Y))\bigr)      \\
    \cong \\
    \DiNat_{X,Y}\bigl( X^{\ar(\f)}\times Y^{W + X\times \ol{W}} , \Sigmas(X+Y)\bigr) \;+\; \DiNat_{X,Y}\bigl( X^{\ar(\f)}\times Y^{W + X\times \ol{W}} ,(\Sigmas(X+Y))^{X}\bigr).
  \end{gathered}
\end{equation}
To see this, let $\rho$ be a family of maps in $\DiNat_{X,Y}\bigl(X^{\ar(\f)}\times Y^{W + X\times \ol{W}} , B_u(X,\Sigmas(X+Y))\bigr)$. Consider the diagram below, where $!_X\c X\to 1$ and $!_Y\c Y\to 1$ are the unique maps. The upper part commutes by naturality in $Y$ and the lower part by dinaturality in $X$.
\[
\begin{tikzcd}
\id \times Y^{W\times X\times\ol{W}} \ar{r}{\rho_{X,Y}} \ar{d}[swap]{X^{\mathsf{ar}(\f)}\times (!_Y)^{\id+X\times\id}} & \Sigmas(X+Y)+(\Sigmas(X+Y))^X \ar{d}{\Sigmas(\id+!_Y)+ (\Sigmas(\id+!_Y))^X}  \\
 X^{\mathsf{ar}(\f)}\times 1^{W+X\times \ol{W}}  \ar{r}{\rho_{X,1}} & \Sigmas(X+1)+(\Sigmas(X+1))^X  \ar{d}{ \Sigmas(!_X+\id) + (\Sigmas(!_X+\id))^X }  \\
X^{\mathsf{ar}(\f)}\times 1^{W+1\times\ol{W}} \ar{u}{\id\times 1^{\id+!_X\times\id}  }[swap]{\cong} \ar{d}[swap]{ (!_X)^{\mathsf{ar}(\f)}\times \id} & \Sigmas(1+1)+(\Sigmas(1+1))^X \\
 1^{\mathsf{ar}(\f)}\times 1^{W+1\times\ol{W}} \ar{r}{\rho_{1,1}} & \Sigmas(1+1)+(\Sigmas(1+1))^1 \ar{u}[swap]{\id+(\Sigmas(1+1))^{!_X}  }    
\end{tikzcd}
\] 
Since the map $\rho_{1,1}$ has domain $1^{\mathsf{ar}(\f)}\times 1^{W+1\times\ol{W}}\cong 1$, it factorizes through one of the summands of its codomain. The commutativity of the above diagram then implies that also $\rho_{X,Y}$ factorizes through the corresponding summand of its codomain, which proves \eqref{eq:dinat-sum}.
\item It remains to show that the two summands in \eqref{eq:dinat-sum} are isomorphic to the
corresponding summands in~\eqref{eq:hoformatns}. For the first one, letting $\NT_X(F(X),G(X))$ denote the collection of natural transformations between functors $F,G\c \Set\to \Set$, we compute 
\begin{align*}
  & \DiNat_{X,Y}\bigl(X^{\ar(\f)}\times Y^{W + X\times \ol{W}} , \Sigmas(X+Y)\bigr)\\
  \cong\;& \NT_X\bigl(X^{\ar(\f)}, \NT_Y(Y^{W + X\times \ol{W}}, \Sigmas(X+Y))\bigr) \\
  \cong\; &  \NT_Y\bigl(Y^{W + \ar(\f)\times \ol{W}}, \Sigmas(\ar(\f)+Y)\bigr) \\
  \cong\; &  \Sigmas(\ar(\f)+W+\ar(\f)\times \ol{W}).
\end{align*}
The last two isomorphisms use the Yoneda lemma, and the first one is given by currying:
\[ \rho \quad\mapsto \quad (\,\lambda x\in X^{\ar(\f)}. \,(\rho_{X,Y}(x,-))_Y\,)_X.\]
Note that for every $x\in X^{\ar(\f)}$ the family $(\rho_{X,Y}(x,-)\c Y^{W+X\times \ol{W}}\to \Sigmas(X+Y))_Y$ is natural in $Y$; the naturality squares are equivalent to the ones witnessing naturality of $\rho_{X,Y}$ in $Y$. Similarly, the family $(\,\lambda x\in X^{\ar(\f)}. \,(\rho_{X,Y}(x,-))_Y\,)_X$ is natural in $X$; the naturality squares are equivalent to the commutative hexagons witnessing dinaturality of $\rho_{X,Y}$ in $X$. 

Much analogously, we have
\begin{align*}
  & \DiNat_{X,Y}\bigl(X^{\ar(\f)}\times Y^{W + X\times \ol{W}} , (\Sigmas(X+Y))^X\bigr) \\
  \cong\;& \NT_X\bigl(X^{\ar(\f)}\times X, \NT_Y(Y^{W + X\times \ol{W}}, \Sigmas(X+Y))\bigr) \\
  \cong\; &  \NT_Y\bigl(Y^{W + (\ar(\f)+1)\times \ol{W}}, \Sigmas(\ar(\f)+1+Y)\bigr) \\
  \cong\; &  \Sigmas(\ar(\f)+1+W+(\ar(\f)+1)\times \ol{W}).
\end{align*}
This concludes the proof.
\end{enumerate}

\subsection*{Proof of \Cref{lem:clubs}}
Existence of $a^\clubsuit$ follows from the fact that the initial algebra morphism
$\brks{w,\,a^\clubsuit}$, determined by the diagram
\begin{equation}\label{diag:w-a-clubs}
\begin{tikzcd}[column sep=7ex, row sep=normal]
\Sigma(\mS) 
  \dar["\Sigma\brks{w,\,a^\clubsuit}"']
  \ar[rrr,"\iota"] 
  &[1.5ex] &[2ex] &
\mS 
  \dar["\brks{w,\,a^\clubsuit}"]
  \\
\Sigma (A\times {B(A,A)})
  \rar["\brks{a\cdot\Sigma\fst,\,\rho_{A,A}}"] 
&
A\times B(A,\Sigma^\star(A+A))
  \rar["{\id\times B(\id,\Sigmas\nabla)}"] &
A\times B(A,\Sigma^\star A)
  \rar["{\id\times B(\id, \hat a)}"] &
A\times B(A,A)
\end{tikzcd}
\end{equation}
indeed yields a suitable $a^\clubsuit$, for which the diagram of interest commutes; note that $w\c \mS\to (A,a)$ is a $\Sigma$-algebra morphism by the above diagram, hence $w=\iter a$. To show 
uniqueness, suppose that~ $a^\clubsuit$ with the requisite property exists. Then Diagram~\eqref{diag:w-a-clubs} 
commutes with $w=\iter a$, so uniqueness of $a^\clubsuit$ is entailed by 
the uniqueness of $\brks{w,a^\clubsuit}$.

\subsection*{Details for \Cref{rem:gsos-to-ho-gsos}}
We show that $\rho$ as defined in \Cref{rem:gsos-to-ho-gsos} is a (0-pointed) higher-order GSOS law of $\Sigma$ over $B$, where $B(X,Y)=FY$. Naturality of $\rho_{X,-}$ is shown by the following commutative diagram for $f\c Y\to Y$'. The left part obviously commutes, and the right part commutes by naturality of $\lambda$.
\[
\begin{tikzcd}[column sep=.2em, row sep=3em]
\Sigma(X\times B(X,Y)) = \Sigma(X\times FY) \ar{d}[swap]{\Sigma(\id\times Ff)} \ar{r}[yshift=0.5em]{\Sigma(\inl\times F\inr)} & \Sigma((X+Y)\times F(X+Y)) \ar{d}{\Sigma((\id+f)\times F(\id+f))} \ar{r}[yshift=.5em]{\lambda_{X+Y}} &  F\Sigmas(X+Y)=B(X,\Sigmas(X+Y)) \ar{d}{F\Sigmas(\id+f)} \\
\Sigma(X\times B(X,Y')) = \Sigma(X\times FY') \ar{r}[yshift=0.5em]{\Sigma(\inl\times F\inr)} & \Sigma((X+Y')\times F(X+Y')) \ar{r}[yshift=.5em]{\lambda_{X+Y'}} &  F\Sigmas(X+Y')=B(X,\Sigmas(X+Y'))
\end{tikzcd}
\]
Since $B(X,Y)=FY$ does not depend on its contravariant component, dinaturality
 of $\rho_{-,Y}$ is equivalent to naturality and is shown by the commutative diagram below for  $g\c X\to X'$:
\[
\begin{tikzcd}[column sep=.2em, row sep=3em]
\Sigma(X\times B(X,Y)) = \Sigma(X\times FY) \ar{d}[swap]{\Sigma(g\times \id)} \ar{r}[yshift=0.5em]{\Sigma(\inl\times F\inr)} & \Sigma((X+Y)\times F(X+Y)) \ar{d}{\Sigma((g+\id)\times F(g+\id))} \ar{r}[yshift=.5em]{\lambda_{X+Y}} &  F\Sigmas(X+Y)=B(X,\Sigmas(X+Y)) \ar{d}{F\Sigmas(g+\id)} \\
\Sigma(X'\times B(X',Y)) = \Sigma(X'\times FY) \ar{r}[yshift=0.5em]{\Sigma(\inl\times F\inr)} & \Sigma((X'+Y)\times F(X'+Y)) \ar{r}[yshift=.5em]{\lambda_{X'+Y}} &  F\Sigmas(X'+Y)=B(X',\Sigmas(X'+Y))
\end{tikzcd}
\]
We now show that the laws $\lambda$ and $\rho$ are semantically equivalent, that is, their  operational models
\[ \gamma\c \mS\to F(\mS)\qquad\text{and}\qquad \iota^\clubsuit\c \mS\to F(\mS)=B(\mS,\mS)  \]
coincide. By definition (see \Cref{sec:abstract-gsos}), the coalgebra structure $\gamma$ is uniquely determined by the following commutative diagram:
\begin{equation*}
\begin{tikzcd}[column sep=8ex, row sep=normal]
\Sigma(\mS) 
  \dar["\Sigma \brks{\id,\,\gamma}"']
  \ar[rrr,"\iota"] 
  &[-2ex] &[1ex] &[2ex]
\mS 
  \dar["\gamma"]
  \\
\Sigma (\mS\times F(\mS))
  \ar{rr}{\lambda_{\mS}} 
&
&
F\Sigmas(\mS) \ar{r}{F\hat\ini}
&
F(\mS)
\end{tikzcd}
\end{equation*} 
Thus, we only need to show that $\iota^\clubsuit$ is such a $\gamma$, which follows from the commutative diagram below. The upper cell commutes by definition of $\iota^\clubsuit$, the cell involving $\lambda_{\mS}$ commutes by naturality of $\lambda$, and the remaining cells commute either trivially or by definition.
\tikzcdset{scale cd/.style={every label/.append style={scale=#1},
    cells={nodes={scale=#1}}}}
\[ 
\begin{tikzcd}[scale cd=.8, column sep=1em]
\Sigma(\mS) \ar{rrrrr}{\iota} \ar{ddd}[swap]{\Sigma\langle \id,\,\iota^\clubsuit\rangle} \ar{dr}{\langle \iter \iota,\, \iota^\clubsuit\rangle} & & & & & \mS \ar{ddd}{\iota^\clubsuit} \ar{dl}[swap]{\iota^\clubsuit} \\
& \Sigma(\mS\times B(\mS,\mS)) \ar{d}{\Sigma(\inl\times F\inr)} \ar{r}{\rho_{\mS,\mS}} \ar[equals, bend right=2em]{ddl} & B(\mS,\Sigmas(\mS+\mS)) \ar{r}[yshift=.5em]{B(\id,\Sigmas\nabla)} \ar{dd}{F\Sigmas\nabla} & B(\mS,\Sigmas(\mS)) \ar{r}[yshift=.5em]{B(\id,\hat\ini)} & B(\mS,\mS) \ar[equals]{ddr} & \\
& \Sigma((\mS+\mS)\times F(\mS+\mS)) \ar{ur}[swap]{\lambda_{\mS+\mS}} \ar{dl}{\Sigma(\nabla\times F\nabla)} & & & & \\
\Sigma(\mS\times F(\mS)) \ar{rr}{\lambda_{\mS}} & & F\Sigmas(\mS) \ar{rrr}{F\hat\ini} & & & F(\mS) 
\end{tikzcd}
\]

\subsection*{Proof of \Cref{prop:rho-bialg-cat}}
It is clear that $\id_A$ is a $\rho$-bialgebra morphism from $(A,a,c)$ to $(A,a,c)$. It remains to show that the composite $h\comp g$ of $\rho$-bialgebra morphisms $g\c A\to A'$ and $h\c A'\to A''$ is again a $\rho$-bialgebra morphism; this follows from commutation of the diagram below.
\begin{equation*}
\begin{tikzcd}[column sep=6ex, row sep=normal, baseline = (B.base)]
A
  \ar[rrr,"c"]
  \dar["g"'] & &[4ex] &[4ex]
B(A,A)
  \dar["{B(\id,g)}"] \\
A'
  \ar[rr,"c'"]
  \dar["h"'] & &
B(A',A')
 \ar[r, "{B(g,\id)}"]
 \dar["{B(\id,h)}"'] &
B(A,A')
  \dar["{B(\id,h)}"] \\
A''
  \ar[r,"c''"] 
& 
B(A'',A'')
  \ar[r,"{B(h,\id)}"] 
&
B(A',A'')
  \ar[r,"{B(g,\id)}"] 
&|[alias=B]|
B(A,A'')
\end{tikzcd}
\end{equation*}

\subsection*{Proof of \Cref{prop:initial-bialgebra}}
It follows directly by  
definition of~$\iota^\clubsuit$ according to \autoref{lem:clubs}, and by observing that $\iter \ini = \id$,
that $(\mS,\iota,\iota^\clubsuit)$ is a $\rho$-bialgebra. To prove initiality, suppose that $(A,a,c)$ is a $\rho$-bialgebra. We show that $\iter a\c\mS\to A$
is the unique $\rho$-bialgebra morphism from $(\mS,\iota,\iota^\clubsuit)$ to $(A,a,c)$. To show that $\iter a$ is a $\rho$-bialgebra morphism, we need to verify that the following diagram commutes:
\begin{equation*}
\begin{tikzcd}[column sep=10.2ex, row sep=normal]
\mS
  \ar[rr,"\iota^\clubsuit"]
  \dar["\iter a"']
  \ar[dr,"a^\clubsuit"] & &
B(\mS,\mS)
  \dar["{B(\id,\iter a)}"] 
\\
A\ar[r,"c"] 
& 
B(A,A)
  \rar["{B(\iter a,\id)}"]
& 
B(\mS,A)
\end{tikzcd}
\end{equation*}
The quadrangular cell commutes by~\autoref{lem:law_comm}, and we are left to
show that $c\comp (\iter a) = a^\clubsuit$.
This follows from the fact that $c\comp (\iter a)$ satisfies the
characteristic property of $a^\clubsuit$. Indeed, the diagram
\begin{equation*}
\begin{tikzcd}[column sep=5ex, row sep=normal]
\Sigma\mS
  \dar["\Sigma(\iter a)"']
  \ar[rrr,"\iota"] 
  \ar[shiftarr = {xshift=-45}, swap]{dd}{\Sigma\brks{\iter a,\,c\comp\iter a}}
& &[2ex] &[1ex]
\mS
  \dar["\iter a"]
  \ar[shiftarr = {xshift=28}]{dd}{c\comp \iter a}
\\
\Sigma A
  \dar["\Sigma\brks{\id,\,c}"']
  \ar[rrr,"a"] 
& & &
A
  \dar["c"]
\\
\Sigma(A\times {B(A,A)})
  \rar["{\rho_{A,A}}"] 
&
B(A,\Sigma^\star A)
  \rar["{B(\id,\Sigmas\nabla)}"] 
&
B(A,\Sigma^\star (A+A))
  \rar["{B(\id,\hat{a})}"] 
&
B(A,A)
\end{tikzcd}
\end{equation*}
commutes: the top cell commutes by definition of $\iter a$, and the bottom one 
commutes by the assumption that $(A,a,c)$ is a $\rho$-bialgebra.

Uniqueness of the bialgebra morphism $\iter a\c\mS\to A$ is by initiality of $\mS$ as a $\Sigma$-algebra, since every bialgebra morphism is, by definition, in particular a $\Sigma$-algebra morphism.

\subsection*{Proof of \Cref{prop:sim-bialg}}
Note that the outside of the diagram
\begin{equation*}
\begin{tikzcd}[column sep=5ex, row sep=normal]
\Sigma\mS
  \dar["\Sigma(\iter\iotaq)"']
  \ar[rr,"\iota"] 
  \ar[shiftarr = {xshift=-60}, swap]{dd}{\Sigma\brks{\iter\iotaq,\,\iotaq^\clubsuit}}
& &[3ex] 
\mS
  \dar["\iter\iotaq"]
  \ar[shiftarr = {xshift=40}]{dd}{\iotaq^\clubsuit}
\\
\Sigma\mSq
  \dar["\Sigma\brks{\id,\,\varsigma}"']
  \ar[rr,"\iotaq"] 
& & 
\mSq
  \dar["\varsigma"]
\\
\Sigma(\mSq\times {B(\mSq,\mSq)})
  \rar["{\rho_{\mSq,\mSq}}"]
&
B(\mSq,\Sigmas(\mSq+\mSq))
  \rar["{B(\id,\widehat\iotaq\comp\Sigmas\nabla)}"] 
&
B(\mSq,\mSq)
\end{tikzcd}
\end{equation*}
commutes by definition of $\iotaq^\clubsuit$. The side cells commute by~\autoref{lem:mSq-coalg},
and the top middle cell commutes by definition of $\iter\iotaq$. Note that $\Sigma(\iter\iotaq)$ is a coequalizer,
since $\iter\iotaq$ is a reflexive coequalizer and $\Sigma$ preserves it. Hence $\Sigma(\iter\iotaq)$ is 
epic, and therefore the bottom middle cell commutes, which is the $\rho$-bialgebra
law in question.

\subsection*{Proof of \Cref{lem:final-coalgebra}}
  The functor $B(X,-)$ preserves limits of $\omega^\opp$-chains: the
  right adjoints $\llangle X,- \rrangle$ and $(-)^X$ preserve all limits, and
  limits of $\omega^\opp$-chains commute with products and coproducts
  in $\Set$ and thus in $\vcat$ (using that limits and colimits in
  presheaf categories are formed pointwise). Therefore, dually
  to the classic result by \citet{adamek74}, the functor $B(X,-)$ has a final coalgebra
  computed as the limit of the final $\omega^\opp$-chain
  \[
    1\leftarrow B(X,1) \leftarrow B(X,B(X,1)) \leftarrow B(X,B(X,B(X,1)))  \leftarrow \cdots
  \]

\subsection*{Proof of \Cref{prop:bisim}}
For the $\Longrightarrow$ direction, suppose that $R$ is a bisimulation, i.e.\ there exists a coalgebra structure $\langle r_1,r_2\rangle$ on $R$ making the diagram below commute, where $p_1,p_2$ are the projections: 
\[
\begin{tikzcd}[column sep=45, row sep=3em]
W \ar{d}[swap]{\langle c_1,c_2\rangle} & R \ar{l}[swap]{p_1} \ar{r}{p_2} \ar{d}{\langle r_1,r_2\rangle} & Z \ar{d}{\langle d_1,d_2\rangle}  \\
\llangle X,W\rrangle\times (W+W^X+1) &  \ar{l}[swap,yshift=.5em]{\llangle X,p_1\rrangle\times (p_1+p_1^X+1) } \llangle X,R\rrangle\times(R+R^X+1) \ar{r}[yshift=.5em]{\llangle X,p_2\rrangle\times (p_2+p_2^X+1) }  & \llangle X,Z\rrangle \times (Z+Z^X+1)
\end{tikzcd}
\]
Then (1) holds because $R$ is a sub-presheaf of $W\times Z$, and (2), (3) and (5) are immediate from the above diagram. Concerning (4), let  $w \mathbin{R(n)} z$ and suppose that $c_2(w)=:f\in W^X(n)$. Then the above diagram implies that $r_2(w,z)=:h\in R^X(n)$ and $d_2(z)=:g\in Z^X(n)$. Moreover, for all $e\in X(n)$, 
\[ f(e) = \ev(f,e) = p_1(\ev(h,e)) \mathbin{R(n)} p_2(\ev(h,e)) = \ev(g,e) = g(e) \]
where the second and the penultimate equality follow via naturality of $\ev$.

For the $\Longleftarrow$ direction, suppose that $R(n)\seq W(n)\times Z(n)$, $n\in\fset$, is a family of relations satisfying (1)--(8) for all $w\mathbin{R(n)} z$. Condition (1) asserts that $R$ is a sub-presheaf of $W\times Z$; thus it remains to define a coalgebra structure $\langle r_1,r_2\rangle$ on $R$ making the diagram above commute. It suffices to define the components 
\[\langle r_{1,n},r_{2,n}\rangle\c  R(n)\to \llangle X,R\rrangle(n)\times (R(n)+ R^X(n)+1)\] 
and prove that the diagram commutes pointwise at every $n\in \fset$; the naturality of $r_1,r_2$ then follows since the two lower horizontal maps in the diagram are jointly monomorphic.

We define 
\[r_{1,n}\c R(n)\to \llangle X,R\rrangle(n)=\NT(X^n,R)\qquad\text{by}\qquad
 r_{1,n}(w,z) = \langle c_{1,n}(w), d_{1,n}(z)\rangle.\]
Condition (2) shows that this map is well-typed and that it makes the first component of the diagram commute. 

 To define $r_{2,n}$, using extensivity of the presheaf topos $\vcat$ we express $R$ as a coproduct  $R=R_0+R_1+R_2$ of the sub-presheaves given by
\begin{align*}
R_0(n) & = \{\, (w,z)\in R(n) : c_2(w)\in W(n),\, d_2(z)\in Z(n) \,\}, \\
R_1(n) & = \{\, (w,z)\in R(n) : c_2(w)\in W^X(n),\, d_2(z)\in Z^X(n) \,\}, \\
R_2(n) & = \{\, (w,z)\in R(n) : c_2(w)=\ast,\, d_2(z)=\ast \,\}.
\end{align*}
Thus, it suffices to define $r_{2,n}\c R_0(n)+R_1(n)+R_2(n)\to R(n)+R^X(n)+1$ separately for each summand of its domain. Given $(w,z)\in R_0(n)$, we put
\[ r_{2,n}(w,z) = (c_2(w),d_2(z))\in R(n). \]
By condition (3), this is well-typed and makes the second component of the diagram (with domain $R$ restricted to $R_0$) commute. Similarly, for $(w,z)\in R_2(n)$ we put 
\[ r_{2,n}(w,z)=\ast; \]
the second component of the diagram (with domain $R$ restricted to $R_2$) then commutes by condition (5). Finally, for $(w,z)\in R_1(n)$ we put
\[ r_{2,n}(w,z) = \curry\, h (w,z) \in R^X(n)\]
where $h\colon R_1\times X\to R$ is the natural transformation whose component at $m\in \fset$ is given by
\[ h_m((w',z'),e) = (c_2(w')(e), d_2(z')(e)) \in R(m).  \] 
Condition (4) asserts that $h_m$ is well-typed and that the second component of the diagram (with domain $R$ restricted to $R_1$) commutes.

\begin{rem}\label{rem:op-model-lambda}
The operational model of the higher-order GSOS law $\rho^\cn$ is the coalgebra 
\[
  \iota^\clubsuit=\langle \gamma_{1},\gamma_{2}\rangle \c \Lambda \to
  \llangle \Lambda,\Lambda \rrangle \times (\Lambda + \Lambda^\Lambda+1)
\] 
uniquely determined the following commutative diagram, see \Cref{lem:clubs}:
\[
\begin{tikzcd}[column sep=63]
V+\delta\Lambda+\Lambda^2 \ar{ddd}{ \iota=[\var,\lambda.(-),\circ]} \ar{r}[yshift=.5em]{V+\delta\langle \id,\gamma_1,\gamma_2\rangle + \langle \id,\gamma_1,\gamma_2\rangle^2} & V+\delta(\Lambda\times\llangle \Lambda,\Lambda\rrangle \times(\Lambda+\Lambda^\Lambda+1)) + ( \Lambda\times\llangle\Lambda,\Lambda\rrangle \times (\Lambda+\Lambda^\Lambda+1))^2  \ar{d}{\rho_{\Lambda,\Lambda}^{\cn}} \\
&  \llangle \Lambda,\Sigmas(\Lambda+\Lambda)\rrangle \times (\Sigmas(\Lambda+\Lambda) + (\Sigmas(\Lambda+\Lambda))^\Lambda +1)  \ar{d}{\llangle \id,\Sigmas\nabla\rrangle \times (\Sigmas\nabla + (\Sigmas\nabla)^\Lambda +\id) }  \\
&  \llangle \Lambda,\Sigmas\Lambda\rrangle \times (\Sigmas\Lambda + (\Sigmas\Lambda)^\Lambda +1) \ar{d}{ \llangle \id,\hat\iota\rrangle \times (\hat\iota + {\hat\iota}^\Lambda +\id) } \\
\Lambda \ar{r}{ \langle\gamma_1,\gamma_2\rangle} &  \llangle \Lambda,\Lambda\rrangle \times (\Lambda+ \Lambda^\Lambda +1) 
 \end{tikzcd}
\]
\end{rem}

\subsection*{Proof of \Cref{prop:gamma1}}
We proceed by induction on the structure of $t$.
  \begin{itemize}
  \item For $t=v \in V(n)$,  
\[\gamma_1(v)(\vec{u})=\pi(v)(\vec{u}) = u_v = v[\vec{u}];\]
the first equality follows from the definition of $\gamma_1$ (see \Cref{rem:op-model-lambda}), the second one from the definition of $\pi$, and the third one from the definition of substitution.
  \item For $t = \lambda x.t\pr$ (where $x=n$ and $t'\in \Lambda(n+1)$),  
  \[\gamma_{1}(t)(\vec{u}) =
    \lambda m.(\gamma_{1}(t\pr)(\oname{up}_{\Lambda,m}(\vec{u}),m)) =  \lambda m.(t\pr[\oname{up}_{\Lambda,m}(\vec{u}),m]) = t[\vec{u}]; \]
the first equality uses the definition of $\gamma_1$, the second one follows by induction and the third one by the definition of substitution.
  \item For $t = t_{1} \app t_{2}$,
\[ \gamma_{1}(t)(\vec{u}) = \gamma_{1}(t_{1})(\vec{u}) \app
    \gamma_{1}(t_{2})(\vec{u}) = t_{1}[\vec{u}] \app t_{2}[\vec{u}] = t[\vec{u}]; \]
 the first equality follows from the definition of $\gamma_1$, the second one by induction, and the third one by the definition of substitution.
  \end{itemize}

\subsection*{Proof of \Cref{prop:gamma2}}
Since the three statements are mutually exclusive, it suffices to show that at least one of them holds. We proceed by induction on the structure of $t$:
   \begin{itemize}
  \item For $t=v \in V(n)$, 
we have $\gamma_2(t)=\ast$ by definition of $\gamma_2$ (see \Cref{rem:op-model-lambda}), so (3) holds (put $x=v$ and $k=0$).
  \item For $t = \lambda x.t\pr$, we have $\gamma_2(t)\in \Lambda^\Lambda(n)$ by definition of $\rho^{\cn}$. Moreover, for every $e\in \Lambda(n)$,
\[ \gamma_2(t)(e) = \rho_2(\gamma_1(t'))(e) = \gamma_1(t')(0,\ldots,n-1,e) = t'[e/x] \]
where the first two equalities use the definition of $\gamma_2$ and $\rho_2$, respectively, and the third one follows from \Cref{prop:gamma1}. Thus (2) holds.
  \item For $t = t_{1} \app t_{2}$, we distinguish three cases:
\begin{itemize}
\item If $t_1=x\in V(n)$, then $\gamma_2(t)=\ast$ and $t=x\app t_2$, so (3) holds.
\item If $t_1=\lambda x.t_1'$, we have $t\to t':= t_1'[t_2/x]$. Then  
\[ \gamma_2(t) = \gamma_2(t_1)(t_2) = t_1'[t_2/x] = t' \]
where the first equality uses the definition of $\gamma_2$ and the second one follows by induction. Thus (1) holds.
\item If $t_1=t_{1,1}\app t_{1,2}$, then by induction either (1) or (3) holds for $t_1$. If (1) holds for $t_1$, we have $\gamma_2(t_1)\in \Lambda(n)$ and $t_1\to \gamma_2(t_1)$. By definition of $\to$ and $\gamma_2$, this implies $\gamma_2(t)\in \Lambda(n)$ and
\[ t \to \gamma_2(t_1)\app t_2 = \gamma_2(t), \]
so (1) holds for $t$. If (3) holds for $t_1$, we have $\gamma_2(t_1)=\ast$ and $t_1=x\app s_1\app\cdots\app s_k$ for some $x\in V(n)$ and $s_1,\ldots, s_k\in \Lambda(n)$. Then also $\gamma_2(t)=\ast$ and $t=x\app s_1\app\cdots\app s_k \app t_2$, proving (3) for $t$.  
\end{itemize}
  \end{itemize}

\subsection*{Proof of \Cref{prop:bisim-vs-appbisim}}
By \Cref{prop:bisim,prop:gamma1,prop:gamma2}, bisimilarity is the greatest relation $\sim^\Lambda\,\seq \Lambda\times\Lambda$ such that for every $n\in \fset$ and $t_1 \sim^\Lambda_n t_2$ the following conditions hold:
\begin{enumerate}
\item $t_1[r(0),\ldots,r(n-1)] \sim^\Lambda_m t_2[r(0),\ldots,r(n-1)]$ for all $r \c n \to m$;
\item $t_1[\vec{u}] \sim^\Lambda_m t_2[\vec{u}]$ for all $m\in \fset$ and $\vec{u}\in \Lambda(m)^n$;
\item $t_1\to t_1' \implies \exists t_2'.~ t_2\to t_2' \wedge t_1' \sim^\Lambda_n t_2'$;
\item $t_1=\lambda x.t_1' \implies \exists t_2'.t_2=\lambda x.t_2' \wedge \forall e\in \Lambda(n).  t_1'[e/x] \sim^\Lambda_n t_2'[e/x]$;
\item  $\exists x\in V(n), s_1,\ldots,s_k\in \Lambda(n).\, t_1=x\app s_1\app\cdots\app s_k \Longrightarrow \exists y\in V(n), s_1',\ldots,s_m'\in \Lambda(n).\, t_2=y\app s_1'\app\cdots\app s_m'$;
\item $t_2\to t_2' \implies \exists t_1'.~ t_1\to t_1' \wedge t_1'\sim^\Lambda_n t_2'$;
\item $t_2=\lambda x.t_2' \implies \exists t_1'.t_1=\lambda x.t_1' \wedge \forall e\in \Lambda(n).  t_1'[e/x] \sim^\Lambda_n t_2'[e/x]$;
\item  $\exists y\in V(n), s_1',\ldots,s_m'\in \Lambda(n).\, t_2=y\app s_1'\app\cdots\app s_m' \Longrightarrow  \exists x\in V(n), s_1,\ldots,s_k\in \Lambda(n).\, t_1=x\app s_1\app\cdots\app s_k$. 
\end{enumerate}
Note that condition (1) is redundant, as it follows from (2) by putting $\vec{u}=\var_m\comp r$.

\medskip
\emph{Proof of $\sim^\Lambda \,\seq\, \sim^\ap$.}  Note first that $\sim^\Lambda_0\,\seq\, \Lambda(0)\times \Lambda(0)$ is a strong applicative bisimulation: the above conditions (3), (4), (6), (7) for $n=0$ correspond precisely to (A1)--(A4) with $\sim^\ap_0$ replaced by $\sim^\Lambda_0$. It follows that $\sim^\Lambda_0\,\seq\, \sim^\ap_0$ because $\sim^\ap_0$ is the greatest strong applicative bisimulation. Moreover, for $n>0$ and $t_1\sim^\Lambda_n t_2$, we have
\[ t_1[\vec{u}] \sim^\Lambda_0 t_2[\vec{u}]\quad \text{for every $\vec{u}\in \Lambda(0)^n$}\]
by condition (2), whence 
\[ t_1[\vec{u}] \sim^\ap_0 t_2[\vec{u}]\quad \text{for every $\vec{u}\in \Lambda(0)^n$}\]
because $\sim^\Lambda_0\,\seq\, \sim^\ap_0$, and so $t_1\sim^\ap_n t_2$. This proves $\sim^\Lambda_n \,\seq\, \sim^\ap_n$ for $n>0$ and thus $\sim^\Lambda\,\seq\, \sim^\ap$ overall.

\medskip\noindent
\emph{Proof of $\sim^\ap \,\seq\, \sim^\Lambda$.} Since $\sim^\Lambda$ is the greatest bisimulation, it suffices to show that $\sim^\ap$ is a bisimulation. Thus suppose that $n\in \fset$ and $t_1\sim^\ap_n t_2$; we need to verify the above conditions (2)--(8) with $\sim^\Lambda_n$ replaced by $\sim^\ap_n$. Let us first consider the case $n=0$:
\begin{enumerate}\addtocounter{enumi}{1}
\item Since $t_1$ and $t_2$ are closed terms, this condition simply states that $t_1\sim^\ap_m t_2$ for every $m>0$. This holds by definition of $\sim^\ap_m$ because $t_1[\vec{u}]=t_1 \sim^\ap_0 t_2=t_2[\vec{u}]$ for every $\vec{u}\in \Lambda(0)^m$. 
\item holds by (A1).
\item holds by (A2). 
\item holds vacuously because $t_1$ is a closed term. 
\item holds by (A3).
\item holds by (A4). 
\item holds vacuously because $t_2$ is a closed term. 
\end{enumerate} 
Now suppose that $n>0$: 
\begin{enumerate}
\item[(2)] Let $\vec{u}=(u_0,\ldots,u_{n-1})\in \Lambda(m)^n$. If $m=0$ we have $t_1[\vec{u}]\sim^\ap_0 t_2[\vec{u}]$ by definition of $\sim^\ap_n$. If $m>0$ and   $\vec{v}\in \Lambda(0)^m$ we have
\[ t_1[\vec{u}][\vec{v}] = t_1[u_0[\vec{v}],\ldots, u_{n-1}[\vec{v}]] \sim^\ap_0 t_2[u_0[\vec{v}],\ldots, u_{n-1}[\vec{v}]] = t_2[\vec{u}][\vec{v}], \]
whence $t_1[\vec{u}]\sim^\ap_m t_2[\vec{u}]$.   
\item[(3)] Suppose that $t_1\to t_1'$. It suffices to prove that $t_2$ reduces, that is, $t_2\to t_2'$ for some $t_2'\in \Lambda(n)$. Then, for every $\vec{u}\in \Lambda(0)^n$ we have  $t_1[\vec{u}]\sim^\ap_0 t_2[\vec{u}]$ by definition of $\sim^\ap_n$, and $t_1[\vec{u}]\to t_1'[\vec{u}]$ and $t_2[\vec{u}]\to t_2'[\vec{u}]$ because reductions respect substitution. Therefore $t_1'[\vec{u}] \sim^\ap_0 t_2'[\vec{u}]$ by (A1), which proves $t_1'\sim^\ap_n t_2'$ by definition of $\sim^\ap_n$.

To prove that $t_2$ reduces, suppose the contrary. There are two possible cases:

\medskip\noindent \underline{\emph{Case 1:}} $t_2$ is a $\lambda$-abstraction.\\
Since the term $t_1$ reduces, it is neither a variable nor a $\lambda$-abstraction. Therefore, for arbitrary $\vec{u}\in \Lambda(0)^n$, the term $t_1[\vec{u}]$ is not a $\lambda$-abstraction but $t_2[\vec{u}]$ is. Thus $t_1[\vec{u}]\not\sim^\ap_0 t_2[\vec{u}]$ and therefore $t_1\not\sim^\ap_n t_2$, a contradiction.

\medskip\noindent \underline{\emph{Case 2:}} $t_2=x\app s_1\app\cdots\app s_k$ for some $x\in V(n)$ and $s_1,\ldots, s_k\in \Lambda(n)$, $k\geq 0$. \\
Given $\lambda$-terms $s,t$ and a natural number $m>0$ we put
\begin{align*} s\xto{0} t\quad& :\iff \quad \text{$s=t$};\\
s\xto{m} t\quad&:\iff\quad \text{$s$ reduces to $t$ in exactly $m$ steps}.
\end{align*}
We shall prove that there exists $\vec{u}\in \Lambda(0)^n$ such that \[t_1[\vec{u}]\xto{m} \tilde{t_1}\qquad \text{and}\qquad t_2[\vec{u}]\xto{m}\tilde{t_2} \qquad \text{for some $m\geq 0$ and $\tilde{t_1},\tilde{t_2}\in \Lambda(0)$},\] where exactly one of the terms $\tilde{t_1}$ and $\tilde{t_2}$ is a $\lambda$-abstraction. Then $\tilde{t_1}\not\sim^\ap_0\tilde{t_2}$ by (A1) and (A2), whence $t_1[\vec{u}]\not\sim^\ap_0 t_2[\vec{u}]$ by $m$-fold application of (A1), and so $t_1\not\sim^\ap_n t_2$, a contradiction.

In order to construct $\vec{u}\in \Lambda(0)^n$ with the desired property, we consider several subcases:
  
\medskip\noindent \underline{\emph{Case 2.1:}} $t_1\xto{k} \ol{t_1}$ for some $\ol{t_1}$.

\medskip\noindent \underline{\emph{Case 2.1.1:}} $\ol{t_1}$ is a $\lambda$-abstraction. \\
Choose $\vec{u}$ such that $u_x\to u_x$ (e.g. $u_x=(\lambda y.y\app y)\app (\lambda y.y\app y)$). Then 
$t_2[\vec{u}] \xto{k} t_2[\vec{u}]$ and $t_2[\vec{u}]$ is not a $\lambda$-abstraction, while $t_1[\vec{u}]\xto{k} \ol{t_1}[\vec{u}]$ and $\ol{t_1}[\vec{u}]$ is a $\lambda$-abstraction.

\medskip\noindent \underline{\emph{Case 2.1.2:}} $\ol{t_1}$ is an application $\ol{t_{1,1}}\app \ol{t_{1,2}}$.\\
Choose $\vec{u}$ such that $u_x=\lambda x_1.\lambda x_2.\ldots \lambda x_k.\lambda y.y$. Then $t_2[\vec{u}]\xto{k} \lambda y.y$, while $t_1[\vec{u}]\xto{k} \ol{t_1}[\vec{u}]$ and $\ol{t_1}[\vec{u}]$ is not a $\lambda$-abstraction.

\medskip\noindent \underline{\emph{Case 2.1.3:}} $\ol{t_1}=x$. \\
Choose $\vec{u}$ such that $u_x=\lambda x_1.\lambda x_2.\ldots \lambda x_k.t$ where $t$ is is an arbitrary closed term that is not a $\lambda$-abstraction. Note that $u_x$ is a $\lambda$-abstraction: Since $t_1$ reduces, we have $t_1\neq x = \ol{t_1}$ and thus necessarily $k>0$. Thus $t_1[\vec{u}]\xto{k}\ol{t_1}[\vec{u}]=u_x$ and $u_x$ is a $\lambda$-abstraction, while $t_2[\vec{u}]\xto{k} t$ and $t$ is not a $\lambda$-abstraction.

\medskip\noindent \underline{\emph{Case 2.1.4:}} $\ol{t_1}=y$ for some variable $y\neq x$.\\
Choose $\vec{u}$ such that $u_x=\lambda x_1.\lambda x_2.\ldots \lambda x_k.\lambda y.y$ and $u_y$ is not a $\lambda$-abstraction. Then $t_2[\vec{u}]\xto{k} \lambda y.y$, while $t_1[\vec{u}]\xto{k} \ol{t_1}[\vec{u}]=u_y$ and $u_y$ is not a $\lambda$-abstraction.

\medskip\noindent \underline{\emph{Case 2.2:}} $t_1\xto{m} \ol{t_1}$ for some $m\in \{1,\ldots,k-1\}$ such that $\ol{t_1}$ does not reduce. 

\medskip\noindent \underline{\emph{Case 2.2.1:}} $\ol{t_1}$ is a $\lambda$-abstraction. \\
Choose $\vec{u}$ such that $u_x\to u_x$. Then $t_1[\vec{u}]\xto{m}\ol{t_1}[\vec{u}]$ and $\ol{t_1}[\vec{u}]$ is a $\lambda$-abstraction, while $t_2[\vec{u}]\xto{m} t_2[\vec{u}]$ and $t_2[\vec{u}]$ is not a $\lambda$-abstraction.

\medskip\noindent \underline{\emph{Case 2.2.2:}} $\ol{t_1} = y\app s_1'\ldots s_l'$ for some variable $y\neq x$ and terms $s_1',\ldots, s_l'$, $l\geq 0$.\\
Choose $\vec{u}$ such that $u_x\to u_x$ and $u_y=\lambda x_1.\lambda x_2.\ldots \lambda x_l.\lambda y.y$. Then $t_1[\vec{u}]\xto{m}\ol{t_1}[\vec{u}] \xto{l} \lambda y.y$ while ${t_2}[\vec{u}]\xto{m+l} {t_2}[\vec{u}]$ and ${t_2}[\vec{u}]$ is not a $\lambda$-abstraction.

\medskip\noindent \underline{\emph{Case 2.2.3:}} $\ol{t_1} = x\app s_1'\app\cdots\app s_l'$ for some $l> k-m$ and terms $s_1',\ldots, s_l'$.\\
Choose $\vec{u}$ such that $u_x=\lambda x_1.\lambda x_2.\cdots \lambda x_k.\lambda y.y$. Then $t_2[\vec{u}]\xto{k} \lambda y.y$, while \[t_1[\vec{u}]\xto{m} \ol{t_1}[\vec{u}] \xto{k-m} (\lambda x_{k-m+1}.\cdots \lambda x_k.\lambda y.y) \app s_{k-m+1}'[\vec{u}]\app\cdots\app 
  s_{l}'[\vec{u}]\] and $(\lambda x_{k-m+1}.\cdots \lambda x_k.\lambda y.y) \app s_{k-m+1}'[\vec{u}] \app\cdots\app 
  s_{l}'[\vec{u}]$ is not a $\lambda$-abstraction.

\medskip\noindent \underline{\emph{Case 2.2.4:}} $\ol{t_1} = x\app s_1'\app\cdots\app s_l'$ for some $l\leq k-m$ and terms $s_1',\ldots, s_l'$.\\
Choose $\vec{u}$ such that $u_x=\lambda x_1.\lambda x_2.\ldots \lambda x_k.t$ where $t$ is an arbitrary closed term that is not a $\lambda$-abstraction. Then \[t_2[\vec{u}]\xto{m+l} (\lambda x_{m+l+1}.\cdots \lambda x_k.t) \app s_{m+l+1}[\vec{u}] \app\cdots\app
  s_{k}[\vec{u}]\] and  $(\lambda x_{m+l+1}\cdots \lambda x_k.t) \app s_{m+l+1}[\vec{u}] \app\cdots\app 
  s_{k}[\vec{u}]$ is not a $\lambda$-abstraction (for $l=k-m$, this is just the term $t$), while
\[ t_1[\vec{u}] \xto{m} \ol{t_1}[\vec{u}] \xto{l} \lambda x_{l+1}.\cdots\lambda x_k.t \]
and  $\lambda x_{l+1}.\cdots\lambda x_k.t$ is a $\lambda$-abstraction since $l<k$.
\item[(6)] holds by symmetry to (3). 
\item[(4)] Suppose that $t_1=\lambda x.t_1'$. Then $t_2$ does not reduce (otherwise $t_1$ reduces by (6), a contradiction). Moreover, $t_2$ cannot be of the form $y\app s_1'\app\cdots\app s_l'$ where $y$ is variable and $s_1',\ldots,s_l'$ are terms. In fact, suppose the  contrary, and choose $\vec{u}\in \Lambda(0)^n$ such that $u_y$ is not a $\lambda$-abstraction. Then $t_1[\vec{u}] \not\sim^\ap_0 t_2[\vec{u}]$ since $t_1[\vec{u}]$ is a $\lambda$-abstraction and $t_2[\vec{u}]$ is not, contradicting $t_1\sim^\ap t_2$.

Thus $t_2=\lambda x.t_2'$ for $x=n$ and $t_2'\in \Lambda(n+1)$. Moreover, for every  $e\in \Lambda(n)$ and $\vec{u}\in \Lambda(0)^n$ we have
\[ t_1'[e/x][\vec{u}] = t_1'[\vec{u},e[\vec{u}]] = t_1'[\vec{u},x][e[\vec{u}]/x] \sim^\ap_0 t_2'[\vec{u},x][e[\vec{u}]/x] = t_2'[\vec{u},e[\vec{u}]] = t_2'[e/x][\vec{u}]   \] 
using (A2) and that $t_1[\vec{u}]\sim^\ap_0 t_2[\vec{u}]$ by definition of $\sim^\ap_n$. This proves $t_1'[e/x] \sim^\ap_n t_2'[e/x]$.
\item[(7)] holds by symmetry to (4).
\item[(5)] Suppose that $t_1=x\app s_1\app\cdots\app s_k$. Then $t_2$ does not reduce by (6) and is not a $\lambda$-abstraction by (7), so it must be of the form  $t_2=y\app s_1'\app\cdots\app s_m'$.
\item[(8)] holds by symmetry to (5).
\end{enumerate} 

\subsection*{Proof of \Cref{cor:cong}}
We only need to verify that our present setting satisfies the conditions of \Cref{th:main}:
\begin{enumerate} 
\item The category $\vcat$ is regular, being a presheaf topos.
\item The functor $\Sigma X = \Pt+\delta X + X\times X$ preserves
  reflexive coequalizers. In fact, $\delta$ is a left adjoint (with right adjoint $\llangle V+1,-\rrangle$, see~\cite{DBLP:conf/lics/FiorePT99}) and thus
  preserves all colimits. Moreover we use that reflexive
  coequalizers commute with finite products and coproducts in~$\Set$,
  hence also in $\vcat$ since limits and colimits are formed
  pointwise.
  
\item Finally, we show that the functor $B(X,Y)=\llangle X,Y\rrangle\times (Y+Y^X+1)$ preserves monos. Since monos in $\vcat$ are the componentwise injective natural transformations and thus stable under products and coproducts, it suffices to show that the functors  $(X,Y)\mapsto Y^X$ and $(X,Y) \mapsto \llangle X,Y \rrangle$ preserve monos.
The first functor preserves monos in the covariant component because $(-)^X\c \vcat\to \vcat$ is a right adjoint, and in the contravariant component because $Y^{(-)}\c (\vcat)^\opp\to \vcat$ is a right adjoint (with left adjoint $(Y^{(-)})^\opp\c \vcat\to (\vcat)^\opp$). 
The second functor preserves monos in the covariant component because $\llangle X,-\rrangle\c \vcat\to \vcat$ is a right adjoint. To see that it preserves monos in the contravariant component, suppose that $f\c X'\to X$ is an epimorphism in $\vcat$. Then $\llangle f,Y\rrangle\c \llangle X,Y\rrangle\to \llangle X',Y\rrangle$ is the natural transformation with components
\[ \llangle f,Y\rrangle_n\c \NT(X^n,Y)\to\NT((X')^n,Y), \qquad g\mapsto  g\comp f^n.\]
This map is clearly monic because $f$ is epic. Thus $\llangle f,Y\rrangle$ is monic in $\vcat$. 
\end{enumerate} 

\end{document}
